\def\longleftrightarrowfill@{\arrowfill@\leftarrow\relbar\rightarrow}
\numberwithin{table}{section}
\numberwithin{equation}{section}
\theoremstyle{plain}
\newtheorem{theorem}{Theorem}
\newtheorem{lemma}{Lemma}
\newtheorem{proposition}{Proposition}
\newtheorem{corollary}{Corollary}
\theoremstyle{definition}
\newtheorem{definition}{Definition}
\newtheorem{example}{Example}
\author[1,2]{\textbf{Lauro L. Fontanil}}
\author[2,3,4,5]{\textbf{Eduardo R. Mendoza}}
\affil[1]{\small \textit{Institute of Mathematical Sciences and Physics, University of the Philippines,  Los Ba\~{n}os, Laguna 4031, Philippines}}
\affil[2]{ \textit{Mathematics and Statistics Department, De La Salle University, Manila  0922, Philippines}}
\affil[3]{\normalsize\textit{Center for Natural Sciences and Environmental Research, De La Salle University, Manila  0922, Philippines}}
\affil[4]{\normalsize\textit{Max Planck Institute of Biochemistry, Martinsried near Munich, Germany}}
\affil[5]{\normalsize\textit{Faculty of Physics, Ludwig Maximilian University, Munich 80539, Germany}}
\affil[*]{Corresponding author: \texttt{lauro$\_$fontanil@dlsu.edu.ph}}
\title{\vspace{3.5cm}\textbf{Common Complexes of Decompositions and Complex Balanced Equilibria of \\Chemical Reaction Networks}}
\date{}
\begin{document}
\maketitle
\thispagestyle{empty}
\begin{abstract}

A decomposition of a chemical reaction network (CRN) is produced by partitioning its set of reactions. The partition induces networks, called subnetworks, that are ``smaller" than the given CRN which, at this point, can be called parent network. A complex is called a common complex if it occurs in at least two subnetworks in a decomposition. A decomposition is said to be incidence independent if the image of the incidence map of the parent network is the direct sum of the images of the subnetworks' incidence maps. It has been recently discovered that the complex balanced equilibria of the parent network and its subnetworks are fundamentally connected in an incidence independent decomposition. In this paper, we utilized the set of common complexes and a developed criterion to investigate decomposition's incidence independence properties. A framework was also developed to analyze decomposition classes with similar structure and incidence independence properties. We identified decomposition classes that can be characterized by their sets of common complexes and studied their incidence independence. Some of these decomposition classes occur in some biological and chemical models. Finally, a sufficient condition was obtained for the  complex  balancing  of some power law kinetic (PLK) systems with incidence independent and complex balanced decompositions. This condition led to a generalization of the Deficiency Zero Theorem for some PLK systems.

\end{abstract}


\section{Introduction}

Decompositions of chemical reaction networks (CRNs) were first studied by Feinberg in his 1987 review \cite{fein1987}. A decomposition is generated by a partition of the set of reactions, with the species and complexes of a subnetwork being those occurring in the corresponding reaction subset.  He also introduced the concept of an independent decomposition where the stoichiometric subspace of the parent network is the direct sum of the stoichiometric subspaces of the decomposition's subnetworks. He showed its importance by stating the following relationship among the positive equilibria of the network and those of the decomposition's subnetworks:

\begin{theorem}[Remark 5.4, \cite{fein1987}]\label{feinberg decomp} Let $(\mathscr N, K)$ be a chemical kinetic system with partition $\{\mathscr R_1, \cdots, \mathscr R_k\}$. If $\mathscr N=\mathscr N_1\cup \cdots \cup \mathscr N_k$ is the network decomposition generated by the partition and $E_+(\mathscr N_i, K_i)=\{x\in \mathbb R^{\mathscr S}_{>0}\;|\;N_iK_i(x)=0\}$, then 
\begin{enumerate} [i.]
    \item $\displaystyle \bigcap_{i=1}^k E_+(\mathscr N_i, K_i)\subseteq E_+(\mathscr N, K)$ and
    \item the equality holds if the network decomposition is independent.
\end{enumerate}
\end{theorem}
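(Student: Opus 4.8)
The plan is to reduce both inclusions to a single additivity identity for the species-formation rate function. Since the partition $\{\mathscr R_1,\dots,\mathscr R_k\}$ is a disjoint cover of the reaction set, every reaction of the parent network belongs to exactly one subnetwork, and the rate of each reaction evaluated at a state $x$ does not depend on the block it is assigned to. First I would make this precise by regarding each $N_iK_i(x)$ as the contribution to the net reaction vector coming only from the reactions in $\mathscr R_i$, so that summing over the blocks recovers the full contribution of the parent system:
\begin{equation*}
NK(x)\;=\;\sum_{i=1}^k N_iK_i(x),\qquad x\in\mathbb R^{\mathscr S}_{>0}.
\end{equation*}
Carefully tracking how the truncated matrices $N_i$ and rate vectors $K_i$ embed into the parent's reaction space to justify this identity is the one step that requires genuine bookkeeping; once it is in place, both assertions follow formally.

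With the identity in hand, part (i) is immediate. If $x\in\bigcap_{i=1}^k E_+(\mathscr N_i,K_i)$, then $N_iK_i(x)=0$ for every $i$, whence $NK(x)=\sum_{i=1}^k N_iK_i(x)=0$, so $x\in E_+(\mathscr N,K)$. This direction uses nothing beyond additivity and holds for an arbitrary partition, independent or not.

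For part (ii) I would exploit the directness of the sum. Each summand lies in the image of the corresponding truncated matrix, $N_iK_i(x)\in\operatorname{im}N_i=S_i$, the stoichiometric subspace of the $i$th subnetwork. If $x\in E_+(\mathscr N,K)$, then the additivity identity gives $\sum_{i=1}^k N_iK_i(x)=0$, which exhibits the zero vector as a sum of elements drawn one from each $S_i$. Independence of the decomposition means exactly that $S=S_1\oplus\cdots\oplus S_k$, so the representation of $0$ as such a sum is unique and therefore forces $N_iK_i(x)=0$ for every $i$. Hence $x\in\bigcap_{i=1}^k E_+(\mathscr N_i,K_i)$, and combining this with part (i) yields the claimed equality under independence.

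The only real obstacle is the additivity identity itself, which hinges on the observation that partitioning the reactions partitions the columns of $N$ and the entries of $K(x)$ compatibly; the direct-sum argument for (ii) is then just the standard uniqueness of representation in a direct sum.
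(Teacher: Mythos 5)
Your proposal is correct. Note that the paper itself does not prove this statement: it is quoted as Theorem \ref{feinberg decomp} directly from Feinberg's 1987 paper (Remark 5.4), so there is no in-paper proof to compare against. Your argument --- the additivity identity $NK(x)=\sum_{i=1}^k N_iK_i(x)$ coming from the partition of $\mathscr R$, giving part (i) immediately, and then uniqueness of the representation of $0$ in the direct sum $S=S_1\oplus\cdots\oplus S_k$ forcing each $N_iK_i(x)=0$ for part (ii) --- is the standard proof of this result and is essentially Feinberg's original reasoning; the one bookkeeping point you flag (embedding each $N_iK_i(x)$, which naturally lives in $\mathbb R^{\mathscr S_i}$, into $\mathbb R^{\mathscr S}$ by padding with zeros so that $\operatorname{im}N_i$ is identified with $S_i$ as a subspace of $\mathbb R^{\mathscr S}$) is exactly the right thing to nail down and poses no obstacle.
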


Feinberg's early result has turned out to be particularly useful in the new field of concentration robustness of kinetic systems, which he and Shinar initiated in a well-known paper in Science in 2010 \cite{shin2010}. A sufficient condition for deficiency one mass action systems they derived has, in the meantime, been extended to power law and Hill type kinetic systems $($\!\!\!\cite{fort2018}, \cite{hern2020}$)$. Independent decompositions allow the detection of concentration robustness in larger and higher deficiency kinetic system via Feinberg's theorem $($\!\!\!\cite{font2021}, \cite{fortal2021}, \cite{fort2021}, \cite{jose2021}$)$.

In view of the focus of the chemical reaction network theory (CRNT) literature on mass action systems, where the subset of complex balanced equilibria is either empty or coincides with the whole set of positive equilibria, the corresponding concept and result were formulated only recently by Farinas et al. \cite{fari2021}: a decomposition is incidence independent if the image of the incidence map of the parent network is the direct sum of the images of the incidence maps of the subnetworks. The relationship among the complex balanced equilibria sets is fully analogous to Feinberg's result:

\begin{theorem}[Theorem 4, \cite{fari2021}]\label{farinas decomp}
Let $\mathscr N=\mathscr N_1\cup \cdots \cup \mathscr N_k$ be a decomposition. Let $K$ be any kinetics, and $Z_+(\mathscr N, K)$ and $Z_+(\mathscr N_i, K_i)$ be the set of complex balanced equilibria of $\mathscr N$ and $\mathscr N_i$, respectively. Then, 
\begin{equation*}
     \displaystyle \bigcap_{i=1}^k Z_+(\mathscr N_i, K_i)\subseteq Z_+(\mathscr N, K).
\end{equation*}
If the decomposition is incidence independent, then
\begin{enumerate} [i.]
    \item $\displaystyle \bigcap_{i=1}^k Z_+(\mathscr N_i, K_i)=Z_+(\mathscr N, K)$ and
    \item $Z_+(\mathscr N, K)\neq \varnothing$ implies that $Z_+(\mathscr N_i, K_i)\neq \varnothing$ for $i=1, \cdots, k$. 
\end{enumerate}
\end{theorem}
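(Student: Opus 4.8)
The plan is to exploit the factorization of the complex-formation dynamics and reduce the claim to a statement about images of incidence maps sitting inside the complex space $\mathbb{R}^{\mathscr C}$. Write $I_a\colon \mathbb{R}^{\mathscr R}\to \mathbb{R}^{\mathscr C}$ for the incidence map of the parent network $\mathscr N$ and $I_{a,i}\colon \mathbb{R}^{\mathscr R_i}\to\mathbb{R}^{\mathscr C_i}$ for that of $\mathscr N_i$, so that $x\in\mathbb{R}^{\mathscr S}_{>0}$ is complex balanced for $\mathscr N$ exactly when $I_aK(x)=0$, and complex balanced for $\mathscr N_i$ exactly when $I_{a,i}K_i(x)=0$. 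Because the partition $\{\mathscr R_1,\dots,\mathscr R_k\}$ yields $\mathbb{R}^{\mathscr R}=\bigoplus_{i=1}^k\mathbb{R}^{\mathscr R_i}$ and $K_i$ is the restriction of $K$ to the reactions in $\mathscr R_i$, the kinetics vector splits as $K(x)=\bigoplus_{i=1}^kK_i(x)$. Letting $\iota_i\colon\mathbb{R}^{\mathscr C_i}\hookrightarrow\mathbb{R}^{\mathscr C}$ denote the injective inclusion induced by $\mathscr C_i\subseteq\mathscr C$, the first step is to record the identity
\[
 I_aK(x)=\sum_{i=1}^k\iota_i\,I_{a,i}K_i(x),
\]
valid because a reaction in $\mathscr R_i$ has the same reactant and product complexes whether read in $\mathscr N$ or in $\mathscr N_i$.

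From this identity the inclusion $\bigcap_iZ_+(\mathscr N_i,K_i)\subseteq Z_+(\mathscr N,K)$ is immediate and needs no hypothesis: if $I_{a,i}K_i(x)=0$ for every $i$, then the right-hand side vanishes, so $I_aK(x)=0$ and $x\in Z_+(\mathscr N,K)$.

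For part (i) I would establish the reverse inclusion from incidence independence. Each summand satisfies $\iota_iI_{a,i}K_i(x)\in\iota_i\,\mathrm{Im}(I_{a,i})$, so the identity exhibits $I_aK(x)\in\mathrm{Im}(I_a)$ as a sum of vectors drawn from the subspaces $\iota_i\,\mathrm{Im}(I_{a,i})$. Incidence independence is exactly the assertion that $\mathrm{Im}(I_a)=\bigoplus_{i=1}^k\iota_i\,\mathrm{Im}(I_{a,i})$, i.e. that every element of $\mathrm{Im}(I_a)$ has a unique representation as such a sum. Hence, for $x\in Z_+(\mathscr N,K)$, the relation $0=I_aK(x)=\sum_i\iota_iI_{a,i}K_i(x)$ is a representation of $0$, and uniqueness forces each summand $\iota_iI_{a,i}K_i(x)$ to be $0$; injectivity of $\iota_i$ then gives $I_{a,i}K_i(x)=0$, so $x\in Z_+(\mathscr N_i,K_i)$ for all $i$. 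Together with the first inclusion this proves the equality. Part (ii) follows formally: if $Z_+(\mathscr N,K)\neq\varnothing$, choose $x$ in it; by (i) the same $x$ lies in each $Z_+(\mathscr N_i,K_i)$, so every subnetwork equilibrium set is nonempty.

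The step I expect to demand the most care is the bookkeeping around the inclusions $\iota_i$ in the presence of common complexes — precisely the situation this paper is built around. The coordinate subspaces $\iota_i(\mathbb{R}^{\mathscr C_i})$ genuinely overlap along shared complexes, so it is crucial that incidence independence constrains the images $\mathrm{Im}(I_{a,i})$ rather than these full subspaces; one must verify that the vectors $\iota_iI_{a,i}K_i(x)$ truly land in the individual direct summands and that injectivity of each $\iota_i$ legitimately transports the vanishing back down to the subnetwork level. Everything else — the domain $\mathbb{R}^{\mathscr S}_{>0}$ being common to $\mathscr N$ and all $\mathscr N_i$, and the direct-sum decomposition of the reaction space — is routine once the central identity above is in place.
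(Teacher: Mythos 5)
Your proof is correct: the identity $I_aK(x)=\sum_i\iota_iI_{a,i}K_i(x)$, the observation that incidence independence makes the sum $\mathrm{Im}(I_a)=\sum_i\iota_i\,\mathrm{Im}(I_{a,i})$ direct, and the uniqueness of the representation of $0$ forcing each summand $\iota_iI_{a,i}K_i(x)$ to vanish is exactly the standard argument, and it is the one used in the cited source \cite{fari2021} (this paper itself only quotes the theorem without proof, as it parallels Feinberg's proof of Theorem \ref{feinberg decomp} with $I_a$ in place of $N$). Your attention to the role of the inclusions $\iota_i$ and to the fact that all equilibrium sets live in the common ambient space $\mathbb{R}^{\mathscr S}_{>0}$ is precisely the right bookkeeping, so nothing further is needed.
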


The best-known example of an incidence independent decomposition is the set of linkage classes of a network. It also has the special property that it partitions not only the set of reactions but also the set of complexes of a network. Farinas et al. introduced the term $\mathscr C$-decomposition for a decomposition with the latter property and show that any $\mathscr C$-decomposition of a network is generated by a coarsening of the partition into reactions of linkage classes. Inspired by the study of a model of the gene regulatory system of {\em Mycobacterium tuberculosis} \cite{fari2020}, Farinas et al. also introduced the concept of a $\mathscr C^*$-decomposition, where the subnetworks pairwise have at most the zero complex. They also derived that any $\mathscr C^*$-decomposition is incidence independent.

These results motivated the systematic study of the set of common complexes of a decomposition, denoted by $\mathscr C_{\mathscr D}$ (the subscript ${\mathscr D}$ is the symbol assigned to the decomposition), and its relationships to incidence independence and complex balanced equilibria, which is the main topic of this paper. A common complex is a complex that occurs in at least two subnetworks in the decomposition. Clearly, $\mathscr C$-decompositions are those with $\mathscr C_{\mathscr D}=\varnothing$ (or equivalently $|\mathscr C_{\mathscr D}|= 0$), and $\mathscr C^*$-decompositions belong to those with $|\mathscr C_{\mathscr D}|\leq 1$. For decompositions with $|\mathscr C_{\mathscr D}| \geq 2$, incidence independence may not hold, and a criterion based on counting common complexes is established. Furthermore, a framework for describing decomposition classes based on the structure of $\mathscr C_{\mathscr D}$ is developed for formulating conditions for this property to hold for decomposition classes. The framework is illustrated in detail by the class of pairwise min-max (PMM) decompositions, which form a superset of the $\mathscr C^*$-decompositions.

Although the theorems of Feinberg and Farinas et al. determine the structure of the sets of positive equilibria and complex balanced equilibria through independent and incidence independent decompositions of kinetic systems respectively, neither result guarantees that the intersections are non-empty.  Specifically for an incidence independent decomposition, given that complex balancing of the subnetworks is a necessary condition for that of the whole system, the question is: which additional properties are needed for it to be sufficient too?

In \cite{fari2021}, Farinas et al. also established the surprisingly general result that the conclusion holds for any $\mathscr C$-decomposition, i.e., the additional purely structural condition of $\mathscr C_{\mathscr D}=\varnothing$ and any kinetic system.  We show with a simple example that this is no longer true for $\mathscr C_{\mathscr D}\neq \varnothing$. This result is singular in its purely structural character: as soon as $\mathscr C_{\mathscr D}\neq \varnothing$, there are kinetic systems with an incidence independent and complex balanced decomposition (a decomposition where each subnetwork is complex balanced) that are not complex balanced. In fact, mass action systems on the same network and the same incidence independent and complex balanced decomposition, may or may not be complex balanced. Hence, to obtain results when $\mathscr C_{\mathscr D}\neq \varnothing$ for an  incidence independent and complex balanced decomposition, one needs to restrict the set of kinetics systems for which the conclusion of complex balancing would hold. Ideally, if the latter is true for a kinetic system, it should also hold for all kinetic systems in its semi-module (i.e., for all kinetic systems with the same interaction function, differing only in rate constants). 

Accordingly, our result, while imposing no conditions on the set of common complexes of the incidence independent decomposition, is valid only for a class of power law kinetic (PLK) systems. The class is defined by two properties:
\begin{itemize}
    \item the subnetworks of the decomposition have power law reactant-determined kinetics (PL-RDK); and
    \item the induced decomposition of kinetic complexes is independent. 
\end{itemize}
\noindent The additional conditions involve properties of the kinetics on the decomposition subnetworks and associated structures.  

This result further underscores the importance of incidence independence for the complex balancing of the whole network. The result is analogous to a generalization of the Deficiency Zero Theorem (DZT) of Fortun et al. \cite{fort2019} for the set of all positive equilibria presented by Hernandez et al. \cite{hern2019}.

The main results of the paper include:

\begin{enumerate}[i.]
    \item a common complexes approach to study decomposition's properties, in particular incidence independence through a developed criterion;
    \item a framework for analyzing decomposition classes with similar structure and incidence independence properties illustrated in detail by the class of PMM decompositions, which contain the linkage class decomposition and occur in biological and chemical models; and
    \item a new condition ensuring the complex balancing of a weakly reversible power law kinetic system with an incidence independent decomposition into complex balanced subnetworks.  
\end{enumerate}

The paper is organized as follows. Fundamental concepts on chemical reaction networks, chemical kinetic systems, and decomposition theory are given in section 2. The framework to classify network's decompositions according to equivalent incidence independence properties is presented in section 3. In sections 4 and 5, the class of pairwise binary-sized (PBS) decompositions and its subclass of PMM decompositions are discussed. The condition ensuring complex balancing of a PLK system with an incidence independent complex balanced decomposition is derived in section 6. Section 7 provides a summary and an outlook.

\section{Fundamentals of chemical reaction networks and kinetic systems}

In this section, some necessary concepts and results on chemical reaction networks and chemical kinetic systems are reviewed, the details of which can be found in \cite{arce}, \cite{fein2019}, and \cite{toth}.

\subsection{Structure of chemical reaction networks}

 Consider a non-empty finite set $\mathscr S$ of \textbf{species}. A linear combination of these species with nonnegative integer coefficients (or \textbf{stoichiometric coefficients}) is called a \textbf{complex}. A complex whose stoichiometric coefficients are all zero is called the  \textbf{zero complex} and is denoted by $0$. The set of all complexes is denoted by $\mathscr C$. Here, $m$ and $n$ are used to denote the number of species and complexes, respectively. 

The set of \textbf{reactions}, denoted by $\mathscr R$, is a non-empty subset of $\mathscr C \times \mathscr C$ satisfying the conditions:
\begin{enumerate}[i.]
	\item $(y,y)\notin \mathscr R$ for all $y\in \mathscr C$; and
	\item for each $y\in \mathscr C$, there is a $y'\in \mathscr C$ such that $(y,y')\in \mathscr R$ or $(y',y)\in \mathscr R$.
\end{enumerate}
In a reaction $(y,y')$, which can also be written as $y\rightarrow y'$, $y$ and $y'$ are called \textbf{reactant} and \textbf{product complexes}, respectively. The order of $\mathscr R$ is denoted here by $p$.

\begin{definition} A \textbf{chemical reaction network} (CRN) $\mathscr N$ is a triple $(\mathscr S, \mathscr C, \mathscr R)$ of non-empty finite sets where $\mathscr S$, $\mathscr C$, and $\mathscr R$ are the sets of species, complexes, and reactions, respectively.
\end{definition}
\noindent The terms ``chemical reaction network'', ``reaction network'', and ``network'' are used here interchangeably. 

The \textbf{reactant map} $\rho: \mathscr R\rightarrow \mathscr C$ and the \textbf{product map} $\pi: \mathscr R\rightarrow \mathscr C$ map a reaction to its reactant and product complexes, respectively. The number of the reactant complexes is denoted by $n_{\rho}$, i.e., $|\rho(\mathscr R)|=n_{\rho}$. A CRN is called \textbf{cycle terminal} if $n_{\rho}=n$ while it is called \textbf{nonbranching} if $n_{\rho}=p$.

A CRN can be viewed as a digraph with the sets of complexes and reactions serve as the digraph's vertex and arc sets, respectively. The digraph representation of a reaction network, called \textbf{reaction graph}, is loopless and does not have an isolated vertex. The \textbf{underlying graph} of a reaction network is the graph obtained after removing the direction in each arc of its reaction graph. \textbf{Paths} and \textbf{cycles} in a CRN are the corresponding paths and cycles in the underlying graph. \textbf{Directed paths} and \textbf{directed cycles} in a reaction network are the uni-directed paths and cycles in the reaction graph.

\begin{example}\label{running}

Consider a toy CRN with the following digraph representation.
\begin{equation}
\begin{tikzcd}
A+B \arrow[r, "R_1"'] & A \arrow[r, "R_2"'] & C \arrow[r, "R_4", shift left] \arrow[ll, "R_3", bend right=49] & 2D \arrow[l, "R_5", shift left]
\end{tikzcd}
\label{graph100}
\end{equation}
\noindent This network has $m=4, n=4$, and $p=5$ such that
\begin{equation}
\mathscr{S} = \{ A,B,C,D\}, \mathscr{C}= \{A+B,A,C,2D \}, {\textnormal{\;and\;}} \mathscr{R}= \{R_1, R_2, R_3, R_4, R_5\}. 
\end{equation}

\noindent The network is also cycle terminal since $n=4=n_{\rho}$.

\end{example}

Two complexes are \textbf{connected} if there is a path joining the corresponding vertices in the underlying graph. A reaction network is said to be \textbf{weakly connected} if the underlying graph is connected, i.e., every pair of complexes are connected. Two complexes $y$ and $y'$ are said to be \textbf{strongly connected} if there is a directed path from $y$ to $y'$ and vice versa. The \textbf{linkage classes}, whose number is denoted by $\ell$, of a network corresponds to the maximal weakly connected subgraphs of its reaction graph while the \textbf{strong linkage classes} are the subgraphs of the reaction graph that are maximal strongly connected. If each linkage class is also a strong linkage class, a CRN is called \textbf{weakly reversible}. The toy CRN in Example \ref{running} is weakly reversible since its sole linkage class is a strong linkage class. 

CRNs are often studied with the aid of finite dimensional spaces $\mathbb R^{\mathscr S}$, $\mathbb R^{\mathscr C}$, and $\mathbb R^{\mathscr R}$, respectively referred to as \textbf{species space, complex space}, and \textbf{reaction space}. For every reaction $y\rightarrow y'$, a vector, called \textbf{reaction vector}, resulted from subtracting the reactant complex $y$ from the product complex $y'$ is associated. The linear subspace of $\mathbb R^{\mathscr S}$ defined by $S:=\text{span}\{y'-y\in \mathbb R^{\mathscr S}|y\rightarrow y'\in \mathscr R\}$ is called the \textbf{stoichiometric subspace} $S$ of a reaction network. Its dimension $s$ also refers to the \textbf{rank }of the network. A CRN can be characterized by a nonnegative integer $\delta$ called \textbf{deficiency} which is given by $\delta=n-\ell-s$. The toy CRN in Example \ref{running} has $S = \text{span}\{r_1, r_2, r_4\}$ where
\begin{equation}
    r_1 = \begin{bmatrix} 0\\-1\\0\\0\\ \end{bmatrix}, r_2 = \begin{bmatrix} -1\\0\\1\\0\\ \end{bmatrix}, \text{\;and\;} r_4 = \begin{bmatrix} 0\\0\\-1\\2\\ \end{bmatrix}.
\end{equation}
\noindent Hence, its rank $s=3$ while its deficiency is $\delta=4-1-3=0$, respectively.


\begin{definition} Let $\mathscr N = (\mathscr S, \mathscr C, \mathscr R)$ be a CRN. The \textbf{incidence map} $I_a: \mathbb R^{\mathscr R}\rightarrow \mathbb R^{\mathscr C}$ is the linear map defined by mapping for each reaction $R_i: y_i\rightarrow y'_i\in \mathscr R$, the basis vector $\omega_i$ to the vector $\omega_{y'_i}-\omega_{y_i}\in \mathscr C$.
\end{definition}

\noindent As a linear map, the incidence map has an $n\times p$ matrix representation, called \textbf{incidence matrix}, whose entries are described by the following:

\begin{center}
    $(I_a)_{(i,j)}= \left\{ \begin{array}{rl}
    -1 & \mbox{if $i$ is the reactant complex of reaction $j \in \mathscr R$} \\ 1 & \mbox{if $i$ is the product complex of reaction $j \in \mathscr R$} \\ 0 & \mbox{otherwise} \\
\end{array}\right.$
\end{center}

\noindent The toy CRN in Example \ref{running} has the following incidence matrix.

\begin{equation}
I_a = 
    \begin{blockarray}{*{5}{c} c}
        \begin{block}{*{5}{>{$\footnotesize}c<{$}} c}
            $R_1$ & $R_2$ & $R_3$ & $R_4$ & $R_5$ &  \\
        \end{block}
        \begin{block}{[*{5}{c}]>{$\footnotesize}c<{$}}
            -1& 0 & 1 & 0 & 0 &    $A+B$\\
            1 & -1& 0 & 0 & 0 &    $A$\\
            0 & 1 & -1& -1& 1 &    $C$\\
            0 & 0 & 0 & 1 & -1 &    $2D$\\
        \end{block}
    \end{blockarray}
\label{eqn9}
\end{equation}

\noindent Boros \cite{boros2013} described the rank of the incidence map through the following proposition.

\begin{proposition}[Proposition 4.2.3, \cite{boros2013}]\label{boros} $\dim \textnormal{Im} I_a=n-\ell$.
\end{proposition}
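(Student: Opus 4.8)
The plan is to compute the rank of $I_a$ indirectly, via the orthogonal complement of its image inside $\mathbb R^{\mathscr C}$. Since $\text{Im}\, I_a$ is a subspace of the $n$-dimensional space $\mathbb R^{\mathscr C}$, the identity
\begin{equation*}
\dim \text{Im}\, I_a = n - \dim (\text{Im}\, I_a)^{\perp}
\end{equation*}
holds, so it suffices to show that $(\text{Im}\, I_a)^{\perp}$ has dimension $\ell$. The key observation is that the orthogonal complement of the column space equals the kernel of the transpose, i.e. the set of $v \in \mathbb R^{\mathscr C}$ with $\langle v, \omega_{y'} - \omega_y\rangle = 0$ for every reaction $y \to y' \in \mathscr R$, where $\omega_{y'} - \omega_y$ is precisely the column of the incidence matrix associated to that reaction.

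First I would translate this orthogonality condition into a combinatorial one. Writing $v = (v_y)_{y \in \mathscr C}$, the condition $\langle v, \omega_{y'} - \omega_y\rangle = 0$ reads exactly $v_{y'} = v_y$. Thus $v \in (\text{Im}\, I_a)^{\perp}$ if and only if $v$ takes equal values at the two endpoints of every reaction. Note that this constraint is symmetric in $y$ and $y'$, so it depends only on the underlying undirected graph, not on the orientation of the reactions.

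Next I would use weak connectivity to pin down this space. If two complexes lie in the same linkage class, then by definition there is a path joining them in the underlying graph, and applying $v_{y'} = v_y$ edge by edge along that path forces $v$ to take a single common value on the entire linkage class. Conversely, any $v$ that is constant on each linkage class automatically satisfies $v_{y'} = v_y$ for every reaction, because the two endpoints of a reaction always lie in the same linkage class. Hence $(\text{Im}\, I_a)^{\perp}$ is exactly the space of vectors that are constant on each linkage class; the $\ell$ indicator vectors of the linkage classes (one per class) have pairwise disjoint supports, are therefore linearly independent, and span this space, so $\dim (\text{Im}\, I_a)^{\perp} = \ell$. Substituting into the displayed identity yields $\dim \text{Im}\, I_a = n - \ell$.

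The argument is essentially routine, and I do not expect any analytic difficulty; the only step requiring care is the connectivity argument, namely that the local constraints $v_{y'} = v_y$ (one per reaction) propagate along paths to force global constancy on each linkage class, while imposing no relations across distinct linkage classes. An equivalent and perhaps more structural route, closer to the decomposition theme of the paper, would be to observe that $\text{Im}\, I_a$ decomposes as a direct sum over linkage classes in $\mathbb R^{\mathscr C}$, and that within a single weakly connected linkage class on $n_i$ complexes the image is the hyperplane of coordinate vectors whose entries sum to zero, of dimension $n_i - 1$; summing $\sum_{i=1}^{\ell} (n_i - 1) = n - \ell$ then gives the result directly. In either approach the main obstacle is simply the dimension count of the relevant subspace, which reduces to connectivity of the linkage classes.
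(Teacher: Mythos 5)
Your proof is correct. Note that the paper does not actually prove this statement itself --- it is quoted from Boros (Proposition 4.2.3 of \cite{boros2013}) --- and your argument is precisely the standard one behind that cited result: identifying $(\textnormal{Im}\, I_a)^{\perp}$ with $\ker I_a^{T}$, translating $I_a^{T}v=0$ into the condition that $v$ is constant along every reaction edge, and using weak connectivity to conclude that this space is spanned by the $\ell$ indicator vectors of the linkage classes. Both this route and your alternative (direct sum over linkage classes, with the image over a connected component being the sum-zero hyperplane of dimension $n_i-1$) are sound; the only point worth making explicit is that the CRN axioms guarantee no isolated complexes, so the linkage classes genuinely partition $\mathscr C$ and the count $\sum_i n_i = n$ is valid.
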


\subsection{Dynamics of chemical reaction networks}

A \textbf{kinetics} is an assignment of a rate function to each reaction in a CRN. A network $\mathscr N$ together with a kinetics $K$ is called a \textbf{chemical kinetic system} (CKS) and is denoted here by $(\mathscr N,K)$. \textbf{Power law kinetics} (PLK) is identified by the \textbf{kinetic order matrix}, which is a $p\times m$ matrix $F=[F_{ij}]$,
and vector $k\in \mathbb R^{\mathscr R}_{>0}$ called the \textbf{rate vector}.

\begin{definition} A kinetics $K: \mathbb R^{\mathscr S}_{>0} \rightarrow \mathbb R^{\mathscr R}$ is a \textbf{power law kinetics} if
\begin{equation}
   K_i(x)=k_ix^{F_{i,\cdot}} \text{\;for\;} i=1,\dots, r 
\end{equation}

\noindent where $k_i\in \mathbb R_{>0}$, $F_{i,j} \in \mathbb R$, and $F_{i,\cdot}$ is the row of $F$ associated to reaction $R_i$.
\end{definition}

A PLK system can be classified based on the kinetic orders assigned to its \textbf{branching reactions} (i.e., reactions sharing a common reactant complex). 

\begin{definition}
A PLK system has a \textbf{reactant-determined kinetics} (type PL-RDK) if for any two branching reactions $R_i, R_j\in \mathscr R$, the corresponding rows of kinetic orders in $F$ are identical, i.e., $F_{ih}=F_{jh}$ for $h=1, \dots,m$. Otherwise, a PLK system has a \textbf{non-reactant-determined kinetics} (type PL-NDK).
\end{definition}

The well-known \textbf{mass action kinetic} (MAK) system  forms a subset of PL-RDK systems. In particular, MAK is given by $K_i(x)=k_ix^{Y_{.,j}}$ for all reactions $R_i: y_i \rightarrow y'_i \in \mathscr R$ with $k_i\in \mathbb R_{>0}$ (called \textbf{rate constant}). The vector $Y_{.,j}$ contains the stoichiometric coefficients of a reactant complex $y_i\in \mathscr C$.

\begin{definition} The \textbf{species formation rate function} of a chemical kinetic system is the vector field
\begin{equation}
    f(c)=NK(c)=\displaystyle \sum_{y_i\rightarrow y'_i\in \mathscr R}K_i(c)(y'_i-y_i)
\end{equation}
where $c\in \mathbb R^{\mathscr S}_{\geq 0}$ and $N$ is the $m\times p$ matrix, called \textbf{stoichiometric matrix},  whose columns are the reaction vectors of the system. 
\end{definition}

\noindent The equation $dc/dt=f(c(t))$ is the \textbf{ODE or dynamical system} of the chemical kinetic system. An element $c^*$ of $\mathbb R^{\mathscr S}_{>0}$ such that $f(c^*)=0$ is called a \textbf{positive equilibrium} or \textbf{steady state} of the system. The set of all positive equilibria of a CKS is denoted by $E_+(\mathscr N,K)$.

Analogous to the species formation rate function, we also have the complex formation rate function.

\begin{definition} The \textbf{complex formation rate function} $g: \mathbb R^{\mathscr S}_{>0}\rightarrow \mathbb R^{\mathscr C}$ of a chemical kinetic system is the vector field
\begin{equation}
g(c)=I_aK(c)=\displaystyle \sum_{y_i\rightarrow y'_i\in \mathscr R}K_i(c)(\omega_{y'_i}-\omega_{y_i}),    
\end{equation}
\noindent where $c\in \mathbb R^{\mathscr S}_{\geq 0}$ and $I_a$ is the incidence map. 
\end{definition}
\noindent A CKS is \textbf{complex balanced} if it has complex balanced steady state, i.e., there is a composition $c^{**}\in \mathbb R_{>0}^{\mathscr S}$ such that $g(c^{**} )=0$. The set of all complex balanced steady states of the system is denoted by $Z_+(\mathscr N,K)$. 



\subsection{Decomposition theory}

Some basic concepts and results on decomposition theory are discussed in this subsection. More detailed discussions than the one contained here can be found in \cite{fari2021} and \cite{fort2021}.

We state the concept of restriction on the objects of a reaction network as given in \cite{josh2013} and \cite{josh2012}. Let $\mathscr N=(\mathscr S, \mathscr C, \mathscr R)$ be a given network and consider a subset $\mathscr C'\subseteq \mathscr C$ and a subset $\mathscr R'\subseteq \mathscr R$. The \textbf{restriction of} $\mathscr C$ \textbf{to} $\mathscr R'$, denoted by $\mathscr C|_{\mathscr R'}$, is the set of (reactant and product) complexes of the reactions in $\mathscr R'$. Likewise, the \textbf{restriction of} $\mathscr S$ \textbf{to} $\mathscr C'$, denoted by $\mathscr S|_{\mathscr C'}$, is the set of species that occur in the complexes in $\mathscr C'$.

Let $\mathscr N=(\mathscr S, \mathscr C, \mathscr R)$ be a given network. $\mathscr N'=(\mathscr S', \mathscr C', \mathscr R')$ is said to be a \textbf{subnetwork} of $\mathscr N$ if $\mathscr R'\subseteq \mathscr R$. In this notion, $\mathscr C'=\mathscr C|_{\mathscr R'}$ and $\mathscr S'=\mathscr S|_{\mathscr C'}$. That is, $\mathscr N'$ has complex set consists of the complexes of the reactions in $\mathscr R'$ and has species set consists of the species that occur in the complexes of $\mathscr C|_{\mathscr R'}$.

\begin{definition} Let $\mathscr N = (\mathscr S, \mathscr C, \mathscr R)$ be a CRN. A \textbf{covering} of $\mathscr N$ is a collection of subsets \{$\mathscr R_1, \dots, \mathscr R_k$\} whose union is $\mathscr R$. A covering is called a \textbf{decomposition} of $\mathscr N$ if $\mathscr R_i$'s form a partition of $\mathscr R$, i.e., $\mathscr R_i$'s are pairwise disjoint.
\end{definition}

As previously indicated, $\mathscr R_i$ defines a subnetwork $\mathscr N_i$ of $\mathscr N$ such that $\mathscr N_i=(\mathscr S_i, \mathscr C_i, \mathscr R_i)$ where $\mathscr C_i=\mathscr C|_{\mathscr R_i}$ and $\mathscr S_i=\mathscr S|_{\mathscr C_i}$. Hence, 
\begin{equation} \label{eq4}
	\mathscr R=\bigcup \mathscr R_i \Rightarrow \mathscr C=\bigcup \mathscr C_i \Rightarrow \mathscr S=\bigcup \mathscr S_i.   
\end{equation}

Gross et al. \cite{gros2020} defined the {union of networks} as follows. 

\begin{definition}\label{union}
	The \textbf{union of networks} $\mathscr N_i=(\mathscr S_i, \mathscr C_i, \mathscr R_i)$, $i=1,\cdots,k$, is given by
	
\begin{equation}
    \mathscr N_1\cup\cdots\cup \mathscr N_k:=\Big(\displaystyle \bigcup_{i=1}^k \mathscr S_i,\displaystyle \bigcup_{i=1}^k \mathscr C_i, \displaystyle \bigcup_{i=1}^k \mathscr R_i\Big)
\end{equation}
\end{definition}

\noindent Because of the clear connection of Definition \ref{union} and (\ref{eq4}), we adopt the notation for union of networks to indicate covering or decomposition of a network and write $\mathscr N=\mathscr N_1\cup \cdots \cup\mathscr N_k$ to mean that $\{\mathscr R_1, \dots, \mathscr R_k\}$ is a covering or a decomposition of $\mathscr N$. 

Given a decomposition $\mathscr N=\mathscr N_1\cup \cdots \cup\mathscr N_k$, we may refer to $\mathscr N$ as the \textbf{parent network} of its subnetworks $\mathscr N_i$'s. The number of subnetworks in a decomposition (usually denoted by $k$) is called the \textbf{length of the decomposition}.  If the length of a decomposition is 0, it is called an \textbf{empty decomposition}. A decomposition is called \textbf{trivial} if its length is $1$, i.e., it consists of only one subnetwork. Here, a decomposition is assumed to be non-empty and non-trivial unless otherwise stated. We extend the CRN notations used in this paper to fit the concept of subnetwork. The complete list of notations adapted for the subnetworks used in this paper is given in the appendix.

\textbf{Independent subnetworks}, as given in \cite{fein1987}, are obtained from two complementary subsets of the reaction set which satisfy the condition that the stoichiometric subspace of the parent network equals the direct sum of the corresponding stoichiometric subspaces of the subnetworks. The notion of independent subnetworks was the early form of independent decomposition defined as follows.

\begin{definition}
	A decomposition is \textbf{independent} if $S$ is the direct sum of the subnetworks' stoichiometric subspaces $S_i$ or equivalently, $s=s_1+\cdots+s_k$.
\end{definition}

\begin{example}\label{ex14} Let the toy CRN in Example \ref{running} be denoted by $\mathscr N$. Consider the subnetworks $\mathscr N_1$ and $\mathscr N_2$ of $\mathscr N$ whose reaction graphs are given below. 
\vspace{-0.1cm}	
	\begin{equation}\label{eq55}
\begin{tikzcd}
\mathscr N_1: & 2+B \arrow[r] & A \arrow[r] & A+C \arrow[ll, bend right=49] & \mathscr N_2: & A+C \arrow[r, shift left] & D \arrow[l, 
shift left]
\end{tikzcd}
	\end{equation}
\noindent It is clear that $\mathscr N=\mathscr N_1\cup \mathscr N_2$ is a decomposition. Moreover, $\mathscr N_1$ and $\mathscr N_2$ have ranks $2$ and $1$, respectively. Since the parent network is of rank $3$, the decomposition is independent.
\end{example}

Farinas et al. \cite{fari2021} introduced the concept of {incidence independent decomposition} that is patterned after the independent decomposition but utilizes the images of the incidence maps instead of the stoichiometric subspaces.

\begin{definition} A decomposition $\mathscr N=\mathscr N_1\cup \cdots \cup\mathscr N_k$  
	is \textbf{incidence independent} if and only if the image of the incidence map of $\mathscr N$ is the direct sum of the images of the incidence maps of $\mathscr N_i$'s.
\end{definition}

\noindent Due to Proposition \ref{boros}, a decomposition is incidence independent if and only if $n-\ell = \sum (n_i-\ell_i)$ where $n_i$ and $\ell_i$ are the number of complexes and linkage classes of $\mathscr N_i$. 

\begin{example} \label{ex15}
	The decomposition given in Example \ref{ex14} is also incidence independent since $n-\ell=4-1=3$ while $n_1-\ell_1=3-1=2$ and $n_2-\ell_2=2-1=1$.
\end{example}


$\mathscr C$- and $\mathscr C^*$-decompositions are decomposition classes defined in \cite{fari2021} which are found to be useful in several recent studies (\!\!\cite{fort2021},\cite{fari2020}). 

\begin{definition}
	A decomposition $\mathscr N=\mathscr N_1\cup \cdots \cup\mathscr N_k$ with $\mathscr N_i=(\mathscr S_i,\mathscr C_i,\mathscr R_i)$ is a \textbf{$\mathscr C$}-\textbf{decomposition} if $\mathscr C_i\cap \mathscr C_j=\varnothing$ for $i\neq j$.
\end{definition}

$\mathscr C$-decomposition is a decomposition where the set of complexes of the parent network is partitioned to become the complex sets of the subnetworks. The following structure theorem identifies $\mathscr C$-decomposition as a coarsening of the linkage class decomposition.

\begin{theorem}[Structure Theorem for $\mathscr C$-decomposition, \cite{fari2021}] \label{str_thm_for_C} $\mathscr N=\mathscr N_1\cup \cdots \cup \mathscr N_k$ is a $\mathscr C$-decomposition if and only if each $\mathscr N_i$ is the union of linkage classes and each linkage class is contained in only one $\mathscr N_i$.
\end{theorem}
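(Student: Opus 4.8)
The plan is to prove the two implications separately, in each case using the basic fact (immediate from the CRN axioms, since every complex participates in at least one reaction) that the linkage classes of $\mathscr N$ partition its complex set $\mathscr C$; write $\mathscr C(L)$ for the complex set of a linkage class $L$, so that $\mathscr C(L) \cap \mathscr C(L') = \varnothing$ whenever $L \neq L'$. The central observation in both directions is that a single reaction $y \to y'$ forces its two complexes into the same subnetwork's complex set: if $y \to y' \in \mathscr R_i$, then both $y, y' \in \mathscr C_i$ by the definition of $\mathscr C_i = \mathscr C|_{\mathscr R_i}$.

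For the forward direction, I would assume the decomposition is a $\mathscr C$-decomposition, so $\mathscr C_i \cap \mathscr C_j = \varnothing$ for $i \neq j$, and show that every linkage class sits entirely inside one subnetwork. Fix a linkage class $L$ and any reaction $R$ of $\mathscr N$ lying in $L$; say $R \in \mathscr R_i$. For any other reaction $R'$ in $L$, connect an endpoint of $R$ to an endpoint of $R'$ by a path in the underlying graph of $L$. Walking along this path edge by edge, each edge is a reaction belonging to some $\mathscr R_j$, and consecutive edges share a complex; since that shared complex would otherwise lie in $\mathscr C_i \cap \mathscr C_j$, disjointness forces the index to remain constant along the path. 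Hence $R' \in \mathscr R_i$ as well. This shows $\mathscr R_i$ contains all reactions of $L$, so $\mathscr N_i$ is a union of whole linkage classes, and the complexes of $L$ all lie in $\mathscr C_i$ only, so $L$ is contained in no other subnetwork.

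For the backward direction, I would assume each $\mathscr N_i$ is a union of linkage classes with each linkage class contained in only one subnetwork. Then the assignment sending a linkage class $L$ to its unique index $i(L)$ is well-defined, and $\mathscr C_i = \bigcup_{i(L)=i} \mathscr C(L)$. Because distinct linkage classes have disjoint complex sets and no linkage class is shared between two subnetworks, for $i \neq j$ the unions defining $\mathscr C_i$ and $\mathscr C_j$ run over disjoint families of linkage classes with pairwise disjoint $\mathscr C(L)$; hence $\mathscr C_i \cap \mathscr C_j = \varnothing$, which is exactly the $\mathscr C$-decomposition condition.

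The main obstacle I anticipate is the bookkeeping in the forward direction: one must verify not only that the complexes of a linkage class land in a single $\mathscr C_i$, but that the full reaction set $\mathscr R_i$ decomposes as a disjoint union of complete linkage-class reaction sets, so that $\mathscr N_i$ genuinely is a union of linkage classes in the sense of Definition~\ref{union}, with matching complex and species sets. This requires checking that the restriction $\mathscr C|_{\mathscr R_i}$ agrees with the union of the $\mathscr C(L)$ over the linkage classes assigned to $i$, which follows from the propagation argument above together with the fact that subnetwork complex and species sets are determined entirely by their reaction sets.
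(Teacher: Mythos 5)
Your proof is correct. Note, however, that the paper does not prove this theorem at all: it is imported verbatim from Farinas et al.\ \cite{fari2021}, so there is no in-paper proof to compare yours against. On its own merits, your argument is sound and complete: the forward direction's propagation step (consecutive edges of a path inside a linkage class must carry the same partition index, since the shared complex would otherwise witness $\mathscr C_i \cap \mathscr C_j \neq \varnothing$) is exactly the right mechanism, and the backward direction correctly reduces to the fact that distinct linkage classes have pairwise disjoint complex sets. It is worth observing that your propagation argument is the same technique this paper itself uses in the more general PBS setting, namely in Lemma~\ref{prop3000} and Lemma~\ref{prop1.700}, where non-common complexes play the role that your disjointness hypothesis plays here; so your proof fits naturally into the paper's toolkit. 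Your closing bookkeeping remark --- that a linkage class has no isolated vertices, hence its vertex set coincides with the complex set $\mathscr C|_{\mathscr R(L)}$ determined by its reactions, which is what makes ``$\mathscr N_i$ is a union of linkage classes'' hold at the level of networks and not merely of reaction sets --- is the one detail that is easy to overlook, and you handled it correctly.
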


\noindent It immediately follows from this result that, in any $\mathscr C$-decomposition, $\sum \ell_i=\ell$. With this result, the incidence independence of a $\mathscr C$-decomposition is guaranteed as stated in the following proposition. 

\begin{proposition}[Proposition 3, \cite{fari2021}] \label{C_dec_inc_ind}
	Any $\mathscr C$-decomposition is incidence independent.
\end{proposition}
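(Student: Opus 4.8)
The plan is to reduce incidence independence to the numerical criterion $n-\ell=\sum_{i=1}^{k}(n_i-\ell_i)$, which by Proposition \ref{boros} is equivalent to the defining direct-sum condition on the images of the incidence maps. Indeed, since the partition $\{\mathscr R_1,\dots,\mathscr R_k\}$ decomposes $\mathbb R^{\mathscr R}$ as an internal direct sum, the parent image $\operatorname{Im} I_a$ is always the (not necessarily direct) sum of the subnetwork images $\operatorname{Im} I_{a,i}$, embedded in $\mathbb R^{\mathscr C}$ via the natural inclusions $\mathbb R^{\mathscr C_i}\hookrightarrow\mathbb R^{\mathscr C}$. Hence one always has $\dim\operatorname{Im} I_a\le\sum_i\dim\operatorname{Im} I_{a,i}=\sum_i(n_i-\ell_i)$, and incidence independence is exactly the assertion that this inequality is an equality. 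So the whole proof amounts to verifying the dimension count, and I would do this by treating the complex counts and the linkage-class counts separately.

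First I would use the defining property of a $\mathscr C$-decomposition directly. Since $\mathscr C_i\cap\mathscr C_j=\varnothing$ for $i\neq j$ and $\mathscr C=\bigcup_i\mathscr C_i$ by (\ref{eq4}), the complex sets $\{\mathscr C_i\}$ form a genuine partition of $\mathscr C$. Counting complexes across this partition immediately gives $\sum_{i=1}^{k}n_i=n$. This is the step where the hypothesis $\mathscr C_i\cap\mathscr C_j=\varnothing$ is used in an essential way; for a general decomposition the complex sets may overlap and this equality would fail (indeed, the overlap is precisely what the paper later measures via the set of common complexes).

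Next I would invoke the Structure Theorem \ref{str_thm_for_C}: each subnetwork $\mathscr N_i$ is a union of entire linkage classes of $\mathscr N$, and each linkage class of $\mathscr N$ is contained in exactly one $\mathscr N_i$. Consequently the linkage classes of $\mathscr N$ are distributed among the subnetworks without being split or shared, and the linkage classes of $\mathscr N_i$ are precisely those assigned to $\mathscr N_i$. Summing the per-subnetwork counts then yields $\sum_{i=1}^{k}\ell_i=\ell$.

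Combining the two identities gives $\sum_{i=1}^{k}(n_i-\ell_i)=\sum_i n_i-\sum_i\ell_i=n-\ell$, so by the criterion following Proposition \ref{boros} the decomposition is incidence independent. The argument is essentially bookkeeping once the Structure Theorem is available, and I do not expect a genuine obstacle; the only point requiring care is recognizing that the linkage-class identity $\sum\ell_i=\ell$ truly relies on the non-splitting half of the Structure Theorem (a single linkage class cut across several subnetworks would inflate $\sum\ell_i$), rather than being a formal triviality.
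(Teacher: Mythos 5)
Your proof is correct and takes essentially the same route as the paper: the remark preceding the proposition derives $\sum \ell_i=\ell$ from the Structure Theorem \ref{str_thm_for_C}, and combines it with the partition of complexes (so $\sum n_i = n$) and the rank formula of Proposition \ref{boros} to get $n-\ell=\sum(n_i-\ell_i)$. Your extra care in noting that incidence independence reduces to this dimension count, and that $\sum\ell_i=\ell$ genuinely needs the non-splitting half of the Structure Theorem, matches the paper's reasoning exactly.
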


On the other hand, $\mathscr C^*$-decomposition is characterized by the condition that the pairwise intersection of the subnetworks' complex sets can only contain the zero complex.

\begin{definition}
	A decomposition $\mathscr N=\mathscr N_1\cup \cdots \cup\mathscr N_k$ with $\mathscr N_i=(\mathscr S_i,\mathscr C_i,\mathscr R_i)$ is a \textbf{$\mathscr C^*$}-\textbf{decomposition} if $\mathscr C^*_i\cap \mathscr C^*_j=\varnothing$ for $i\neq j$ where $\mathscr C^*_i$ and $\mathscr C^*_j$ are the sets of  non-zero complexes in $\mathscr C_i$ and $\mathscr C_j$, respectively.
\end{definition}

Equivalently, in $\mathscr C^*$-decomposition, $\mathscr C_i \cap \mathscr C_j\subseteq \{0\}$. From this, it follows that $\mathscr C$-decompositions are contained in the class of $\mathscr C^*$-decompositions. An analogous structure theorem for $\mathscr C^*$-decomposition was also provided in \cite{fari2021}. 

\begin{theorem}[Structure Theorem for $\mathscr C^*$-decomposition, \cite{fari2021}] \label{str_thm_for_C*}  
	Let $\mathscr N=\mathscr N_1\cup \cdots \cup \mathscr N_k$ be a $\mathscr C^*$-decomposition and $\mathscr L_{0}$ and $\mathscr L_{0,i}$ be the linkage classes of $\mathscr N$ and $\mathscr N_i$ containing the zero complex $($note that $\mathscr L_{0,i}$ is empty if $\mathscr N_i$ does not contain the zero complex$)$. Then,
	\begin{enumerate}[i.]
		\item the $\mathscr L_{0,i}$ form a $\mathscr C^*$-decomposition of $\mathscr L_0$ and
		\item the $($non-empty$)$ $\mathscr N_i\setminus \mathscr L_{0,i}$ form a $\mathscr C$-decomposition of $\mathscr N\setminus \mathscr L_0$.
	\end{enumerate}
\end{theorem}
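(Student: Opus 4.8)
The plan is to isolate a single combinatorial consequence of the $\mathscr C^*$-hypothesis and deduce both assertions from it. The key observation is that \emph{any two reactions sharing a nonzero complex must lie in the same subnetwork}: if $R\in\mathscr R_i$ and $R'\in\mathscr R_j$ both have a common complex $y\neq 0$ as reactant or product, then $y\in\mathscr C_i\cap\mathscr C_j\subseteq\{0\}$, which is impossible unless $i=j$. Everything else is a matter of propagating this fact along paths in the underlying graph.

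The crux of the argument is to prove that the reaction set of $\mathscr L_0$ is the disjoint union of the reaction sets of the $\mathscr L_{0,i}$. The inclusion from right to left is immediate, since a reaction connected to $0$ within some $\mathscr N_i$ is also connected to $0$ within $\mathscr N$. For the reverse inclusion I would take a reaction $R\in\mathscr R_i$ lying in $\mathscr L_0$ and a \emph{simple} path in $\mathscr N$ joining a complex of $R$ to $0$; simplicity guarantees that $0$ occurs only as the terminal vertex and that every intermediate complex is nonzero. Traversing the path and applying the observation at each interior vertex (which is shared by two consecutive edges and is nonzero), I would conclude by induction that all edges of the path lie in $\mathscr R_i$ and all its complexes in $\mathscr C_i$; hence $R$ is already connected to $0$ inside $\mathscr N_i$, i.e.\ $R\in\mathscr L_{0,i}$. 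Disjointness of the reaction sets is inherited from the disjointness of the $\mathscr R_i$. The degenerate cases — when $R$ itself contains $0$, or the path has length one — should be checked separately but reduce to the same reasoning.

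Granting this identification, both parts are short. For (i): the nonempty $\mathscr L_{0,i}$ partition the reactions of $\mathscr L_0$ by the above, and since the complex set of each $\mathscr L_{0,i}$ is contained in $\mathscr C_i$, any pairwise intersection lies in $\mathscr C_i\cap\mathscr C_j\subseteq\{0\}$; this is exactly the defining condition of a $\mathscr C^*$-decomposition. For (ii): deleting $\mathscr L_{0,i}$ removes precisely those reactions of $\mathscr N_i$ incident to $0$ (every reaction involving $0$ lies in the component of $0$), so the remaining complex sets $\mathscr C_i'$ avoid $0$. Using $\mathscr R=\bigsqcup_i\mathscr R_i$ together with the identification above, $\{\mathscr R_i\setminus\mathscr L_{0,i}\}$ partitions $\mathscr R\setminus\mathscr L_0$, and $\mathscr C_i'\cap\mathscr C_j'\subseteq(\mathscr C_i\cap\mathscr C_j)\setminus\{0\}=\varnothing$, giving a genuine $\mathscr C$-decomposition.

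I expect the main obstacle to be the path-tracing step: choosing the path from $R$ to $0$ to be simple so that the only potentially shared complex, $0$, appears solely at the endpoint, and then verifying cleanly that the observation applies at every interior vertex. Making ``every edge stays in $\mathscr R_i$'' rigorous is an induction along the path underpinned by the connectivity of the underlying graph of a linkage class; the remaining bookkeeping (partitions and complex-set inclusions) is routine.
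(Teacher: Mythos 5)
Your proof is correct. The paper itself imports this theorem from Farinas et al.\ without reproving it, but your argument coincides with the paper's own proof technique for the generalization of part (ii) (Theorem \ref{str_thm_for_PMM} for PBS decompositions, resting on Lemmas \ref{prop3000} and \ref{prop1.700}): your ``key observation'' together with the simple-path induction is precisely the specialization of those path-tracing lemmas to the case $\mathscr C_{\mathscr D}=\{0\}$, and your partition and complex-set bookkeeping mirrors that proof, so no gap remains.
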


A criterion was formulated in \cite{fari2021} for the incidence independence of a $\mathscr C^*$-decomposition that is not $\mathscr C$-decomposition: 
\begin{equation}\label{com_com_cri}
    \sum\ell_i -\ell= k(1)-1
\end{equation}
\noindent where $k(1)$ is the number of subnetworks $\mathscr N_i$ that contain the zero complex. This criterion play an important role in proving the incidence independence of $\mathscr C^*$-decompositions.

\begin{proposition}[Corollary 5, \cite{fari2021}]
Any $\mathscr C^*$-decomposition is incidence independent.
\end{proposition}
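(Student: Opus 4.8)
The plan is to reduce the claim to the purely combinatorial identity that characterizes incidence independence and then verify it by a double count of complexes and of linkage classes. By Proposition \ref{boros}, $\dim\operatorname{Im}I_a = n-\ell$ and $\dim\operatorname{Im}I_{a,i}=n_i-\ell_i$, and since the reactions are partitioned one always has $\operatorname{Im}I_a=\sum_i\operatorname{Im}I_{a,i}$; hence the decomposition is incidence independent if and only if this sum is direct, i.e. if and only if $n-\ell=\sum_{i}(n_i-\ell_i)$, equivalently $\sum_i n_i - n = \sum_i\ell_i-\ell$. So it suffices to show that both sides equal the same number.

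First I would dispose of the trivial case. Let $k(1)$ be the number of subnetworks containing the zero complex $0$. If $k(1)\le 1$, then no two distinct subnetworks share $0$, so $\mathscr C_i\cap\mathscr C_j\subseteq\{0\}$ forces $\mathscr C_i\cap\mathscr C_j=\varnothing$; the decomposition is then a $\mathscr C$-decomposition and is incidence independent by Proposition \ref{C_dec_inc_ind}. So assume $k(1)\ge 2$, which is exactly the situation where the $\mathscr C^*$-decomposition fails to be a $\mathscr C$-decomposition and where the criterion (\ref{com_com_cri}) applies. In this case $0\in\mathscr C$, every nonzero complex lies in exactly one subnetwork (the defining property of a $\mathscr C^*$-decomposition), while $0$ is counted once for each of the $k(1)$ subnetworks containing it. Counting complexes with multiplicity therefore gives $\sum_i n_i=(n-1)+k(1)$, that is, $\sum_i n_i - n = k(1)-1$.

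It remains to match this with the linkage-class count $\sum_i\ell_i-\ell=k(1)-1$, which is precisely the content of the criterion (\ref{com_com_cri}). To establish it I would invoke the Structure Theorem for $\mathscr C^*$-decompositions (Theorem \ref{str_thm_for_C*}). By part (ii), the non-empty $\mathscr N_i\setminus\mathscr L_{0,i}$ form a $\mathscr C$-decomposition of $\mathscr N\setminus\mathscr L_0$; since $\sum\ell_i=\ell$ holds in every $\mathscr C$-decomposition (a consequence of Theorem \ref{str_thm_for_C}), the linkage classes of the subnetworks lying off $\mathscr L_0$ sum to $\ell(\mathscr N\setminus\mathscr L_0)=\ell-1$. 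By part (i), each subnetwork containing $0$ contributes exactly one further linkage class, namely its $\mathscr L_{0,i}$, and there are $k(1)$ of these; hence $\sum_i\ell_i=(\ell-1)+k(1)$, i.e. $\sum_i\ell_i-\ell=k(1)-1$. Combining with the complex count yields $\sum_i n_i-n=\sum_i\ell_i-\ell$, which is the desired identity, so the decomposition is incidence independent.

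The step I expect to be the main obstacle is the linkage-class bookkeeping in the last paragraph. One must argue carefully that removing $\mathscr L_0$ leaves a genuine $\mathscr C$-decomposition whose linkage classes simply add up (so that no complexes are inadvertently shared across the $\mathscr N_i\setminus\mathscr L_{0,i}$), and that the $k(1)$ connected pieces $\mathscr L_{0,i}$ glue along the single shared vertex $0$ into the one linkage class $\mathscr L_0$, contributing exactly $k(1)$ to $\sum_i\ell_i$ while $\mathscr L_0$ contributes only $1$ to $\ell$. Both points are supplied by the Structure Theorem, so the remaining work is a routine verification rather than a new idea.
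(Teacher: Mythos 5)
Your proof is correct and follows essentially the same route as the paper's framework: you reduce incidence independence to the counting identity $n-\ell=\sum_i(n_i-\ell_i)$ via Proposition \ref{boros}, re-derive the criterion (\ref{com_com_cri}) by counting the zero complex with multiplicity (which is exactly the $d=1$ specialization of the 3C-criterion, Proposition \ref{3C_cri_PBS}), and verify it with the linkage-class bookkeeping supplied by the Structure Theorem \ref{str_thm_for_C*} together with $\sum\ell_i=\ell$ for $\mathscr C$-decompositions. This is the same structure-theorem-plus-counting argument the paper uses for its analogous results (Propositions \ref{prop5} and \ref{prop7}), so no further comparison is needed.
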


\section{The common complexes framework for decomposition classes and their incidence independence properties}

In this section we introduce the concept of the set of common complexes of a decomposition, denoted by $\mathscr C_{\mathscr D}$, and the associated properties which will be needed in the succeeding sections in investigating the incidence independence of decompositions.

\subsection{Definition of $\mathscr C_{\mathscr D}$ and some basic properties} 

We define the set of common complexes as follows.

\begin{definition}
Let $\mathscr D$ be a decomposition. The \textbf{set} $\mathscr C_{\mathscr D}$  \textbf{of common complexes} consists of all complexes contained in sets of complexes of at least two distinct subnetworks of $\mathscr D$. 
\end{definition}

\noindent The order of $\mathscr C_{\mathscr D}$ is denoted here by $d$, i.e., $|\mathscr C_{\mathscr D}|=d$. The following proposition indicates that specifying pairwise intersections of the subnetworks' complex sets determines the set of common complexes.

\begin{proposition}\label{cott}
	In decomposition $\mathscr D$, $\mathscr C_{\mathscr D}={\displaystyle\bigcup_{i<j}}{(\mathscr C_i\cap \mathscr C_j)}$.
\end{proposition}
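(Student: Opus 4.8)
The plan is to establish the claimed set equality by proving mutual containment, after first restating the definition of $\mathscr C_{\mathscr D}$ in a form that can be compared directly with the right-hand side. Unwinding the definition, a complex $y$ lies in $\mathscr C_{\mathscr D}$ precisely when there exist two distinct indices for which $y$ belongs to the corresponding subnetworks' complex sets; that is, $y\in\mathscr C_{\mathscr D}$ if and only if $y\in\mathscr C_a$ and $y\in\mathscr C_b$ for some $a\neq b$. The membership test for the union $\bigcup_{i<j}(\mathscr C_i\cap\mathscr C_j)$ is almost the same existential statement, the only difference being that the pair of indices is required to be \emph{ordered} rather than merely distinct. The entire argument thus reduces to reconciling these two quantifiers.

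For the inclusion $\mathscr C_{\mathscr D}\subseteq\bigcup_{i<j}(\mathscr C_i\cap\mathscr C_j)$, I would take an arbitrary $y\in\mathscr C_{\mathscr D}$ and invoke the definition to obtain distinct indices $a,b$ with $y\in\mathscr C_a$ and $y\in\mathscr C_b$. Relabeling the two if necessary so that the smaller comes first, I may assume $a<b$, whence $y\in\mathscr C_a\cap\mathscr C_b$, and this set is one of the terms in the union indexed by $i<j$. Hence $y$ lies in the right-hand side.

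For the reverse inclusion, I would take $y\in\bigcup_{i<j}(\mathscr C_i\cap\mathscr C_j)$, so that $y\in\mathscr C_i\cap\mathscr C_j$ for some pair $i<j$; in particular $y$ belongs both to $\mathscr C_i$ and to $\mathscr C_j$ with $i\neq j$, so $y$ occurs in the complex sets of at least two distinct subnetworks and therefore lies in $\mathscr C_{\mathscr D}$ by definition. The two inclusions together yield the asserted equality. The only point requiring any care—and the closest thing to an obstacle—is the passage from an unordered distinct pair to the ordered index pair $i<j$ in the forward direction; since the subnetworks are simply indexed by $\{1,\dots,k\}$, this is dispatched by a harmless relabeling, so no genuine difficulty arises.
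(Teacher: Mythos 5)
Your proof is correct: the proposition follows by exactly this direct unwinding of the definition of $\mathscr C_{\mathscr D}$, and indeed the paper states it without proof, treating precisely this argument as immediate. Your handling of the passage from a pair of distinct indices to an ordered pair $i<j$ is the only point of substance, and you dispose of it properly.
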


Note that for $i\neq j$, $(\mathscr C_i\setminus \mathscr C_{\mathscr D})\cap (\mathscr C_j\setminus \mathscr C_{\mathscr D})=\varnothing$. Moreover, in $\mathscr C$-decomposition, $\mathscr C_{\mathscr D}=\varnothing$ while $\mathscr C_{\mathscr D}=\{0\}$ for $\mathscr C^*$-decomposition that is not a $\mathscr C$-decomposition. Since every $\mathscr C$-decomposition is incidence independent, all decompositions with $d=0$ is incidence independent. It is indicated in Proposition \ref{prop7} that a decomposition with $d=1$ is incidence independent. Here is a counterexample showing that incidence independence does not necessarily follow, in general, for $d \geq 2$.

\begin{example}\label{sim_exa}
	Consider the decomposition $\mathscr D:\mathscr N=\mathscr N_1 \cup \mathscr N_2$ given by the following
	 \begin{equation}
		\begin{tikzcd}
			\mathscr N: A\leftrightarrow B & \mathscr N_1: A \rightarrow B & \mathscr N_2: A\leftarrow B
		\end{tikzcd}
		\label{cecil32}
	\end{equation}

\noindent Notice that $\mathscr C_{\mathscr D}=\{A,B\}$ and $n-\ell=2-1=1$ while $n_1-\ell_1=n_2-\ell_2=2-1=1$. Hence, $\mathscr D$ is not incidence independent.
\end{example}

Let $h_i$ be the number of complexes in subnetwork $\mathscr N_i$ that are also contained in $\mathscr C_{\mathscr D}$, i.e., $h_i=|\mathscr C_i\cap \mathscr C_{\mathscr D}|$. The following result, which generalizes (\ref{com_com_cri}), provides a criterion that determines the incidence independence of a decomposition through its $\mathscr C_{\mathscr D}$. We call this \textbf{connectivity-common complexes criterion} ({3C-criterion}). 

\begin{proposition}\label{3C_cri_PBS}
	Let $\mathscr D:\mathscr N = \mathscr N_1\cup \cdots\cup \mathscr N_k$ be a decomposition with $d=|\mathscr C_{\mathscr D}|$. Then, the decomposition is incidence independent if and only if $\displaystyle \sum^k_{i=1}\ell_i - \ell = \sum^k_{i=1} h_i-d$.
\end{proposition}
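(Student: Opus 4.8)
The plan is to reduce the stated identity to the known characterization of incidence independence through complex counts, and then to prove a purely combinatorial identity about common complexes by double counting. Recall that, by Proposition \ref{boros} together with the definition of incidence independence, $\mathscr D$ is incidence independent if and only if $n-\ell=\sum_{i=1}^k(n_i-\ell_i)$. Rearranging, this equivalence reads
\[
\sum_{i=1}^k \ell_i-\ell=\sum_{i=1}^k n_i-n.
\]
Hence it suffices to establish the purely structural identity $\sum_{i=1}^k n_i-n=\sum_{i=1}^k h_i-d$, which holds for \emph{every} decomposition irrespective of incidence independence; substituting it into the displayed equivalence delivers the claim.

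To prove the structural identity, I would count each complex by its multiplicity across the subnetworks. For $y\in\mathscr C$, let $m(y)$ denote the number of subnetworks $\mathscr N_i$ with $y\in\mathscr C_i$; since $\mathscr C=\bigcup_i\mathscr C_i$ by (\ref{eq4}), we have $m(y)\geq 1$ for all $y$. Double counting the complex--subnetwork incidences gives $\sum_{i=1}^k n_i=\sum_{i=1}^k|\mathscr C_i|=\sum_{y\in\mathscr C}m(y)$, whence
\[
\sum_{i=1}^k n_i-n=\sum_{y\in\mathscr C}\bigl(m(y)-1\bigr).
\]
A complex contributes to this sum precisely when $m(y)\geq 2$, that is, exactly when $y\in\mathscr C_{\mathscr D}$ by Proposition \ref{cott}; the non-common complexes have $m(y)=1$ and contribute zero. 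Thus the sum restricts to the common complexes: $\sum_{i=1}^k n_i-n=\sum_{y\in\mathscr C_{\mathscr D}}(m(y)-1)$.

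Finally, I would identify the two pieces of this restricted sum with $\sum_i h_i$ and $d$. Double counting once more, now over $\mathscr C_{\mathscr D}$, yields $\sum_{i=1}^k h_i=\sum_{i=1}^k|\mathscr C_i\cap\mathscr C_{\mathscr D}|=\sum_{y\in\mathscr C_{\mathscr D}}m(y)$, while $d=|\mathscr C_{\mathscr D}|=\sum_{y\in\mathscr C_{\mathscr D}}1$. Subtracting gives $\sum_{i=1}^k h_i-d=\sum_{y\in\mathscr C_{\mathscr D}}(m(y)-1)=\sum_{i=1}^k n_i-n$, which is exactly the structural identity needed. The argument is entirely elementary, so there is no genuine obstacle: the whole content lies in the multiplicity bookkeeping, and the only point requiring care is that $m(y)=1$ exactly for the non-common complexes—precisely the content of Proposition \ref{cott}—so that the sum over all complexes collapses to a sum over $\mathscr C_{\mathscr D}$. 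Everything else follows from the linear-algebraic equivalence for incidence independence recalled at the outset.
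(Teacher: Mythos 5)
Your proof is correct and takes essentially the same route as the paper's: the paper's argument rests on the observation $n = d + \sum_{i=1}^{k}(n_i - h_i)$, which is exactly your structural identity $\sum_{i=1}^{k} n_i - n = \sum_{i=1}^{k} h_i - d$ rearranged, combined with the same dimension criterion $n-\ell=\sum_{i=1}^{k}(n_i-\ell_i)$ for incidence independence via Proposition \ref{boros}. The only difference is that you justify the counting identity explicitly by multiplicity double counting, whereas the paper states it as an observation.
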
 

\begin{proof}
Observe that $n = \displaystyle d+\sum^{k}_{i=1} (n_i-h_i)$. Then, we have
\begin{center}
\begin{tabular}{ ll } 
incidence independent decomposition  & $\Leftrightarrow n-\ell = \displaystyle \sum^k_{i=1}(n_i-\ell_i) = \sum^k_{i=1} n_i -  \sum^k_{i=1} \ell_i$ \\ 
 & $\displaystyle \Leftrightarrow n-\ell =  (d-d)+\sum^k_{i=1} (n_i-h_i+h_i) -  \sum^k_{i=1} \ell_i$\\
 & $\displaystyle \Leftrightarrow n-\ell =  (d-d)+\sum^k_{i=1} (n_i-h_i)+\sum^k_{i=1}h_i -  \sum^k_{i=1} \ell_i$\\
 & $\Leftrightarrow \displaystyle \sum^k_{i=1}\ell_i - \ell = \sum^k_{i=1} h_i-d$\\
\end{tabular}    
\end{center}
\end{proof}

\begin{example}
The criterion for the incidence independence of a $\mathscr C^*$-decomposition that is not $\mathscr C$-decomposition (\ref{com_com_cri}) is obtained from the 3C-criterion when $d=1$: $\displaystyle \sum^k_{i=1}\ell_i - \ell = \sum^{k(1)}_{i=1}1 - 1=k(1)-1$. 
\end{example}



\subsection{The common complexes framework of a decomposition class}

We now introduce a framework called \textbf{Common Complexes Framework} ({CCF}) that will provide a way to classify network's decompositions into sets with equivalent incidence independence properties.

For any positive integer $k$, an incidence independence class of decompositions of $\mathscr N$ with $k$ subnetworks is given by a set of conditions (called \textbf{class conditions}) defining the $k(k- 1)/2$ pairwise intersections $\mathscr C_i \cap \mathscr C_j$, $i < j$, for each decomposition. The class conditions determine the range of (integer) values for $0 \leq d < |\mathscr C|$. The $k \times k$ upper triangular matrix (with zero diagonal) $\mathscr M$ given by 
\begin{equation}
    m_{ij} = \begin{cases} 0 & \mbox{if } i\geq j \\ |\mathscr C_i \cap \mathscr C_j| & \mbox{if } i<j \end{cases}
\end{equation}
\noindent is called the \textbf{pairwise intersection matrix} (PIM). 

CCF has the following three components:

\begin{enumerate}[i.]
	\item a class condition that determines the PIM and the relevant parameters;
	\item a structure theorem for the decompositions in the class in terms of $\mathscr C$-decompositions; and
	\item a formulation of the 3C-criterion for incidence independence in terms of the class parameters.
\end{enumerate}
\noindent PIM were chosen to be a matrix with elements of upper triangular nonnegative matrices which form an additive monoid and allows scalar multiplication with positive integers. These operations may have interpretable effects on the decompositions involved. The structure theorem and 3C-criterion are often useful in finding network properties that are sufficient or necessary for the incidence independence of a decomposition.

\begin{example}
	The class of $\mathscr C$-decompositions can be defined through the following equivalent statements
	\begin{equation}
		\mathscr C_i\cap \mathscr C_j=\varnothing\text{\;for\;} i<j\Leftrightarrow d=0 \Leftrightarrow \mathscr M = \mathbf 0
	\end{equation}
	\noindent where $\mathbf 0$ is the zero matrix. The structure theorem (Theorem \ref{str_thm_for_C}) states that every $\mathscr C$-decomposition is a coarsening of the linkage class decomposition which led to the 3C-criterion given by $\sum\ell_i-\ell=0$. 
\end{example}

\begin{example}
	For $\mathscr C^*$-decomposition that is not a $\mathscr C$-decomposition, $d=1$ and the PIM is the matrix with entries $0$ and $1$ (for $ij$-entry corresponding to $\mathscr C_i$ and $\mathscr C_j$ whose intersection is the zero complex). See Theorem \ref{str_thm_for_C*} and (\ref{com_com_cri}) for the structure theorem and 3C-criterion.
\end{example}

\section{The class of pairwise binary-sized decompositions}

In this section, we discuss a class of network decompositions characterized by the property that the intersection sets of subnetworks' complex sets take on two sizes which are denoted by $b$ and $c$.

\subsection{Definition and examples}

\begin{definition}
Let $\mathscr D: \mathscr N=\mathscr N_1\cup \cdots \cup \mathscr N_k$ be a decomposition and $b$ and $c$ be integers such that $0\leq b\leq c\leq d=|\mathscr C_{\mathscr D}|$. Then, $\mathscr D$ is said to be \textbf{pairwise binary-sized} (PBS) if $|\mathscr C_i\cap \mathscr C_j|=b$ or $c$ for $i\neq j$. 
\end{definition}

\noindent PBS decompositions for a given pair $(b,c)$ are denoted here by $\text{PBS}_d(b,c)$. In the context of this paper, $b$ and $c$ are often zero and non-zero, respectively. Hence, the number $b$ can be thought of as the ``bottom number" of pairwise intersection elements while $c$ is the common cardinality of the pairwise intersection set. 

\begin{example} Observe that $\mathscr C^*$-decompositions that are not $\mathscr C$-decompositions are $\text{PBS}_1(0,1)$ or $\text{PBS}_1(1,1)$  while $\mathscr C$-decompositions are $\text{PBS}_0(0,0)$. Note that $d=0 \Leftrightarrow b=0 \textnormal{\;and\;} c=0$. Hence, $\mathscr C$-decompositions can be denoted simply with $\text{PBS}_0$.
\end{example}

We now provide several examples of PBS decompositions in models of chemical and biological systems.

\begin{example}\label{carbon cycle model}
Take the following weakly reversible and deficiency zero subnetwork $\mathscr N$ of the Schmitz's pre-industrial carbon cycle model \cite{schm2002}. 
 
\begin{equation}
\begin{tikzcd}
M_5 \arrow[dd, "R_3\;\;"'] \arrow[rd, "R_1", shift left] &                                                      & M_2 \arrow[ld, "R_5"'] &                        \\
                                                         & M_1 \arrow[lu, "R_2", shift left] \arrow[rd, "R_8"'] &                        & M_4 \arrow[lu, "R_6"'] \\
M_6 \arrow[ru, "R_4"']                                   &                                                      & M_3 \arrow[ru, "R_7"'] &                       
\end{tikzcd}
\label{graph1000}
\end{equation}

\noindent The complexes in this network represent six carbon pools (see the table below) while the reactions describe the mass transfers. Fortun et al. \cite{fort2018} studied a power law representation of this model in \cite{fort2018}.

\begin{table}[H]
\centering
\begin{tabular}{|l|c|ll}
\cline{1-2}
$M_1$ & atmosphere               &  &  \\ \cline{1-2}
$M_2$ & warm ocean surface water &  &  \\ \cline{1-2}
$M_3$ & cool ocean surface water &  &  \\ \cline{1-2}
$M_4$ & deep ocean waters        &  &  \\ \cline{1-2}
$M_5$ & terrestrial biota        &  &  \\ \cline{1-2}
$M_6$ & soil and detritus        &  &  \\ \cline{1-2}
\end{tabular}
\caption{The carbon pools of the Schmitz's pre-industrial carbon cycle model.}
\end{table}    

\noindent  Now, consider the decomposition $\mathscr D_1: \mathscr N=\mathscr N_1\cup\mathscr N_2$ (also given in \cite{fort2018}) where

	\begin{equation}\label{eq5}
\begin{tikzcd}
M_5 \arrow[dd, "R_3\;\;"'] \arrow[rd, "R_1", shift left] &                                   &                        & M_2 \arrow[ld, "R_5"'] &                        \\
                                                         & M_1 \arrow[lu, "R_2", shift left] & M_1 \arrow[rd, "R_8"'] &                        & M_4 \arrow[lu, "R_6"'] \\
M_6 \arrow[ru, "R_4"']                                   &                                   &                        & M_3 \arrow[ru, "R_7"'] &                        \\
\mathscr N_1                                             &                                   &                        & \mathscr N_2           &                       
\end{tikzcd}
	\end{equation}

\noindent Note that $\mathscr D_1$ is incidence independent according to Proposition \ref{3C_cri_PBS} since $(\ell_1+\ell_2) - \ell=(1+1)-1=1$ and $(h_1+h_2)-d=(1+1)-1=1$. In addition, $\mathscr D_1$ is a PBS decomposition with $\mathscr C_{\mathscr D_1}=\{M_1\}$ and belongs to $\text{PBS}_1(1,1)$.
\end{example}

\begin{example}
The $\mathscr C^*$-decomposition of a reaction network modeling the gene regulatory system of \textit{Mycobacterium tuberculosis} in \cite{fari2020} is a $\text{PBS}_1(1,1)$ decomposition.     
\end{example}

\begin{example}\label{q-site}
The reaction graph below represents the CRN $\mathscr N$ of $q$-site distributive phosphorylation/dephosphorylation ($q\geq 2$)\cite{conr2018}.
 
\begin{equation*}\label{pd}
    \begin{tikzcd}
S_0+K \arrow[r, shift left] & S_0K \arrow[l, shift left] \arrow[r]   & S_1+K \arrow[r, shift left] & S_1K \arrow[l, shift left] \arrow[r, shift left] & S_2+K \arrow[l, shift left] \arrow[r, shift left] & \cdots \arrow[r] \arrow[l, shift left] & S_q+K \\
Sq+F \arrow[r, shift left]  & \cdots \arrow[l, shift left] \arrow[r] & S_2+F \arrow[r, shift left] & S_2F \arrow[l, shift left] \arrow[r]             & S_1+F \arrow[r, shift left]                       & S_1F \arrow[l, shift left] \arrow[r]   & S_0+F
\end{tikzcd}
\end{equation*}
\end{example}
\vspace{0.1cm}

\noindent The $S_i$'s are the $q$ substrates of a protein (the sites for both processes) while $K$ and $F$ are two different enzymes effecting phosphorylation and dephosphorylation respectively. Hernandez et al.\cite{hern2020A} gave the decomposition $\mathscr D_2:\mathscr N=\mathscr N_1\cup \cdots \cup \mathscr N_q$ where

\begin{equation}\label{fundamental}
   \begin{tikzcd} 
   \mathscr N_{i+1}: & S_i+K \arrow[r, shift left]     & S_iK \arrow[l, shift left] \arrow[r]     & S_{i+1}+K \\
                  & S_{i+1}+F \arrow[r, shift left] & S_{i+1}F \arrow[l, shift left] \arrow[r] & S_i+F    
\end{tikzcd}
\end{equation}
\vspace{0.1cm}

\noindent for $i=0,1,\cdots,q-1$. They call $\mathscr D_2$ a fundamental decomposition (in the sense of Ji \cite{ji2011}). $\mathscr D_2$ is a  $\text{PBS}_d(0,2)$ decomposition with $\mathscr C_{\mathscr D_2}=\{S_{i+1}+K, S_{i+1}+F\;|\;i=0,1,\cdots, q-2\}$ and $d=2(q-1)$. Note that $\mathscr D_2$ is incidence independent (Proposition 3.38 in \cite{hern2020A}). We verify this using Proposition \ref{3C_cri_PBS}. Notice that  $\displaystyle \sum^q_{i=1}\ell_i - \ell = 2q-2$ and $\displaystyle \sum^{q}_{i=1}h_i-d=\Bigg{[}h_1+\sum^{q-1}_{i=2}h_i+h_q\Bigg{]}-d=[2+4(q-2)+2]-[2(q-1)]=2q-2$.

In a PBS decomposition, we denote by $k(c)$ the number of subnetworks containing the ``non-bottom" common pairwise intersection. As a simple illustration, a $\mathscr C$-decomposition has  $k(c)=0$. For convenience, unless otherwise indicated, let the first $k(c)$ subnetworks of a PBS decomposition (i.e., $\mathscr N_1, \cdots, \mathscr N_{k(c)}$) be the ones that contain the non-bottom common pairwise intersection. With this, the corresponding PIM $\mathscr M$ is the matrix $[m_{ij}]$ where
\begin{equation}
    m_{ij} = \begin{cases} c & \mbox{if } i< j \leq k(c)\\ b & \mbox{if } j>k(c) \end{cases}.
\end{equation}

\subsection{A structure theorem for the class of PBS decompositions} 

The following lemmas are vital in proving many of the succeeding results.

\begin{lemma}\label{prop3000}
	Let $\mathscr D: \mathscr N = \mathscr N_1\cup \cdots\cup \mathscr N_k$ be a \textnormal{PBS} decomposition. If $y'-x_1-\cdots-x_q-y''$ is a path in $\mathscr N$ such that $y'$ and $y''$ are common complexes while $x_1,\cdots,x_q$ are not, then there is a subnetwork $\mathscr N_i$ where $y'$ and $y''$ are connected.
\end{lemma}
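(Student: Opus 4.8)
The plan is to walk along the path one edge at a time, tracking which subnetwork the underlying reaction of each edge belongs to, and to exploit the fact that the interior vertices, not being common complexes, each occur in exactly one subnetwork. Before starting, I would record the two facts that drive the argument. Since $\mathscr D$ is a decomposition, every reaction of $\mathscr N$ lies in exactly one subnetwork $\mathscr N_i$, and both its reactant and product complexes then belong to $\mathscr C_i$. Conversely, by Proposition \ref{cott} and the remark immediately following it, a complex that is not in $\mathscr C_{\mathscr D}$ belongs to $\mathscr C_i$ for exactly one index $i$.

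The core of the proof is an induction along the path. The edge joining $y'$ to $x_1$ arises from a reaction between $y'$ and $x_1$ lying in a single subnetwork, say $\mathscr N_i$; hence $y', x_1 \in \mathscr C_i$. Now suppose $x_j \in \mathscr C_i$ for some $1 \le j \le q$. Because $x_j \notin \mathscr C_{\mathscr D}$, the index $i$ is the \emph{only} one whose complex set contains $x_j$. The edge from $x_j$ to its successor (either $x_{j+1}$, or the terminal vertex $y''$ when $j=q$) comes from a reaction having $x_j$ as one of its endpoints, so the subnetwork containing that reaction must have $x_j$ in its complex set, which forces that subnetwork to be $\mathscr N_i$. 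Consequently the successor complex also lies in $\mathscr C_i$, and the edge itself is an edge of $\mathscr N_i$. Iterating from $j=1$ to $j=q$ places every complex of the path, including $y''$, in $\mathscr C_i$, and every edge of the path in $\mathscr N_i$. Thus the given path is a path of $\mathscr N_i$, so $y'$ and $y''$ are connected in $\mathscr N_i$. I would then dispose of the degenerate case $q=0$ separately: there the path is the single edge between $y'$ and $y''$, whose reaction lies in one subnetwork $\mathscr N_i$ with $y', y'' \in \mathscr C_i$, so connectivity is immediate.

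The argument is essentially bookkeeping, so I do not anticipate a serious obstacle; the one point worth flagging is that it does not actually invoke the PBS hypothesis at all, since the non-commonality of the interior vertices $x_1,\dots,x_q$ is precisely the property the induction needs. The only genuinely delicate step is the inductive one, where the uniqueness of the subnetwork containing a non-common complex must be used to prevent the path from ``jumping'' into a different subnetwork at an interior vertex. I would also verify that ``connected'' is intended in the underlying undirected graph, as the Section 2 definitions specify, so that the direction of each reaction is irrelevant and the undirected path $y'-x_1-\cdots-x_q-y''$ genuinely witnesses connectivity within $\mathscr N_i$.
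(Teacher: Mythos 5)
Your proof is correct and follows essentially the same route as the paper's: anchor the path in the unique subnetwork $\mathscr N_i$ containing $x_1$, then propagate along the path using the fact that each non-common interior complex lies in exactly one subnetwork, forcing every edge (and hence $y''$) into $\mathscr N_i$. Your explicit treatment of the $q=0$ case and your observation that the PBS hypothesis is never actually invoked are both accurate refinements of the paper's argument, which is identical in substance.
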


\begin{proof}

Suppose $x_1$ is a complex in subnetwork $\mathscr N_i$. By definition, $x_1$ cannot be a complex in any other subnetwork. Hence, $y'-x_1$ is a subpath in $\mathscr N_i$. If $x_2$ is not a complex in $\mathscr N_i$, then the subpath $x_1-x_2$ cannot occur in $\mathscr N_i$ violating the assumption on $x_1$. So, $x_2$ is also a complex in $\mathscr N_i$. Similarly, it can be shown that $x_3, \cdots, x_q$ are also complexes in $\mathscr N_i$. Moreover, $y'-x_1-\cdots-x_q-y''$ is a path in $\mathscr N_i$ indicating that $y'$ and $y''$ are connected in $\mathscr N_i$. 
\end{proof}

\begin{lemma}\label{prop1.700}
	Let $\mathscr D: \mathscr N = \mathscr N_1\cup \cdots\cup \mathscr N_k$ be a \textnormal{PBS} decomposition. If $P:y-x_1-\cdots-x_q$ is a path in $\mathscr N$ such that $y$ is a common complex while $x_1,\cdots,x_q$ are not, then there is exactly one subnetwork $\mathscr N_i$ that contains $P$. 
\end{lemma}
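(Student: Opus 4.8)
The plan is to exploit two immediate consequences of the partition property of a decomposition. First, every reaction lies in exactly one subnetwork, since the $\mathscr R_i$ are pairwise disjoint. Second---and this is the key observation I would isolate before touching the path itself---if $x$ is a \emph{non-common} complex, then $x$ belongs to the complex set of exactly one subnetwork, and moreover \emph{every} reaction incident to $x$ (as reactant or as product) must lie in that same subnetwork. Indeed, any reaction with endpoint $x$ places $x$ in the complex set of its subnetwork, so if two such reactions lay in different $\mathscr R_i$ and $\mathscr R_j$, then $x$ would occur in both $\mathscr C_i$ and $\mathscr C_j$ and hence be common, contrary to assumption.

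For existence I would argue along the chain $P:y-x_1-\cdots-x_q$. Since $x_1,\dots,x_q$ are all non-common, each $x_j$ sits in a unique subnetwork. For $j\geq 2$ the interior edge $x_{j-1}-x_j$ is incident to both $x_{j-1}$ and $x_j$, so by the key observation it lies simultaneously in the unique subnetwork of $x_{j-1}$ and in that of $x_j$; hence these two subnetworks coincide. Iterating this identification along the path forces $x_1,\dots,x_q$ to share a single subnetwork $\mathscr N_i$. The one remaining edge $y-x_1$ is incident to $x_1$, so it too lies in $\mathscr N_i$. Thus every edge of $P$ belongs to $\mathscr R_i$, i.e.\ $\mathscr N_i$ contains $P$. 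This mirrors the inductive bookkeeping already used in Lemma~\ref{prop3000}, the only difference being that here just one endpoint is required to be common.

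Uniqueness is immediate from the partition: the path has at least the edge $y-x_1$, and this reaction belongs to exactly one block $\mathscr R_i$; any subnetwork containing $P$ must contain this reaction, and so must be $\mathscr N_i$.

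I do not expect a genuine obstacle here; the single point requiring care is that $P$ is a path in the \emph{underlying} (undirected) graph, so its edges are reactions taken without regard to orientation. The key observation must therefore be phrased for reactions incident to $x$ as either reactant or product, and the argument should be insensitive to the direction of travel along $P$.
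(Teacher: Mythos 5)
Your proof is correct and follows essentially the same route as the paper's: the paper proves this lemma by the same chain-walking argument as Lemma \ref{prop3000}, using the fact that a non-common complex lies in exactly one subnetwork to force each successive edge of the path into that same subnetwork. Your explicit handling of uniqueness via the partition of the reaction set is left implicit in the paper, but it is immediate and does not constitute a different approach.
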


\begin{proof}
(Similar to the proof of Lemma \ref{prop3000}.)
\end{proof}

Let a linkage class  that has a common complex be called $\mathscr C_{\mathscr D}$-\textbf{linkage class}. We remark that Lemma \ref{prop1.700} implies that every non-common complex of a $\mathscr C_{\mathscr D}$-linkage class of $\mathscr N$ belongs to a $\mathscr C_{\mathscr D}$-linkage class of some subnetwork $\mathscr N_i$. 

The following result serves as the {structure theorem for PBS decompositions} which is a generalization of statement (\textit{ii}) of the Structure Theorem for $\mathscr C^*$-decomposition (Theorem \ref{str_thm_for_C*}).

\begin{theorem}\label{str_thm_for_PMM}
	Let $\mathscr D:\mathscr N = \mathscr N_1 \cup \cdots\cup \mathscr N_k$ be a \textnormal{PBS} decomposition such that at least one linkage class of $\mathscr N$ does not contain a common complex. Let $\mathscr N'$ and $\mathscr N'_i$ be the networks obtained after removing from $\mathscr N$ and  $\mathscr N_i$, respectively, the $\mathscr C_{\mathscr D}$-linkage classes. Then, $\mathscr D':\mathscr N' = \mathscr N'_1\cup \cdots\cup \mathscr N'_k$ is a $\mathscr C$-decomposition if $\mathscr D'$ is non-trivial.
\end{theorem}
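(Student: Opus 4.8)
The plan is to verify the two defining requirements of a $\mathscr{C}$-decomposition in turn: first that $\mathscr{D}'$ is genuinely a decomposition of $\mathscr{N}'$, meaning the reaction sets $\mathscr{R}'_i$ of the $\mathscr{N}'_i$ partition the reaction set $\mathscr{R}'$ of $\mathscr{N}'$; and second that the surviving complex sets $\mathscr{C}'_i$ are pairwise disjoint. I will use throughout that deleting a linkage class deletes exactly its reactions and complexes, and that a complex is non-common exactly when every reaction of $\mathscr{N}$ incident to it lies in a single subnetwork. Since each $\mathscr{R}'_i \subseteq \mathscr{R}_i$ and the $\mathscr{R}_i$ are pairwise disjoint, the $\mathscr{R}'_i$ are automatically pairwise disjoint, so for the first requirement it suffices to establish $\mathscr{R}' = \bigcup_i \mathscr{R}'_i$ by proving the two inclusions.

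For $\mathscr{R}' \subseteq \bigcup_i \mathscr{R}'_i$, I would take $R \in \mathscr{R}'$ and let $\mathscr{N}_i$ be the unique subnetwork containing it; then $R$ lies in a linkage class $\mathscr{L}$ of $\mathscr{N}$ with no common complex. Because $\mathscr{N}_i \subseteq \mathscr{N}$, the linkage class of $\mathscr{N}_i$ carrying $R$ has all its complexes inside $\mathscr{L}$ and so inherits the absence of common complexes; hence $R$ is not deleted from $\mathscr{N}_i$ and $R \in \mathscr{R}'_i$. The reverse inclusion $\bigcup_i \mathscr{R}'_i \subseteq \mathscr{R}'$ is the crux. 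Assume $R \in \mathscr{R}'_i$ but, for contradiction, that the linkage class $\mathscr{L}$ of $\mathscr{N}$ containing $R$ is a $\mathscr{C}_{\mathscr{D}}$-linkage class. Its reactant complex $x$ is non-common, since it sits in a common-complex-free linkage class of $\mathscr{N}_i$; then the remark following Lemma \ref{prop1.700} applies to the non-common complex $x$ of the $\mathscr{C}_{\mathscr{D}}$-linkage class $\mathscr{L}$ and places $x$ in a $\mathscr{C}_{\mathscr{D}}$-linkage class of some subnetwork. As $x$ is non-common it lies in only one subnetwork, namely $\mathscr{N}_i$, so this $\mathscr{C}_{\mathscr{D}}$-linkage class is one of $\mathscr{N}_i$, contradicting that the linkage class of $\mathscr{N}_i$ containing $x$ has no common complex. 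Hence $R \in \mathscr{R}'$, and $\mathscr{D}'$ is a decomposition of $\mathscr{N}'$.

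The disjointness of the $\mathscr{C}'_i$ should then be immediate. Any common complex lying in $\mathscr{N}_i$ sits in a $\mathscr{C}_{\mathscr{D}}$-linkage class of $\mathscr{N}_i$, namely the one containing it, and is therefore deleted, so each $\mathscr{C}'_i$ consists solely of non-common complexes; and a non-common complex belongs to exactly one subnetwork. Thus for $i \neq j$ the sets $\mathscr{C}'_i$ and $\mathscr{C}'_j$ can share neither a common complex (none remain) nor a non-common one (each lives in a single subnetwork), giving $\mathscr{C}'_i \cap \mathscr{C}'_j = \varnothing$. The standing hypothesis that some linkage class of $\mathscr{N}$ has no common complex keeps $\mathscr{N}'$ non-empty, while the assumed non-triviality of $\mathscr{D}'$ leaves at least two non-empty subnetworks, so $\mathscr{D}'$ is a bona fide $\mathscr{C}$-decomposition.

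The hard part will be the reverse inclusion, which is exactly where the deletion carried out globally in $\mathscr{N}$ has to be matched against the deletions carried out locally in the individual $\mathscr{N}_i$: a priori a reaction might survive inside its own subnetwork, where no common complex is visible, yet belong to a $\mathscr{C}_{\mathscr{D}}$-linkage class of the full network. Ruling this out is precisely what the remark after Lemma \ref{prop1.700} provides, and everything else reduces to tracking which subnetwork a given non-common complex inhabits.
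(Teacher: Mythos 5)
Your proof is correct and follows essentially the same route as the paper's: both establish the partition property via the same two inclusions (the hard one resting on Lemma \ref{prop1.700}) and then deduce disjointness of the $\mathscr C'_i$ from the fact that all common complexes are deleted, so $\mathscr C'_i\subseteq \mathscr C_i\setminus\mathscr C_{\mathscr D}$. The only cosmetic difference is that where the paper re-runs the shortest-path-to-the-nearest-common-complex construction inline, you invoke the remark stated after Lemma \ref{prop1.700}, which packages exactly that argument.
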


The condition that $\mathscr N$ has a linkage class that has no common complex ensures that the decomposition obtained after the removal of the $\mathscr C_{\mathscr D}$-linkage classes is not going to be an empty decomposition. The worst possible scenario is that we get a trivial decomposition. We now give the proof of Theorem \ref{str_thm_for_PMM}.

\begin{proof}[Proof \textnormal{of Theorem \ref{str_thm_for_PMM}}]
    The length of $\mathscr D'$ is at least 2 since it is non-empty and non-trivial. Suppose $\mathscr N'=(\mathscr S',\mathscr C', \mathscr R')$ and $\mathscr N'_i=(\mathscr S'_i,\mathscr C'_i, \mathscr R'_i)$. First, we show that $\mathscr D'$ is a decomposition, i.e., $\{\mathscr R'_i\}$ is a partition of $\mathscr R'$. Clearly, $\mathscr R'_i$'s are pairwise disjoint. Let $R\in \mathscr R'$. Then, $R$ is not a reaction in any $\mathscr C_{\mathscr D}$-linkage class of $\mathscr N$. Since $R\in \mathscr R$ and  $\mathscr D$ is a decomposition, $R$ must be a reaction in exactly one subnetwork $\mathscr N_j$. If $R$ belongs to a $\mathscr C_{\mathscr D}$-linkage class of $\mathscr N_j$, then it also belongs to a $\mathscr C_{\mathscr D}$-linkage class of $\mathscr N$ which is a contradiction. Hence, $R$ occurs in $\mathscr N'_j$, i.e., $R\in \mathscr R_j'$ which further implies that $R\in \cup\; \mathscr R'_i$. Thus, $\mathscr R'\subseteq \cup\; \mathscr R'_i$.
	
	Now, we show that $\cup\; \mathscr R'_i \subseteq  \mathscr R'$. Let $R^*:x_1\rightarrow x_2$ be a reaction that is not contained in $\mathscr R'$, i.e., $R^*$ belongs to a $\mathscr C_{\mathscr D}$-linkage class of $\mathscr N$. We want to show that $R^*$ is not contained in any $\mathscr R'_i$. If any of $x_1$ and $x_2$ are common complexes, we are done. Suppose none of them are  common complexes. Let $y$ be the nearest common complex from $x_2$. Let $P:x_1-x_2-z_1-z_2-\cdots-z_p-y$, where $z_1, z_2, \cdots, z_p\notin \mathscr C_{\mathscr D}$, be the shortest path from $x_1$ to $y$. Suppose $R^*$ occurs in subnetwork $\mathscr N_h$. Then, $x_1$ and $x_2$ also occur in $\mathscr N_h$. By Lemma \ref{prop1.700}, $P$ is a path in $\mathscr N_h$ which means that $R^*$ belongs to a $\mathscr C_{\mathscr D}$-linkage class of $\mathscr N_h$. So, $R^*\notin \mathscr R'_h$ or more generally, $R^*$ is not contained in any $\mathscr R'_i$. Hence, $\mathscr D'$ is a decomposition.
	
	Finally, we show that for $i\neq j$, $\mathscr C'_i$ and $\mathscr C'_j$ are disjoint. Note that $\mathscr C'_i \subseteq \mathscr C_i$ and $\mathscr C'_j \subseteq \mathscr C_j$. If $\mathscr C_i \cap \mathscr C_j=\varnothing$, we are done. Suppose $\mathscr C_i \cap \mathscr C_j$ is non-empty. By definition, $\mathscr C'_i \subseteq \mathscr C_i\setminus \mathscr C_{\mathscr D}$ and $\mathscr C'_j \subseteq \mathscr C_j\setminus \mathscr C_{\mathscr D}$. So, $\mathscr C'_i \cap \mathscr C'_j\subseteq (\mathscr C_i\setminus \mathscr C_{\mathscr D}) \cap (\mathscr C_j\setminus \mathscr C_{\mathscr D}) = \varnothing$. Therefore, $\mathscr D'$ is a $\mathscr C$-decomposition.
\end{proof}

\begin{example} \label{haha} Consider the decomposition $\mathscr N=\mathscr N_1\cup \mathscr N_2\cup \mathscr N_3 \cup \mathscr N_4$ with the following reaction graphs.

\begin{equation*}
\begin{tikzcd}
\mathscr N:   & x_1 \arrow[d, shift left]                                  &               & x_4 \arrow[d, shift right]                                 & x_6 \arrow[r]                                    & x_7                                   &                                &                                       \\
              & y_1 \arrow[u, shift left] \arrow[d, shift right] \arrow[r] & x_3           & y_2 \arrow[l] \arrow[u, shift right] \arrow[d, shift left] & x_9                                              & x_8 \arrow[l] \arrow[ld, shift right] &                                &                                       \\
              & x_2 \arrow[u, shift right]                                 &               & x_5 \arrow[u, shift left]                                  & x_{10} \arrow[ru, shift right]                   &                                       &                                &
\end{tikzcd}
\end{equation*}

\begin{equation*}
\begin{tikzcd}
\mathscr N_1: & y_1 \arrow[d]                                              & \mathscr N_2: & x_1 \arrow[d, shift left]                                  & x_4 \arrow[d, shift right]                       & \mathscr N_3:                         & x_6 \arrow[r]                  & x_7                                   \\
              & x_3                                                        &               & y_1 \arrow[u, shift left] \arrow[d, shift right]           & y_2 \arrow[u, shift right] \arrow[d, shift left] & \mathscr N_4:                         & x_9                            & x_8 \arrow[l] \arrow[ld, shift right] \\
              & y_2 \arrow[u]                                              &               & x_2 \arrow[u, shift right]                                 & x_5 \arrow[u, shift left]                        &                                       & x_{10} \arrow[ru, shift right] &                                      
\end{tikzcd}
\label{cecil33}
\end{equation*}

\noindent The decomposition is a $\text{PBS}_2(0,2)$ with $\mathscr C_{\mathscr D}=\{y_1,y_2\}$ and $k(c)=2$ (see $\mathscr N_1$ and $\mathscr N_2$). $\mathscr N$ has two linkage classes that do not contain any common complex. Removing the $\mathscr C_{\mathscr D}$-linkage classes from the parent network and the subnetworks yields the decomposition $\mathscr N'=\mathscr N'_3\cup \mathscr N'_4$ given by the following. 
 
\begin{equation}
\begin{tikzcd}
\mathscr N': & x_6 \arrow[r]                  & x_7                                   & \mathscr N'_3: & x_6 \arrow[d] & \mathscr N'_4 & x_9                                  \\
             & x_9                            & x_8 \arrow[l] \arrow[ld, shift right] &                & x_7           &               & x_8 \arrow[u] \arrow[d, shift right] \\
             & x_{10} \arrow[ru, shift right] &                                       &                &               &               & x_{10} \arrow[u, shift right]       
\end{tikzcd} 
\label{cecil34}
\end{equation}

\noindent This decomposition is a $\mathscr C$-decomposition.
\end{example}

\subsection{The $\mathscr C_{c,d}$ subclasses of the PBS class}

\begin{definition}
For a nonnegative integer $d$, the class of pairwise binary-sized decompositions $\mathscr {PBS}_d$ is the union of all sets $\text{PBS}_d(b,c)$ with $0\leq b\leq c \leq d$.
\end{definition} 

\noindent We are interested in identifying subclasses of $\mathscr {PBS}_d$ which possess similar properties with respect to incidence independence deriving from the structure of their sets of common complexes. 

\begin{example}
$\mathscr C^*$-decompositions are contained in the subclass $\text{PBS}_0\cup \text{PBS}_1(0,1)\cup \text{PBS}_1(1,1)$. All decompositions in this subclass are incidence independent as will be shown in the next section. The $\mathscr C^*$-decompositions consist of the special case when the single $\mathscr C_{\mathscr D}$ element $y$ is the $0$ complex.
\end{example}

The above example motivates the following definition of subclasses with $b = 0$ or $b=c$.

\begin{definition}
The subclass $\mathscr C_{c,d}$ is the union $\text{PBS}_0\cup \text{PBS}_d(0,c)\cup \text{PBS}_d(c,c)$.    
\end{definition}

\begin{example}
The subclass $\mathscr C_{0,0}$  is the class of $\mathscr C$-decompositions.
\end{example}

\begin{example}
The subclass $\mathscr C_{1,1}$ contains the class of $\mathscr C^*$-decompositions as well as the decomposition of Schmitz's carbon cycle subnetwork given in Example \ref{carbon cycle model}.
\end{example}

\section{The subclass $\mathscr C_{d,d}$ of pairwise min-max decompositions}

In this section, we we look into the subclass $\mathscr C_{d,d}$ of PBS decompositions which we call class of pairwise min-max decompositions. 

\subsection{The CCF for pairwise min-max decompositions}

We define the set of pairwise min-max decompositions as follows.

\begin{definition}{\label{def1.3}}
	The set of \textbf{pairwise min-max} ({PMM}) \textbf{decompositions} is defined for a set of common complexes $\mathscr C_{\mathscr D}$ with $d$ elements by either $\mathscr C_i \cap \mathscr C_j = \varnothing$ (minimal value) or $\mathscr C_i \cap \mathscr C_j = \mathscr C_{\mathscr D}$ (maximal value). 
\end{definition}

The PIM of a PMM decomposition is the matrix $[m_{ij}]$ where
\begin{equation}
    m_{ij} = \begin{cases} d & \mbox{if } i< j \leq k(c)\\ 0 & \mbox{if } j>k(c) \end{cases}.
\end{equation}

A specific case of Proposition \ref{3C_cri_PBS} gives a 3C-criterion for the incidence independence of the subclass $\text{PBS}_d(0,d)\cup \text{PBS}_d(d,d)$: 
\begin{equation}
    \displaystyle \sum \ell_i - \ell = d(k(c)-1).
\end{equation}

\subsection{A fundamental property of PMM decompositions}

The following proposition provides a sufficient condition for the incidence independence of a PMM decomposition where each of the CRN's linkage classes contains at most one common complex. This result allows us to determine the incidence independence of a PMM decomposition by just observing the structure of the CRN's reaction graph.

\begin{proposition}{\label{prop7}}
	Let $\mathscr D: \mathscr N = \mathscr N_1\cup \cdots\cup \mathscr N_k$ be a \textnormal{PMM} decomposition such that every linkage class of $\mathscr N$ contains at most one common complex. Then, $\mathscr D$ is incidence independent.
\end{proposition}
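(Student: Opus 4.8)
The plan is to verify the 3C-criterion of Proposition \ref{3C_cri_PBS}, which for a PMM decomposition specializes to the equality $\sum_{i=1}^k \ell_i - \ell = d(k(c)-1)$ recorded above; establishing this identity under the hypothesis finishes the proof. Here $k(c)$ counts the subnetworks carrying the common complexes, which in a PMM decomposition are exactly those $\mathscr N_i$ with $\mathscr C_{\mathscr D}\subseteq \mathscr C_i$, while the remaining $k-k(c)$ subnetworks contain no common complex. First I would organize the count of $\sum \ell_i$ by grouping subnetwork linkage classes according to the linkage classes of $\mathscr N$: since a linkage class of any $\mathscr N_i$ is connected in $\mathscr N_i$, hence in $\mathscr N$, it lies inside a unique linkage class of $\mathscr N$. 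Thus $\sum \ell_i$ equals the sum over all linkage classes $L$ of $\mathscr N$ of the number $\nu(L)$ of subnetwork linkage classes contained in $L$, and it suffices to evaluate $\nu(L)$. By hypothesis each $L$ is either a $\mathscr C_{\mathscr D}$-linkage class carrying a single common complex $y$ (there are exactly $d$ of these), or carries no common complex.

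For a linkage class $L$ with no common complex, a connectivity argument gives $\nu(L)=1$: adjacent complexes are both non-common, hence each belongs to exactly one subnetwork, and their shared reaction forces them into the same one; propagating along the connected $L$ places all its reactions in a single subnetwork, contributing no excess.

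The main work is the case of a $\mathscr C_{\mathscr D}$-linkage class $L$ with its unique common complex $y$. Here I would analyze the connected components (the ``petals'') of the graph obtained by deleting $y$ from $L$, every complex of which is non-common. Using the same connectivity argument together with Lemma \ref{prop1.700}, each petal, with the reactions joining it to $y$, lies entirely within a single subnetwork $\mathscr N_i$, and that subnetwork must contain $y$, so $i\le k(c)$. Conversely, for each of the $k(c)$ subnetworks containing $y$, the membership $y\in \mathscr C_i$ forces $y$ to occur in some reaction of $\mathscr N_i$; because $L$ has no second common complex, that reaction joins $y$ to a petal, so each such subnetwork receives at least one petal. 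Since all petals landing in a given $\mathscr N_i$ attach to the common vertex $y$, they merge into exactly one linkage class of $\mathscr N_i$, while the $k-k(c)$ subnetworks not containing $y$ receive none. Hence $\nu(L)=k(c)$, an excess of $k(c)-1$ over the single class $L$ in $\mathscr N$. Summing over all linkage classes yields $\sum \ell_i = d\,k(c) + (\ell-d)$, i.e.\ $\sum \ell_i - \ell = d(k(c)-1)$, the required criterion.

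The step I expect to be the main obstacle is the bookkeeping for a $\mathscr C_{\mathscr D}$-linkage class: one must show simultaneously that no petal escapes into a subnetwork beyond $\mathscr N_1,\dots,\mathscr N_{k(c)}$ (which would wrongly place $y$ in a subnetwork having no common complex) and that each of those $k(c)$ subnetworks genuinely receives a petal. Both facts hinge on the hypothesis that each linkage class carries \emph{at most one} common complex, which is precisely what guarantees that every reaction incident to $y$ runs to a non-common complex. This ensures the petal decomposition accounts for all reactions of $L$ and keeps the local count $\nu(L)$ clean.
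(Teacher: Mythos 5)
Your proof is correct, and it verifies the same identity the paper does --- the 3C-criterion of Proposition \ref{3C_cri_PBS} in its PMM form $\sum_{i=1}^k \ell_i - \ell = d(k(c)-1)$ --- but it establishes that identity by a genuinely different route. The paper deletes the $\mathscr C_{\mathscr D}$-linkage classes from $\mathscr N$ and the $\mathscr N_i$, invokes the structure theorem for PBS decompositions (Theorem \ref{str_thm_for_PMM}) to recognize what remains as a $\mathscr C$-decomposition, applies Theorem \ref{str_thm_for_C} to get $\sum \ell_i' = \ell'$, and then must separately handle the degenerate cases where the residual decomposition is trivial or empty. You instead perform a direct double count: every linkage class of every $\mathscr N_i$ sits inside a unique linkage class $L$ of $\mathscr N$, so $\sum \ell_i = \sum_L \nu(L)$, and you evaluate $\nu(L)$ locally --- $\nu(L)=1$ when $L$ has no common complex, and $\nu(L)=k(c)$ when $L$ is a $\mathscr C_{\mathscr D}$-linkage class, via the petal analysis around its unique common complex $y$. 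Your use of the PMM hypothesis is exactly where it is needed: it guarantees that the subnetworks containing $y$ are precisely the $k(c)$ subnetworks containing all of $\mathscr C_{\mathscr D}$, so the local count $\nu(L)=k(c)$ is uniform over all $d$ such linkage classes, and the sum $d\,k(c)+(\ell-d)$ matches the criterion. What each approach buys: the paper's proof is short given its structural machinery (which has independent interest elsewhere in the paper, e.g., in Propositions \ref{y1_y2} and \ref{prop2.7}), whereas yours is self-contained, avoids the empty/trivial case bookkeeping entirely, and makes the per-linkage-class contribution explicit; the price is that your petal argument silently re-proves, in local form, the content of Lemma \ref{prop1.700} and the relevant part of Theorem \ref{str_thm_for_PMM}. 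Both proofs ultimately rest on the same propagation idea as Lemmas \ref{prop3000} and \ref{prop1.700}: a path of non-common complexes cannot leave the subnetwork it starts in.
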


\begin{proof}
    Let $\mathscr N_1, \mathscr N_2, \cdots, \mathscr N_{k(c)}$ be the subnetworks containing $\mathscr C_{\mathscr D}$. Also, let $\mathscr N'$ and $\mathscr N'_i$ be the networks obtained after removing from $\mathscr N$ and  $\mathscr N_i$, respectively, the $\mathscr C_{\mathscr D}$-linkage classes. Let $\mathscr D': \mathscr N' = \mathscr N'_1\cup \cdots\cup \mathscr N'_k$. Clearly, $\mathscr N'$ has $\ell-d$ number of linkage classes. On the other hand, Lemma \ref{prop1.700} suggests that $\mathscr N'_i$ has $\ell_i-d$ number of linkage classes for $i=1,\cdots,k(c)$. 
    
    Suppose $\mathscr D'$ is non-empty. Then, $\mathscr D'$ is either a trivial decomposition or a $\mathscr C$-decomposition (Theorem \ref{str_thm_for_PMM}). The former immediately implies that 
        \begin{equation}\label{okkk}
       \displaystyle \sum^{k(c)}_{i=1} (\ell_i-d)+\sum^k_{i=k(c)+1}\ell_i= \ell-d.
    \end{equation}
    \noindent Note that (\ref{okkk}) still holds if $\mathscr D'$ is a $\mathscr C$-decomposition according to Theorem \ref{str_thm_for_C}. Then,
    \begin{equation}\label{basta}
    \displaystyle \sum^{k(c)}_{i=1}\ell_i-d(k(c))+\sum^k_{i=k(c)+1}\ell_i=\ell-d \Leftrightarrow \displaystyle \sum^k_{i=1}\ell_i-\ell=d(k(c)-1)    
    \end{equation}

    Now, suppose $\mathscr D'$ is empty. Then, it means that every linkage class in $\mathscr N$ and $\mathscr N_i$'s is $\mathscr C_{\mathscr D}$-linkage class. It follows that
    \begin{equation}\label{halahala}
       \displaystyle \sum^{k}_{i=1} (\ell_i-d)=0 \text{\;and\;} \ell-d=0 \Rightarrow \sum^{k}_{i=1} (\ell_i-d)=\ell-d \Leftrightarrow \displaystyle \sum^k_{i=1}\ell_i-\ell=d(k-1).
    \end{equation}
    
\noindent By Proposition \ref{3C_cri_PBS}, (\ref{basta}) and (\ref{halahala}) suggest that, in any case, $\mathscr D$  is incidence independent. 

\end{proof}

It immediately follows from Proposition \ref{prop7} that PMM decompositions with $|\mathscr C_{\mathscr D}|\leq 1$ behave in the same way as $\mathscr C^*$-decompositions and are all incidence independent. 

\begin{example}
	The decomposition of the subnetwork of the of the Schmitz's model given in Example \ref{carbon cycle model} is a PMM decomposition with $\mathscr C_{\mathscr D}=\{M_1\}$ which means that, indeed, it is incidence independent.    
	
\end{example}

\begin{example} Consider $\mathscr N$ whose reaction graph is given by
 
    \begin{equation}
\begin{tikzcd}
\mathscr N:   & y_1 \arrow[r] \arrow[rd] \arrow[d] & x_1           & y_2 \arrow[r, shift left] \arrow[d, shift right] & x_4 \arrow[l, shift left]  & y_3 \arrow[r] \arrow[d, shift left] & x_6                        & x_9 \arrow[d] \\
              & x_3                                & x_2 \arrow[u] & x_5 \arrow[u, shift right]                       &                            & x_8 \arrow[u, shift left]           & x_7 \arrow[u]              & x_{10}                  
\end{tikzcd}
\label{cecil35}
\end{equation}

\noindent $\mathscr N$ can be decomposed as follows.
 
    \begin{equation}
\begin{tikzcd}
\mathscr N_1: & y_1 \arrow[r] \arrow[rd]           & x_1           & y_2 \arrow[r, shift left]                        & x_4 \arrow[l, shift left]  & y_3 \arrow[r]                       & x_6                        & x_9 \arrow[d] \\
              &                                    & x_2 \arrow[u] &                                                  &                            &                                     & x_7 \arrow[u]              & x_{10}        \\
\mathscr N_2: & y_1 \arrow[r]                      & x_3           & y_2 \arrow[r, shift right]                       & x_5 \arrow[l, shift right] & y_3 \arrow[r, shift right]          & x_8 \arrow[l, shift right] &              
\end{tikzcd}
\label{cecil36}
\end{equation}

\noindent $\mathscr N=\mathscr N_1 \cup \mathscr N_2$ is a PMM decomposition with $\mathscr C_{\mathscr D}=\{y_1,y_2,y_3\}$. Notice that each linkage class of $\mathscr N$ contains at most one element from $\mathscr C_{\mathscr D}$. Hence, by Proposition \ref{prop7}, this decomposition is incidence independent.
\end{example}

\subsection{Special properties of the set $\text{PBS}_d(d,d)$ of $\mathscr C_{\mathscr D}$-decompositions}

We now discuss a PBS decomposition class where every subnetwork contains the whole $\mathscr C_{\mathscr D}$.

\begin{definition}{\label{def1.2}}
	For any subset $\mathscr C_{\mathscr D}$ of $\mathscr C$ with $d$ elements, the $\mathscr C_{\mathscr D}$-\textbf{decomposition} is defined by $\mathscr C_i \cap \mathscr C_j = \mathscr C_{\mathscr D}$.
\end{definition}

\noindent A special case of Proposition \ref{3C_cri_PBS} gives $\displaystyle \sum \ell_i - \ell =d(k-1)$ - the 3C-criterion for incidence independence of $\mathscr C_{\mathscr D}$-decompositions. 

The incidence independence of a $\mathscr C_{\mathscr D}$-decomposition where every linkage class of the parent network contains at most one common complex is guaranteed by Proposition \ref{prop7}. We now investigate the case where a linkage class of $\mathscr N$ contains more than one common complex. 

\begin{proposition}\label{y1_y2}
	Let $\mathscr D: \mathscr N = \mathscr N_1\cup \cdots\cup \mathscr N_k$ be a $\mathscr C_{\mathscr D}$-decomposition with $\mathscr C_{\mathscr D}=\{y_1,y_2\}$ and $\ell>2$. If there is at most one path in $\mathscr N$ that connects  $y_1$ and $y_2$, then  $\mathscr D$ is incidence independent.
\end{proposition}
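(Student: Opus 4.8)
The plan is to use the 3C-criterion (Proposition \ref{3C_cri_PBS}), which for a $\mathscr C_{\mathscr D}$-decomposition with $d=|\mathscr C_{\mathscr D}|=2$ reduces to verifying that $\sum_{i=1}^k \ell_i - \ell = 2(k-1)$, since here $h_i = d = 2$ for every subnetwork (each $\mathscr N_i$ contains both $y_1$ and $y_2$). So the entire problem becomes a counting argument on linkage classes. First I would split the network's linkage classes into two types: those that contain neither $y_1$ nor $y_2$ (the non-$\mathscr C_{\mathscr D}$-linkage classes), and those that contain at least one of them. The former behave trivially — by Theorem \ref{str_thm_for_PMM} and Lemma \ref{prop1.700}, the reactions in such linkage classes are distributed into the subnetworks exactly as in a $\mathscr C$-decomposition, so they contribute equally to $\sum \ell_i$ and to $\ell$ and cancel out of the criterion. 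The real content lies entirely in how $y_1$ and $y_2$ are connected within $\mathscr N$ and within each $\mathscr N_i$.

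Next I would use the hypothesis $\ell > 2$ together with ``at most one path connects $y_1$ and $y_2$ in $\mathscr N$'' to pin down the linkage class structure around the common complexes. The dichotomy is: either $y_1$ and $y_2$ lie in the \emph{same} linkage class of $\mathscr N$ (connected by exactly one path $P$), or they lie in \emph{different} linkage classes (no path, and each is its own $\mathscr C_{\mathscr D}$-linkage class). In the first case, since $y_1$ and $y_2$ both occur in every $\mathscr N_i$ and the unique connecting path $P$ passes through non-common complexes that, by Lemma \ref{prop1.700} (applied to the subpaths hanging off $y_1$ and $y_2$), must lie in a single subnetwork, the path $P$ can be realized in at most one $\mathscr N_i$. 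Hence in that one subnetwork $y_1$ and $y_2$ sit in one linkage class, while in the remaining $k-1$ subnetworks they are separated into two linkage classes. The heart of the argument is therefore a careful bookkeeping: in $\mathscr N$ the pair $\{y_1,y_2\}$ accounts for one linkage class, whereas across the $k$ subnetworks it accounts for $1 + 2(k-1) = 2k-1$ linkage classes, giving an excess of exactly $2k-2 = 2(k-1) = d(k-1)$, which is precisely the criterion.

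In the second case (different linkage classes, no connecting path), $y_1$ and $y_2$ contribute two linkage classes to $\mathscr N$, and since each $\mathscr N_i$ contains both as separate linkage classes, they contribute $2k$ across the subnetworks, for an excess of $2k - 2 = 2(k-1)$ again matching $d(k-1)$. Both cases yield the same equality, so by the 3C-criterion $\mathscr D$ is incidence independent. The main obstacle I anticipate is making the ``realized in at most one subnetwork'' step fully rigorous: I must rule out the possibility that the unique path $P$ in $\mathscr N$ corresponds to $y_1$ and $y_2$ being connected in \emph{two or more} subnetworks via different sets of reactions, which would inflate the subnetwork linkage-class count. This is exactly where Lemma \ref{prop1.700} does the work — because the interior vertices of $P$ are non-common complexes, each such vertex (and the incident reactions) belongs to a single subnetwork, so the connectivity of $y_1$ to $y_2$ cannot be duplicated; uniqueness of the path in $\mathscr N$ forces uniqueness of the subnetwork in which they remain connected. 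I would also double-check the edge effects when $\ell > 2$ guarantees the existence of at least one non-$\mathscr C_{\mathscr D}$-linkage class, ensuring the reduction via Theorem \ref{str_thm_for_PMM} is applicable and that we are genuinely in the PMM/$\mathscr C_{\mathscr D}$ setting rather than a degenerate one.
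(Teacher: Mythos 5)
Your proposal is correct and takes essentially the same route as the paper's proof: both reduce to the 3C-criterion with target $2(k-1)$, split into the no-path and unique-path cases, use Lemma \ref{prop3000}/Lemma \ref{prop1.700} together with uniqueness to show $y_1$ and $y_2$ remain connected in exactly one subnetwork, and count linkage classes via the structure theorem (Theorem \ref{str_thm_for_PMM}) and Theorem \ref{str_thm_for_C}. The only cosmetic difference is that the paper dispatches the no-path case by citing Proposition \ref{prop7}, whereas you redo that count directly.
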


\begin{proof}
	If there is no path in $\mathscr N$ connecting $y_1$ and $y_2$, then $y_1$ and $y_2$ must be in separate linkage classes. So, By Proposition \ref{prop7},  $\mathscr D$ is incidence independent. 
	
	Suppose that in $\mathscr N$, there is exactly one path from $y_1$ to $y_2$. If such path is either of the reactions $y_1\rightarrow y_2$ or $y_2\rightarrow y_1$, then it will be contained in exactly one subnetwork. Now, if the path is given by $P: y_1-a_1-\cdots-a_q-y_2$ where $a_1,\cdots,a_q$ are not common complexes. Then, by Lemma \ref{prop3000}, there will be a subnetwork $\mathscr N_j$ that contains $P$. The uniqueness of $P$ indicates that no other subnetwork will contain it, i.e., in any other subnetwork, $y_1$ and $y_2$ are in separate linkage classes.
	
	Let $\mathscr N'$ and $\mathscr N'_i$ be the networks obtained after removing from $\mathscr N$ and  $\mathscr N_i$, respectively, the linkage classes that contain $y_1$ or $y_2$. Let $\mathscr D': \mathscr N' = \mathscr N'_1\cup \cdots\cup \mathscr N'_k$. From Lemma \ref{prop1.700}, $\mathscr N'_j$ has $\ell_j-1$ number of linkage classes while every other subnetwork has $\ell_i-2$. If $\mathscr D'$ is trivial, we have
	\begin{equation}\label{hold}
	    (\ell_j - 1) + \displaystyle \sum_{i\neq j} (\ell_i-2)=\ell-1
	\end{equation}
	
	If $\mathscr D'$ is non-trivial, by Theorem \ref{str_thm_for_PMM}, it is a $\mathscr C$-decomposition and (\ref{hold}) holds. In any case, $\displaystyle \sum^k_{i=1}\ell_i - \ell = 2(k-1)$ which means that by Proposition \ref{3C_cri_PBS}, $\mathscr D$ is incidence independent.
\end{proof}

	For common complexes $y'$ and $y''$, let $P_{y'}^{y''}$ denotes a path from $y'$ to $y''$ whose ``internal complexes'' are non-common complexes. $P_{y'}^{y''}$ is the path $y'-y''$ if there are no internal complexes. A generalization of Proposition \ref{y1_y2} is given in the following result.

\begin{theorem}\label{prop2.6}
	Let $\mathscr D: \mathscr N = \mathscr N_1\cup \cdots\cup \mathscr N_k$ be a $\mathscr C_{\mathscr D}$-decomposition such that $\mathscr C_{\mathscr D}=\{y_1,y_2,\cdots,y_d\}$ and $\ell>1$. If there is exactly one path in $\mathscr N$ that connects  $y_1,y_2, \cdots, y_d$ say $P_{y_1}^{y_2}-\cdots-P_{y_{d-1}}^{y_{d}}$ such that $P_{y_i}^{y_{i+1}}$ contains at least one non-common complex, then $\mathscr D$ is incidence independent.
\end{theorem}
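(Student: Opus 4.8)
The plan is to verify the 3C-criterion of Proposition~\ref{3C_cri_PBS} in the form specialized to $\mathscr{C}_{\mathscr{D}}$-decompositions. Since $\mathscr{C}_{\mathscr{D}} = \mathscr{C}_i \cap \mathscr{C}_j \subseteq \mathscr{C}_i$ for every $i$, each subnetwork contains all $d$ common complexes, so $h_i = d$ for all $i$ and the criterion reduces to showing $\sum_{i=1}^k \ell_i - \ell = d(k-1)$. As in the proofs of Proposition~\ref{y1_y2} and Proposition~\ref{prop7}, I would establish this equality by removing the $\mathscr{C}_{\mathscr{D}}$-linkage classes and counting linkage classes before and after.

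First I would record the structural consequences of the hypothesis. The single path $P = P_{y_1}^{y_2} - \cdots - P_{y_{d-1}}^{y_d}$ connects all common complexes, so $y_1,\dots,y_d$ lie in one linkage class of $\mathscr{N}$; hence $\mathscr{N}$ has exactly one $\mathscr{C}_{\mathscr{D}}$-linkage class, and since $\ell>1$ some linkage class carries no common complex, which (via Theorem~\ref{str_thm_for_PMM}) guarantees that the decomposition $\mathscr{D}'$ obtained after removal is non-empty. The crucial point is that each segment $P_{y_j}^{y_{j+1}}$, being required to contain a non-common complex, lives entirely in a single subnetwork: a non-common complex belongs to a unique subnetwork, hence so do all reactions incident to it, and so the whole segment connects $y_j$ to $y_{j+1}$ in that one subnetwork by Lemma~\ref{prop3000}. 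By uniqueness of $P$, no subnetwork connects $y_j$ to $y_{j+1}$ by any other route.

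Next I would count. Let $S_i \subseteq \{1,\dots,d-1\}$ collect the indices of the segments contained in $\mathscr{N}_i$; as the segments are distributed among the subnetworks, $\sum_i |S_i| = d-1$. Because the only connections among common complexes inside $\mathscr{N}_i$ are through segments in $S_i$, the set $\{y_1,\dots,y_d\}$ breaks into exactly $d-|S_i|$ connected runs there, so $\mathscr{N}_i$ has $d-|S_i|$ $\mathscr{C}_{\mathscr{D}}$-linkage classes. After removal, $\mathscr{N}'$ has $\ell-1$ linkage classes and each $\mathscr{N}'_i$ has $\ell_i-(d-|S_i|)$. Since $\mathscr{D}'$ is either trivial or, by Theorem~\ref{str_thm_for_PMM}, a $\mathscr{C}$-decomposition, Theorem~\ref{str_thm_for_C} yields $\sum_i [\ell_i-(d-|S_i|)] = \ell-1$. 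Substituting $\sum_i |S_i| = d-1$ and rearranging gives $\sum_i \ell_i - \ell = kd-(d-1)-1 = d(k-1)$, whence Proposition~\ref{3C_cri_PBS} forces incidence independence.

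I expect the main obstacle to be the linkage-class count $d-|S_i|$, i.e.\ rigorously justifying that within each subnetwork the common complexes are connected exactly according to which segments of $P$ it contains. This rests on two facts that must be combined with care: that each segment is confined to one subnetwork (here the non-common-complex hypothesis is essential, since a direct or reversible reaction between two common complexes could otherwise appear in two subnetworks and corrupt the count), and that $P$ being the unique connecting path forbids any alternative connection among common complexes inside a subnetwork.
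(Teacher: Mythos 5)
Your proposal is correct, but it follows a genuinely different route from the paper. The paper proves Theorem \ref{prop2.6} by induction on $d$: the base case is Proposition \ref{y1_y2}, and the inductive step deletes the last common complex $y_{d+1}$ by removing every path through $y_{d+1}$ whose other complexes are non-common, then verifies (at some length) that the resulting covering is again a $\mathscr C_{\mathscr D}$-decomposition with common complexes $\{y_1,\dots,y_d\}$ and a unique connecting path, applies the induction hypothesis, and finally compares linkage-class counts before and after the deletion. You instead give a direct, non-inductive count: since every subnetwork contains all of $\mathscr C_{\mathscr D}$, the 3C-criterion of Proposition \ref{3C_cri_PBS} reduces to $\sum \ell_i-\ell=d(k-1)$; each segment $P_{y_j}^{y_{j+1}}$ is confined to a single subnetwork by Lemma \ref{prop3000} (here your observation that the non-common internal complex is what blocks the counterexample of a reversible pair split across two subnetworks is exactly right, cf.\ Example \ref{sim_exa}); the bookkeeping sets $S_i$ with $\sum_i|S_i|=d-1$ then give $d-|S_i|$ as the number of $\mathscr C_{\mathscr D}$-linkage classes of $\mathscr N_i$, and one application of Theorem \ref{str_thm_for_PMM} together with Theorem \ref{str_thm_for_C} yields the identity. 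Your argument is in effect the uniform generalization of the paper's own treatment of the small cases (Propositions \ref{y1_y2} and \ref{y1_y2_y3}, where the case analysis over which subnetworks host the segments is precisely your $S_i$ bookkeeping), and it buys a shorter proof that avoids the paper's most laborious step, namely re-verifying the hypotheses for the reduced network $\mathscr N^*$; what the induction buys the paper is that the delicate uniqueness argument only ever has to be made for one removed complex at a time. Note that both proofs rest on the same unproved-in-detail graph-theoretic point, which you correctly flag as the main obstacle: that uniqueness of $P$ forbids any alternative connection among common complexes inside a subnetwork (any such connection would produce, in $\mathscr N$, a second path visiting all of $y_1,\dots,y_d$); the paper's induction does not make this step any more rigorous than your direct version, so your proof is no less complete than the published one.
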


\begin{proof}
We prove this by induction on the number $d$ of elements of $\mathscr C_{\mathscr D}$. The case where $d=2$ is given in Proposition \ref{y1_y2}. For the hypothesis of induction, we assume that every $\mathscr C_{\mathscr D}$-decomposition of a CRN with $|\mathscr C_{\mathscr D}|= d$ such that the common complexes are connected in the network by a unique path (having a property described in the proposition) is incidence independent. 

Let $\mathscr D': \mathscr N' = \mathscr N'_1\cup \cdots\cup \mathscr N'_{k'}$ be a $\mathscr C_{\mathscr D}$-decomposition with $\mathscr C_{\mathscr D}=\{y_1,y_2,\cdots, y_{d+1}\}$. Suppose that in $\mathscr N'$, there is exactly one path that connects $y_1,y_2,\cdots,y_{d+1}$ say $P_{y_1}^{y_2}-\cdots-P_{y_{d-1}}^{y_{d}}-P_{y_d}^{y_{d+1}}$ with the property that $P_{y_i}^{y_{i+1}}$ contains at least one non-common complex. Denote by $P({y_{d+1}})$ a path with complexes consist only of $y_{d+1}$ and non-common complexes, i.e., a path that passes through $y_{d+1}$ whose other complexes are non-common complexes. Some of the paths $P({y_{d+1}})$ (see the broken segments that passes through ${y_{d+1}}$) are shown in (\ref{eqn2.1}). 
 
\begin{equation}
\label{eqn2.1}
\begin{tikzcd}
                                 &                           &                                        & x_1                               & x_2                                                                                                                                                                           & x_3 \\
P_{y_1}^{y_2} \arrow[r, no head] & \cdots \arrow[r, no head] & P_{y_{d-1}}^{y_{d}} \arrow[r, no head] & \cdots \arrow[r, no head, dashed] & y_{d+1} \arrow[u, no head, dashed] \arrow[d, no head, dashed] \arrow[lu, no head, dashed] \arrow[ld, no head, dashed] \arrow[ru, no head, dashed] \arrow[rd, no head, dashed] &     \\
                                 &                           &                                        & x_4                               & x_5                                                                                                                                                                           & x_6
\end{tikzcd}
\end{equation}

Let $\mathscr N^*=(\mathscr S^*,\mathscr C^*,\mathscr R^*)$ and $\mathscr N_i^*=(\mathscr S^*_i,\mathscr C^*_i,\mathscr R^*_i)$ be the networks obtained after removing from $\mathscr N'$ and $\mathscr N'_i$ every path $P(y_{d+1})$. We now claim that $\mathscr D^*:\mathscr N^* = \mathscr N^*_1\cup \cdots\cup \mathscr N^*_{k'}$ is a $\mathscr C_{\mathscr D}$-decomposition with $\mathscr C_{\mathscr D}=\{y_1,y_2,\cdots,y_d\}$. Let $Q$ be the set of all paths $P(y_{d+1})$ and for convenience, let us call a path contained in $Q$ a $Q$-path. First, we show that $\{\mathscr R^*_i\}$ is a partition of $\mathscr R^*$. Clearly, $\mathscr R^*_i$'s are pairwise disjoint. Consider a reaction $R\notin \cup \mathscr R_i^*$. Then, $R$ is ``contained'' in a $Q$-path in some subnetwork $\mathscr N'_i$. It follows that $R$ is also contained in a $Q$-path in $\mathscr N'$. Hence, $R\notin \mathscr R^*$. 

This time, let $R:x'\rightarrow x''\notin \mathscr R^*$. Then, $R$ is in a $Q$-path in $\mathscr N$. Consider the path $x'-x''-a_1-\cdots-a_q-y_{d+1}$. From Lemma \ref{prop1.700}, this path occurs in some subnetwork $\mathscr N'_j$, i.e., $R$ is in a $Q$-path in $\mathscr N'_j$. So, $R\notin \mathscr R_j^*$ and in general, $R \notin \cup \mathscr R_i^*$. This means that $\mathscr D^*$ is a decomposition. 

Clearly, $y_1,y_2,\cdots,y_{d-1}$ appear as complexes in every subnetwork $\mathscr N^*_i$. By virtue of Lemma \ref{prop1.700}, $Q$-paths are contained in the same linkage class of every subnetwork $\mathscr N'_i$. Now, due to uniqueness and Lemma \ref{prop3000}, the path $P_{y_d}^{y_{d+1}}$ occurs in exactly one subnetwork say $\mathscr N'_g$. This means that in $\mathscr N'_g$, every $Q$-path is contained in the linkage class that also contains $y_d$. On the other hand, in every other subnetwork $\mathscr N'_i$, the linkage class that contains $y_{d+1}$ does not have any other common complex. Since there is at least one non-common complex in $P_{y_d}^{y_{d+1}}$, there is at least one reaction left in the linkage class of $\mathscr N^*_g$ that contains $y_d$. Hence, $y_d$ also appears as a complex in each $\mathscr N^*_i$. Thus, $y_{d+1}$ is the only common complex that was removed to obtain $\mathscr N^*$ and $\mathscr N_i^*$, So, $\mathscr D^*$ is a $\mathscr C_{\mathscr D}$-decomposition.

Next, we show that there is exactly one path connecting $y_1,y_2,\cdots,y_d$ in $\mathscr N^*$. Suppose that such path does not exist. Then, there must be a non-common complex $w$ in some path $P_{y_{i-1}}^{y_{i}}$ that was removed from $\mathscr N'$ in obtaining $\mathscr N^*$, i.e.,  $w$ is complex in some $Q$-path in $\mathscr N'$.
 
\begin{equation}
\begin{tikzcd}
                       &                            &                                                              &                        & Q\textnormal{-path} \arrow[rd, no head, dashed, bend left] &         \\
y_1 \arrow[r, no head] & y_{i-1} \arrow[r, no head] & w \arrow[r, no head] \arrow[rru, no head, dashed, bend left] & y_i \arrow[r, no head] & y_d \arrow[r, no head]                                     & y_{d+1}
\end{tikzcd}
\label{cecil37}
\end{equation}

\noindent This implies that $P_{y_1}^{y_2}-\cdots -P_{y_{i-2}}^{y_{i-1}}-\cdots-w-\cdots -P_{y_{d+1}}^{y_{d}}-\cdots-P_{y_{i+1}}^{y_{i}}$ is another path connecting the $y_1,y_2,\cdots,y_{d+1}$ which is a contradiction. From the induction hypothesis, we conclude that $\mathscr N^* = \mathscr N^*_1\cup \mathscr N^*_2\cup \cdots\cup \mathscr N^*_{k'}$ is incident-independent. From Proposition \ref{3C_cri_PBS}, we have 
\begin{equation}\label{pogi}
    \displaystyle \sum_{i=1}^{k'} \ell_i^*-\ell^*=d(k'-1) 
\end{equation}
\noindent where $\ell^*$ and $\ell_i^*$ are the number of linkage classes of $\mathscr N^*$ and $\mathscr N_i^*$, respectively.

Let $\ell'$ and $\ell'_i$ be the number of linkage classes of $\mathscr N'$ and $\mathscr N'_i$, respectively. Then, we have $\ell'_g=\ell^*_g$ and $\ell'_i=\ell_i^*+1$ for $i\neq g$. Also, we have $\ell'=\ell^*$. Thus,
\begin{equation*}
    \sum_{i=1}^{k'} \ell'_i-\ell'= \ell'_g + \sum_{i\neq g}\ell'_i - \ell'=\ell^*_g+ \sum_{i\neq g}(\ell_i^*+1)-\ell^*=\sum_{i=1}^{k'} \ell_i^*-\ell^*+(k'-1).
\end{equation*}
\noindent From (\ref{pogi}), we have
\begin{equation*}
    \sum_{i=1}^{k'} \ell'_i-\ell'=d(k'-1)+(k'-1)=(d+1)(k'-1).
\end{equation*}
\noindent Therefore, from Proposition \ref{3C_cri_PBS}, $\mathscr D'$ is incidence independent. 
\end{proof}

\begin{example}
Consider decomposition $\mathscr N=\mathscr N_1\cup \mathscr N_2\cup \mathscr N_3 \cup \mathscr N_4$ in Example \ref{haha}. There is only the path $y_1-x_3-y_2$ connecting the common complexes $y_1$ and $y_2$ in $\mathscr N$ such that the internal complex $x_3$ is a non-common complex. By Theorem \ref{prop2.6}, the decomposition is incidence independent.        
\end{example}

As shown in the next result, the incidence independence of a $\mathscr C_{\mathscr D}$-decomposition can still be achieved even if each linkage class of the parent network has none or has exactly two common complexes.

\begin{proposition}\label{prop2.7}
	Let $\mathscr D: \mathscr N = \mathscr N_1\cup \cdots\cup \mathscr N_k$ be a $\mathscr C_{\mathscr D}$-decomposition with $\dfrac{d}{2}<\ell$. Suppose that each linkage class of $\mathscr N$ either contains zero or exactly two common complexes that are connected by a unique path. Then, $\mathscr D$ is incidence independent.
\end{proposition}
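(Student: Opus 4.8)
The plan is to verify the 3C-criterion (Proposition \ref{3C_cri_PBS}) in the form specialized to $\mathscr C_{\mathscr D}$-decompositions. Since every subnetwork of a $\mathscr C_{\mathscr D}$-decomposition contains all of $\mathscr C_{\mathscr D}$, we have $h_i = d$ for each $i$, so the criterion reads: $\mathscr D$ is incidence independent if and only if $\sum_{i=1}^k \ell_i - \ell = d(k-1)$. The whole proof thus reduces to establishing this linkage-class identity, following the pattern of the proofs of Proposition \ref{y1_y2} and Theorem \ref{prop2.6}.

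First I would classify the linkage classes of $\mathscr N$. Because the linkage classes partition the complexes and every common complex lies in a linkage class carrying exactly two common complexes, the common complexes occur in pairs; hence $d$ is even and there are exactly $d/2$ linkage classes of $\mathscr N$ that carry common complexes (call them Type B), each with a pair $\{y',y''\}$ joined by its unique path, while the remaining $\ell - d/2$ linkage classes carry none (Type A). The hypothesis $d/2 < \ell$ guarantees that at least one Type A linkage class survives the removal, so the reduced decomposition is non-empty. I would then remove the $\mathscr C_{\mathscr D}$-linkage classes to form $\mathscr N'$ and $\mathscr N'_i$, giving $\mathscr D'$. By Theorem \ref{str_thm_for_PMM}, $\mathscr D'$ is either trivial or a $\mathscr C$-decomposition; in either case $\sum_i \ell'_i = \ell' = \ell - d/2$ (using Theorem \ref{str_thm_for_C} in the $\mathscr C$-decomposition case, and the single-subnetwork observation in the trivial case). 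Writing $m_i$ for the number of $\mathscr C_{\mathscr D}$-linkage classes of $\mathscr N_i$, we have $\ell_i = \ell'_i + m_i$ by construction, so $\sum_i \ell_i - \ell = (\ell - d/2) - \ell + \sum_i m_i = \sum_i m_i - d/2$.

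The crux is then to count $\sum_i m_i$ via the unique-path hypothesis. For each Type B linkage class with pair $\{y',y''\}$ and connecting path $P$: by Lemma \ref{prop3000}, $P$ lies in exactly one subnetwork $\mathscr N_g$, where $y'$ and $y''$ share a linkage class (contributing $1$ to $m_g$); in every other subnetwork, the uniqueness of $P$ in $\mathscr N$ forces $y'$ and $y''$ into separate linkage classes (contributing $2$). Moreover, pairs from distinct Type B classes lie in distinct linkage classes of each $\mathscr N_i$, since they are already disconnected in $\mathscr N$, so there is no merging or double-counting. This yields $\sum_i m_i = \tfrac{d}{2}\bigl(1 + 2(k-1)\bigr) = \tfrac{d}{2}(2k-1)$, whence $\sum_i \ell_i - \ell = \tfrac{d}{2}(2k-1) - \tfrac{d}{2} = d(k-1)$, and the 3C-criterion closes the argument.

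The main obstacle I anticipate is the bookkeeping in this counting step: rigorously justifying that in the subnetworks other than $\mathscr N_g$ the two common complexes genuinely split into two separate $\mathscr C_{\mathscr D}$-linkage classes (rather than being rejoined through some alternative route), and that the contributions from different Type B classes remain disjoint. Both facts should follow from the uniqueness of the connecting path in $\mathscr N$ together with the observation that any path inside a subnetwork is also a path in the parent network $\mathscr N$.
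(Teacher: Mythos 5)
Your proposal is correct and follows essentially the same route as the paper's proof: verify the 3C-criterion in the form $\sum_i \ell_i - \ell = d(k-1)$ by removing the $\mathscr C_{\mathscr D}$-linkage classes, invoking Theorem \ref{str_thm_for_PMM} and Theorem \ref{str_thm_for_C} for the reduced decomposition, and using the unique-path hypothesis with Lemma \ref{prop3000} to conclude each pair is joined in exactly one subnetwork and split in all others. The only difference is bookkeeping: the paper counts per subnetwork via $c_i$ (the number of connected pairs in $\mathscr N_i$, with $\sum_i c_i = d/2$), while you count per pair, and the two tallies agree.
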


\begin{proof}
	Let $\mathscr L_1, \cdots, \mathscr L_{d/2}$ be the $\mathscr C_{\mathscr D}$-linkage classes of $\mathscr N$. For linkage class $\mathscr L_j$, denote by $y_j^1$ and $y_j^2$ the common complexes it posses. From the assumption and Lemma \ref{prop3000}, $y_j^1$ and $y_j^2$ belong to the same linkage class in some subnetwork say $\mathscr N_h$ but are separated in every other subnetworks. 	
	
	Let $\mathscr N'$ and $\mathscr N'_i$ be the networks obtained after removing from $\mathscr N$ and  $\mathscr N_i$, respectively, the $\mathscr C_{\mathscr D}$-linkage classes. Let $\mathscr D': \mathscr N' = \mathscr N'_1\cup \cdots\cup \mathscr N'_k$. If $\mathscr D'$ is non-trivial, it is a $\mathscr C$-decomposition by Theorem \ref{str_thm_for_PMM}. Let $\ell'$ and $\ell_i'$ be the number of linkage classes of $\mathscr N'$ and $\mathscr N_i'$ and $c_i$ be the number of pairs of connected $y_j^1$, $y_j^2$ in $\mathscr N_i$. Thus, $\ell_i'=\ell_i-c_i-2\bigg(\dfrac{d}{2}-c_i\bigg)$. By Theorem \ref{str_thm_for_C}, we have
	\begin{equation}
	    \displaystyle \sum_{i=1}^k \ell'_i = \ell' \Leftrightarrow \; \displaystyle \sum_{i=1}^k \bigg[\ell_i-c_i-2\bigg(\dfrac{d}{2}-c_i\bigg)\bigg] = \ell-\dfrac{d}{2} \Leftrightarrow \; \displaystyle \sum_{i=1}^k \ell_i-\ell=2k\bigg(\dfrac{d}{2}\bigg)-\dfrac{d}{2}-\displaystyle \sum_{i=1}^k c_i.
	\end{equation}

	\noindent Since $\displaystyle \sum_{i=1}^k c_i=\dfrac{d}{2}$, we get
	\begin{equation}\label{pugo}
		\sum_{i=1}^k \ell_i-\ell=2k\bigg(\dfrac{d}{2}\bigg)-2\bigg(\dfrac{d}{2}\bigg)=d(k-1).
	\end{equation}
	\noindent It can be shown that we still get (\ref{pugo}) even if $\mathscr D'$ is trivial. In any case, $\mathscr D$ is incidence independent by Proposition \ref{3C_cri_PBS}.
\end{proof}

\section{A sufficient condition for complex balancing of PLK systems with incidence independent and complex balanced decompositions}

In this section, we turn to the question: for an incidence independent and complex balanced decomposition (i.e., each subnetwork is complex balanced) of a kinetic system, which additional conditions must hold to ensure that the parent system is complex balanced? Recall that Farinas et al. \cite{fari2021} established the surprisingly general result that the conclusion holds for any $\mathscr C$-decomposition, i.e., the additional purely structural condition of the set of common complexes $\mathscr C_{\mathscr D}=\varnothing$ and any kinetic system (see the following theorem). 

\begin{theorem}[Proposition 10, \cite{fari2021}]\label{lahat}
	Let $(\mathscr N, K)$ be a \textnormal{CKS} and $\mathscr N=\mathscr N_1\cup \cdots \cup \mathscr N_k$ be a weakly reversible $\mathscr C$-decomposition $($i.e., a $\mathscr C$-decomposition where each subnetwork is weakly reversible$)$. If $Z_+(\mathscr N_i, K_i)\neq \varnothing$ for each subnetwork, then $Z_+(\mathscr N, K)\neq \varnothing$. 
\end{theorem}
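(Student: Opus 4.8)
The plan is to first reduce the problem to finding a single composition that is complex balanced for all subnetworks simultaneously, and then to build such a composition. Since a $\mathscr C$-decomposition is incidence independent (Proposition \ref{C_dec_inc_ind}), Theorem \ref{farinas decomp}(i) yields $Z_+(\mathscr N, K)=\bigcap_{i=1}^{k} Z_+(\mathscr N_i, K_i)$; hence it suffices to exhibit a common point $c^{**}\in\bigcap_i Z_+(\mathscr N_i,K_i)$. I would make the reduction transparent by recording the block structure of a $\mathscr C$-decomposition: the complex sets $\mathscr C_i$ partition $\mathscr C$ and, by the Structure Theorem \ref{str_thm_for_C}, each $\mathscr N_i$ is a union of entire linkage classes of $\mathscr N$, so every reaction incident to a complex $y$ belongs to the unique subnetwork containing $y$. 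Thus the complex formation rate function $g=I_aK$ splits as a direct sum over the coordinates $\mathbb R^{\mathscr C}=\bigoplus_i \mathbb R^{\mathscr C_i}$, and $g(c)=0$ holds exactly when each subnetwork's complex balancing condition holds at the \emph{same} $c$.

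Next I would characterize $Z_+(\mathscr N_i,K_i)$ using weak reversibility. Writing complex balancing of $\mathscr N_i$ as the flux condition $K_i(c)\in\ker(\text{incidence map of }\mathscr N_i)$, weak reversibility ensures (through the Matrix-Tree/Perron-Frobenius description of the kernel of the Laplacian of each strongly connected linkage class) that this kernel meets the positive orthant, and that the assumed nonemptiness of $Z_+(\mathscr N_i,K_i)$ forces the fluxes $K_i(c^{**}_i)$ to be proportional, linkage class by linkage class, to the corresponding tree-constant vectors. In the reactant-determined (in particular mass action) case this translates into a log-linear description: $\log\!\big(Z_+(\mathscr N_i,K_i)\big)$ is an affine subspace through $\log c^{**}_i$, a coset of the annihilator of the reactant/kinetic-order data of $\mathscr N_i$.

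The main obstacle is the final gluing step: the individual witnesses $c^{**}_1,\dots,c^{**}_k$ need not coincide, and I must produce one point lying in all of the $Z_+(\mathscr N_i,K_i)$. In logarithmic coordinates this is the consistency of the affine systems $\log c^{**}_i + V_i$, and the genuine difficulty is that distinct subnetworks may share species, so their constraints act on overlapping coordinates and individual solvability does not, by itself, give joint solvability. The crux of the argument is therefore to exploit the full force of the $\mathscr C$-decomposition structure --- the partition of the complexes, the identity $\ell=\sum_i\ell_i$ from the Structure Theorem, and the resulting compatibility of the per-subnetwork constraint spaces $V_i$ --- to show that the combined complex balancing system is consistent. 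I would carry this out by descending to the linkage-class level, where each $V_i$ is the intersection of the constraints contributed by the individual linkage classes in $\mathscr N_i$, and then verifying that constraints attached to distinct linkage classes can be satisfied simultaneously; reconciling the constraints of several linkage classes that restrict the same species is the technical heart of the proof, and once it is settled the resulting $c^{**}$ lies in $\bigcap_i Z_+(\mathscr N_i,K_i)=Z_+(\mathscr N,K)$.
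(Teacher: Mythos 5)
You should note at the outset that the paper contains no proof of this statement: it is quoted verbatim from \cite{fari2021} (Proposition 10 there) and used as a black box in Section 6, so your proposal has to stand entirely on its own. Its first half does: since a $\mathscr C$-decomposition is incidence independent (Proposition \ref{C_dec_inc_ind}), Theorem \ref{farinas decomp} gives $Z_+(\mathscr N,K)=\bigcap_i Z_+(\mathscr N_i,K_i)$, and your block-splitting of $g$ over $\mathbb R^{\mathscr C}=\bigoplus_i\mathbb R^{\mathscr C_i}$ is a correct restatement of that fact. But this is the easy half. The entire content of the theorem is the step you label ``the technical heart'' and then never carry out: producing one composition lying in all the sets $Z_+(\mathscr N_i,K_i)$, which constrain overlapping species coordinates. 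No argument is given for this consistency, only the assertion that the $\mathscr C$-decomposition structure ($\mathscr C_i$ disjoint, $\ell=\sum\ell_i$) should force it. Your intermediate step also quietly abandons the stated generality: the Matrix-Tree/log-linear coset description of $Z_+(\mathscr N_i,K_i)$ you invoke is a property of mass action and PL-RDK systems (Theorem \ref{muller-regensburger}), not of the arbitrary kinetics $K$ the statement allows.

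This gap is not merely technical, because the deferred consistency claim can actually fail under exactly the stated hypotheses. Take $\mathscr N$ with reactions $A\leftrightarrow B$ and $2A\leftrightarrow 2B$, under mass action with rate constants $1,1$ on the first pair and $2,1$ on the second, and take the linkage class decomposition $\mathscr N=\mathscr N_1\cup\mathscr N_2$. This is a weakly reversible $\mathscr C$-decomposition ($\mathscr C_1=\{A,B\}$ and $\mathscr C_2=\{2A,2B\}$ are disjoint); each subnetwork is weakly reversible of deficiency zero, hence complex balanced, with $Z_+(\mathscr N_1,K_1)=\{(a,b)\,:\,a=b\}$ and $Z_+(\mathscr N_2,K_2)=\{(a,b)\,:\,b^2=2a^2\}$, both nonempty; yet their intersection, which by your own reduction equals $Z_+(\mathscr N,K)$, is empty. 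So ``$\mathscr C$-decomposition $+$ weak reversibility $+$ complex balanced subnetworks'' does not yield joint solvability; the failure mechanism is identical to the paper's own counterexample in Section \ref{counter_example}, merely without the shared zero complex, which shows the obstruction lies in shared \emph{species}, not shared complexes. Consequently, no elaboration of your linkage-class bookkeeping can close the gap: the consistency you need is precisely what the paper's Theorem \ref{theorem13} purchases with additional hypotheses (CLP subnetworks with $P_{Z,i}=\widetilde S_i$ and independence of the induced decomposition, giving $\widetilde S_i^{\perp}+\widetilde S_j^{\perp}=\mathbb R^{\mathscr S}$), and it is unavailable for general kinetics. Either Proposition 10 of \cite{fari2021} carries hypotheses lost in this paper's transcription, or the statement itself requires amendment; in both cases your proposed route cannot be completed as written.
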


\noindent In  section \ref{counter_example}, we show that this result is singular in its purely structural character: as soon as $\mathscr C_{\mathscr D}\neq \varnothing$, there are going to be kinetic systems with an incidence independent and complex balanced decomposition that are not complex balanced.  Accordingly, our new result, while imposing no conditions on the set of common complexes of the incidence independent decomposition, is valid only for a class of power law kinetic systems. After brief reviews of relevant results in sections \ref{LP_type} and \ref{induced_decompositions}, we state and prove the new result in section \ref{main_result}. We then briefly compare it with the generalization of the Deficiency Zero Theorem of Fortun et al. \cite{fort2019} and sketch some interesting research questions.

\subsection{A simple counterexample for $\mathscr C_{\mathscr D}\neq \varnothing$}\label{counter_example}

Consider the $\mathscr C^*$-decomposition $\mathscr N=\mathscr N_1\cup \mathscr N_2$ given by the following.
 
\begin{equation}
	\label{eqn3.11}
	\begin{tikzcd}
		\mathscr N:    & A+B \arrow[r, shift right] & 0 \arrow[l, shift right] \arrow[r, shift left] & 2A+2B \arrow[l, shift left] \\
		\mathscr N_1:  & A+B \arrow[r, shift right] & 0 \arrow[l, shift right]                       &                             \\
		\mathscr N_2:  & 0 \arrow[r, shift left]    & 2A+2B \arrow[l, shift left]                    &                            
	\end{tikzcd}
\end{equation}

\noindent Suppose that $(\mathscr N, K)$ is endowed with mass action kinetics. For the reversible and deficiency zero subsystem $(\mathscr N_1, K_1)$ we have the equations
\begin{equation*}
dA/dt = dB/dt = k_1-k_2AB   
\end{equation*}
\noindent and  $AB=k_1/k_2$ at equilibrium. Similarly, for the subsystem $(\mathscr N_2, K_2)$, which is also reversible and deficiency zero, we get 
\begin{equation*}
dA/dt = dB/dt=2k_3-2k_4A^2B^2   
\end{equation*}
\noindent and $AB=\sqrt{k_3/k_4}$ at equilibrium. For the parent system, we have 
\begin{equation*}
dA/dt = dB/dt = k_1-k_2AB+2k_3-2k_4A^2B^2. 
\end{equation*}
For the rate vector $(2,4,1,1/2)$ we have 
\begin{equation*}
 Z_+(\mathscr N_1, K_1)=\{(A,B)\in \mathbb R^2| AB=1/2\}\textnormal{\;and\;} Z_+(\mathscr N_2, K_2)=\{(A,B)\in \mathbb R^2| AB=\sqrt{2}\}   
\end{equation*}
so that their intersection is empty, suggesting that $(\mathscr N, K)$ is not complex balanced. 
On the other hand, $(\mathscr N, K)$ is complex balanced for the rate vector $(1,1,1,1)$ with 
\begin{equation*}
E_+(\mathscr N, K)=Z_+(\mathscr N, K)=Z_+(\mathscr N_1, K_1)=Z_+(\mathscr N_2, K_2)=\{(A,B)\in \mathbb R^2| AB=1\}.    
\end{equation*}    
Hence, even mass action systems on the same network and the same incidence independent and complex balanced decomposition (in fact a $\mathscr C^*$-decomposition) may or may not be complex balanced.

\subsection{Kinetic systems of LP type}\label{LP_type}

In this and the next subsection, we collect some relevant concepts for the new result. 

\begin{definition}
A kinetic system $(\mathscr N, K)$ is of type PLP (\textbf{positive equilibria log-parametrized}) if
\begin{enumerate}[i.]
    \item $E_+(\mathscr N, K)\neq \varnothing$ and
    \item $E_+(\mathscr N, K)=\{x\in \mathbb R_{>0}^{\mathscr S}| \log x- \log x^* \in (P_E)^{\perp}\}$
\end{enumerate}
where $P_E$ is a subspace of $\mathbb R^{\mathscr S}$ and $x^*$ is a positive equilibrium.
\end{definition}

\begin{definition}
A kinetic system $(\mathscr N, K)$ is of type CLP (\textbf{complex balanced equilibria log-parametrized}) if
\begin{enumerate}[i.]
    \item $Z_+(\mathscr N, K)\neq \varnothing$ and
    \item $Z_+(\mathscr N, K)=\{x\in \mathbb R_{>0}^{\mathscr S}| \log x- \log x^* \in (P_Z)^{\perp}\}$
\end{enumerate}
where $P_Z$ is a subspace of $\mathbb R^{\mathscr S}$ and $x^*$ is a complex balanced equilibrium.
\end{definition}

A kinetic system is bi-LP if it is of PLP and of CLP type and $P_E=P_Z$. We will use the shorter PLP system, CLP system and bi-LP system notation as well as the collective term ``LP systems".

A key property of an LP system was in principle already derived by Feinberg in his 1979 lectures \cite{fein1979} as shown in \cite{magp2021}:

\begin{theorem}
Let $(\mathscr N, K)$ be a chemical kinetic sytem.
\begin{enumerate}[i.]
    \item If $(\mathscr N, K)$ is a \textnormal{PLP} system, then $|E_+(\mathscr N, K)\cap Q|=1$ for any positive coset $Q$ of $P_E$ in $\mathbb R^{\mathscr S}$.
    \item If $(\mathscr N, K)$ is a \textnormal{CLP} system, then $|Z_+(\mathscr N, K)\cap Q|=1$ for any positive coset $Q$ of $P_Z$ in $\mathbb R^{\mathscr S}$.
    \item If $(\mathscr N, K)$ is a bi-\textnormal{LP} system, then it is absolutely complex balanced, i.e., each positive equilibrium is complex balanced.
\end{enumerate}
\end{theorem}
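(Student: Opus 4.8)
The plan is to handle (i) and (ii) by a single convex-optimization argument of Horn--Jackson/Birch type, and then to read off (iii) as a short consequence. Throughout, a positive coset $Q$ of $P_E$ means $Q=(v+P_E)\cap\mathbb R^{\mathscr S}_{>0}$ for some $v$ with $Q\neq\varnothing$; since $E_+(\mathscr N,K)\subseteq\mathbb R^{\mathscr S}_{>0}$, counting $E_+\cap Q$ is the same as counting $E_+\cap(v+P_E)$, so the real content is that the log-linear set $E_+$ meets each \emph{ordinary} affine coset of $P_E$ in exactly one point.

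For (i), fix the reference equilibrium $x^*$ supplied by the PLP hypothesis and introduce on $\mathbb R^{\mathscr S}_{>0}$ the strictly convex function
\[
G(x)=\sum_{s\in\mathscr S}\Big(x_s\log\frac{x_s}{x^*_s}-x_s+x^*_s\Big),
\]
with positive-definite Hessian $\mathrm{diag}(1/x_s)$ and gradient $\nabla G(x)=\log x-\log x^*$ (componentwise). The PLP description says exactly that $x\in E_+$ iff $\nabla G(x)\perp P_E$. Since $Q$ is affine with tangent space $P_E$, the first-order optimality condition for minimizing $G$ over $Q$ is precisely $\nabla G\perp P_E$; hence ``$x\in E_+\cap Q$'' coincides with ``$x$ is a critical point of $G|_Q$.'' Strict convexity of $G|_Q$ then forces any such critical point to be the unique minimizer, so uniqueness of the intersection is automatic and the whole problem reduces to existence of a minimizer.

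Existence is the main obstacle. I would minimize $G$ over the closure $\overline Q=(v+P_E)\cap\mathbb R^{\mathscr S}_{\ge 0}$, extending $G$ continuously by $0\log 0:=0$. Each summand $t\mapsto t\log(t/x^*_s)-t+x^*_s$ is nonnegative and superlinear, so $G\ge 0$ and $G(x)\to\infty$ as $\|x\|\to\infty$ on $\overline Q$; thus the sublevel sets are compact and a minimizer $\hat x\in\overline Q$ exists. The delicate step is to exclude boundary minimizers, i.e. to show $\hat x$ has all coordinates positive. Here I would use $\partial G/\partial x_s=\log(x_s/x^*_s)\to-\infty$ as $x_s\to 0^+$: if $\hat x_s=0$ while an interior point $y\in Q$ has $y_s>0$, then along the segment from $\hat x$ toward $y$ the derivative of $G$ tends to $-\infty$ at $\hat x$, so $G$ strictly decreases into the interior, contradicting minimality. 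Hence $\hat x\in Q$, and optimality gives $\hat x\in E_+\cap Q$. Part (ii) is the verbatim argument with $P_Z$, $Z_+$ and a complex-balanced reference replacing $P_E$, $E_+$, $x^*$.

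For (iii), I would first record the standard inclusion $Z_+(\mathscr N,K)\subseteq E_+(\mathscr N,K)$: because the stoichiometric matrix factors as $N=YI_a$, where $Y$ sends each complex to its vector of stoichiometric coefficients, $g(c)=I_aK(c)=0$ forces $f(c)=NK(c)=YI_aK(c)=0$, so every complex-balanced equilibrium is a positive equilibrium. Now pick any $x_Z\in Z_+$. Any element of $E_+$ (resp. $Z_+$) may serve as the reference in the LP description, since replacing the reference by another set element only translates the log-condition by a vector of $(P_E)^\perp$ (resp. $(P_Z)^\perp$) and leaves the set unchanged. As $P_E=P_Z=:P$ and $x_Z$ lies in both $Z_+$ and $E_+$, the two descriptions coincide,
\[
E_+=\{x\in\mathbb R^{\mathscr S}_{>0}:\log x-\log x_Z\in P^\perp\}=Z_+,
\]
so every positive equilibrium is complex balanced, which is absolute complex balancing. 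The only nontrivial ingredient here is the inclusion $Z_+\subseteq E_+$; the remainder is bookkeeping on the shared subspace $P$.
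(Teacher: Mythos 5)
Your proposal is correct, and it is essentially the canonical argument: the paper itself gives no proof of this theorem, deferring to Feinberg's 1979 lectures (via the cited work of Magpantay and Mendoza), and that classical derivation is precisely your Birch-type scheme --- minimize the strictly convex pseudo-Helmholtz function $G(x)=\sum_s\bigl(x_s\log(x_s/x_s^*)-x_s+x_s^*\bigr)$ over a compatibility coset, with coercivity giving existence, the $\log(x_s/x_s^*)\to-\infty$ gradient blow-up excluding boundary minimizers, and strict convexity giving uniqueness. Your handling of the delicate points (nonemptiness of $Q$ when picking the interior comparison point, change of reference point within an LP set, and the inclusion $Z_+\subseteq E_+$ via $N=YI_a$ for part (iii)) is sound, so nothing further is needed.
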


The concepts of kinetic order subspace and kinetic deficiency for cycle terminal PL-RDK systems were introduced by M\"{u}ller and Regensburger in \cite{mure2014} where they also presented their theory of generalized mass action systems (GMAS). Hernandez and Mendoza developed these concepts in \cite{hern2021} by first constructing the reaction network induced by the kinetic complexes which we are adapting in this paper.

Let $\mathscr R(y)$ be the set of (branching) reactions having $y$ as reactant complex.
\begin{definition} Let $(\mathscr N, K)$ be a cycle terminal PLK system. Given a complex $y$, the set of \textbf{kinetic complexes} of $y$ is the set $\widetilde{\mathscr C}(y):=\{F_R\;|\;R\in \mathscr R(y)\}$ ($F_R$ refers to the kinetic order vector that corresponds to reaction $R$). Moreover, for a reaction $R: y\rightarrow y'$, $\widetilde{\mathscr R}(R):=\{\widetilde{y}\rightarrow \widetilde{y'}\;|\;\widetilde{y}\in \widetilde{\mathscr C}(y) \textnormal{\;and\;}  \widetilde{y'}\in \widetilde{\mathscr C}(\widetilde{y'})\}$ gives the set of 
\textbf{kinetic complex reactions} of $R$.
\end{definition}
\noindent The reaction network of kinetic complexes of a cycle terminal PLK system is defined as follows.

\begin{definition} The set of kinetic complexes induced a reaction network given by $\widetilde{\mathscr N}=(\mathscr S, \widetilde{\mathscr C}, \widetilde{\mathscr R})$ where $\widetilde{\mathscr C}=\displaystyle{\bigcup_y}\widetilde{\mathscr C}(y)$ and $\widetilde{\mathscr R}=\displaystyle{\bigcup_y}\widetilde{\mathscr R}(y)$.
\end{definition}

\noindent The orders of $\widetilde{\mathscr C}$ and $\widetilde{\mathscr R}$ are denoted by $\widetilde{n}$ and $\widetilde{p}$, respectively. In addition, the incidence and stoichiometric maps of $\widetilde{\mathscr N}$ are respectively denoted by $\widetilde{I}_a$ and $\widetilde{N}$.

\begin{definition}
The \textbf{kinetic order subspace} $\widetilde{S}$ of $(\mathscr N, K)$ is the defined to be the image of $\widetilde{N}$ and its dimension $\widetilde{s}$ is also referred to as \textbf{kinetic rank}. The \textbf {kinetic complex deficiency} is the nonnegative number given by $\delta_{\widetilde{N}}=\widetilde{n}-\widetilde{\ell}-\widetilde{s}$. On the other hand, the \textbf {kinetic deficiency} is the nonnegative number $\widetilde{\delta}=n-{\ell}-\widetilde{s}$.
\end{definition}

A key ingredient of the new result is the following theorem of M\"{u}ller and Regensburger \cite{mure2014}.

\begin{theorem}\label{muller-regensburger}
Let $(\mathscr N, K)$ be a weakly reversible \textnormal{PL-RDK} system. If $(\mathscr N, K)$ is complex balanced, then it is of \textnormal{CLP} type with $P_Z=\widetilde{S}$. 
\end{theorem}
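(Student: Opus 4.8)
The plan is to prove the equivalent statement that, after fixing one complex balanced equilibrium $x^*$ (which exists by hypothesis, already giving condition (i) of the CLP definition), a positive composition $x$ lies in $Z_+(\mathscr N, K)$ if and only if $\log x - \log x^* \in \widetilde S^{\perp}$; since $P_Z = \widetilde S$ is exactly the asserted subspace, this equivalence finishes everything. First I would record two structural facts that make the kinetic complexes usable. Weak reversibility forces each linkage class to be strongly connected, so every complex is a reactant complex and $\widetilde{\mathscr C}(z)\neq\varnothing$ for every $z$; and the PL-RDK hypothesis makes $\widetilde{\mathscr C}(z)$ a singleton $\widetilde z$, so each reaction $R: y\to y'$ has rate $K_R(x)=k_R x^{\widetilde y}$ with $\widetilde y$ depending only on the reactant.

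Next I would reformulate complex balancing via fluxes. Writing $u := x/x^*$ componentwise and $u^{\widetilde y} := \prod_{s\in\mathscr S} u_s^{\widetilde y_s}$, the RDK form gives $K_R(x)=a_R\,u^{\widetilde{\rho(R)}}$ with $a_R := k_R (x^*)^{\widetilde{\rho(R)}} > 0$. Because $x^*$ is complex balanced, the fluxes $a_R$ balance at each complex $z$, i.e. $\sum_{\pi(R)=z} a_R = \sum_{\rho(R)=z} a_R$. Substituting into $g(x)=I_aK(x)=0$ shows that $x\in Z_+(\mathscr N, K)$ is equivalent to the system, for every complex $z$,
\begin{equation*}
\sum_{R:\,\pi(R)=z} a_R\, \mu_{\rho(R)} = \mu_z \sum_{R:\,\rho(R)=z} a_R, \qquad \mu_z := u^{\widetilde z}.
\end{equation*}

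For the ``if'' direction, suppose $\log u \in \widetilde S^{\perp}$. Since $\widetilde S = \operatorname{Im}\widetilde N = \operatorname{span}\{\,\widetilde{y'}-\widetilde y : y\to y' \in \mathscr R\,\}$, orthogonality gives $\mu_y = \mu_{y'}$ across every reaction, and strong connectivity of each linkage class propagates this to make $\mu$ constant on each linkage class; substituting that constant collapses the balance system to the flux balance of $x^*$, so $x$ is complex balanced. The ``only if'' direction is the crux: the balance system says exactly that $(\mu_z)$ is a positive element of the kernel of the weighted out-Laplacian $\mathcal L$ of the reaction graph with edge weights $a_R>0$. The flux balance of $x^*$ is equivalent to $\mathbf 1 \in \ker\mathcal L$, and for a weakly reversible graph (each linkage class strongly connected) the kernel is one-dimensional on each linkage class; hence it is spanned there by $\mathbf 1$, forcing $\mu$ to be constant on each linkage class. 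Then $\mu_y = \mu_{y'}$, i.e. $u^{\widetilde{y'}-\widetilde y}=1$, for every reaction, which is precisely $\log u \perp \widetilde S$.

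The step I expect to be the main obstacle is this kernel characterization in the ``only if'' direction, namely that the weighted out-Laplacian of a strongly connected digraph with positive weights has a one-dimensional kernel on each strong component; I would establish it by a Matrix-Tree / Perron--Frobenius argument and then use the flux balance of $x^*$ to identify the kernel generator with the constant vector $\mathbf 1$. Once the equivalence is in hand, we obtain $Z_+(\mathscr N, K) = \{\,x\in\mathbb R^{\mathscr S}_{>0} : \log x - \log x^* \in \widetilde S^{\perp}\,\}$ with $Z_+(\mathscr N, K)\neq\varnothing$, which is exactly the assertion that $(\mathscr N, K)$ is a CLP system with $P_Z = \widetilde S$.
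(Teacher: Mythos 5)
Your proposal is correct, but there is nothing in the paper to compare it against: the paper does not prove this statement, it imports it wholesale as a known result of M\"{u}ller and Regensburger \cite{mure2014} on generalized mass action systems, which is why it carries their names. What you have written is, in essence, a reconstruction of the standard proof from that literature (which in turn generalizes Horn and Jackson's classical argument for mass action): rescale by a fixed complex balanced equilibrium $x^*$ so that the rates become $a_R u^{\widetilde{\rho(R)}}$ with balanced fluxes $a_R = K_R(x^*)$, observe that complex balancing of $x$ is the linear condition $\mathcal{L}\mu = 0$ for the flux-weighted Laplacian $\mathcal{L}$ and $\mu_z = u^{\widetilde{z}}$, and then use the fact that for a strongly connected component with positive edge weights the kernel of $\mathcal{L}$ is one-dimensional, hence spanned by $\mathbf{1}$ because flux balance of $x^*$ already places $\mathbf{1}$ in the kernel; constancy of $\mu$ on linkage classes is then exactly $\log x - \log x^* \in \widetilde{S}^{\perp}$. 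Your identification of the kernel-dimension lemma as the only nontrivial ingredient is accurate, and it is indeed provable by the Perron--Frobenius (or Matrix-Tree) route you indicate, so the argument is complete modulo that standard fact. Two minor points worth making explicit if you write this up: weak reversibility is what guarantees both that $\widetilde{z}$ exists for every complex $z$ (cycle terminality) and that every linkage class is a strong component, so the kernel lemma applies linkage class by linkage class; and in the ``if'' direction ordinary connectivity already propagates $\mu_y = \mu_{y'}$ across a linkage class, so strong connectivity is only genuinely needed in the ``only if'' direction.
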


With the concept of kinetic deficiency, they provide a characterization of complex balancing: if the kinetic deficiency is zero, then any weakly reversible PL-RDK system is complex balanced. If it is positive, they provide a necessary and sufficient condition for the system to be complex balanced.

\subsection{Induced decompositions of kinetic complexes}\label{induced_decompositions}

The second part of the sufficient condition is based on the concepts of an induced subnetwork of kinetic complexes and a corresponding decomposition, which were introduced in \cite{hern2021}. For a PL-RDK system, a kinetic complex is the column of the $T$ matrix assigned to a reactant complex. For a decomposition $\mathscr D: \mathscr N=\mathscr N_1\cup \cdots \cup \mathscr N_k$, we set $\widetilde{\mathscr N}_i:=$ subnetwork of $\widetilde{\mathscr N}$ induced by $\mathscr N_i$, i.e., take the reactions defining it, then form the kinetic complexes (still in the sense of M\"{u}ller-Regensburger).

\begin{definition}
The \textbf{induced subnetwork} $\widetilde{\mathscr N}_{\mathscr D}$ is defined as the union of the $\widetilde{\mathscr N}_i$. If the covering $\widetilde{\mathscr D}$ is a decomposition, we call it the \textbf{induced decomposition}.    
\end{definition}

\noindent We denote by $\widetilde{n}_{\mathscr D}$ and $\widetilde{\ell}_{\mathscr D}$ the number of complexes and linkage classes of $\widetilde{\mathscr N}_{\mathscr D}$, respectively.

In the following, we will assume that the covering is indeed a decomposition (so that in the examples, this has to be verified). Note however, that the following proposition holds:

\begin{proposition}\label{tela}
If, for an induced covering $\widetilde{\mathscr D}$, the flux spaces $\widetilde{S_i}$ form a direct sum, then the covering is an independent decomposition.
\end{proposition}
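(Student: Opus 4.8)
The plan is to verify directly the two requirements in the definition of an independent decomposition for the induced covering $\widetilde{\mathscr D}$: that the induced reaction sets $\widetilde{\mathscr R}_i$ actually partition $\widetilde{\mathscr R}$ (so that $\widetilde{\mathscr D}$ is a decomposition and not merely a covering), and that the parent kinetic order subspace $\widetilde{S}$ equals the direct sum $\bigoplus_i \widetilde{S_i}$. The hypothesis supplies the second requirement almost for free once the first is in place, so the substance of the argument lies in establishing pairwise disjointness of the induced reaction sets from the direct-sum assumption on the flux spaces.

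First I would record the one identity that holds for \emph{every} induced covering. Since $\widetilde{\mathscr R}=\bigcup_i \widetilde{\mathscr R}_i$, every reaction vector of $\widetilde{\mathscr N}$ lies in some $\widetilde{S_i}$, and conversely each $\widetilde{S_i}\subseteq \widetilde{S}$ because $\widetilde{\mathscr R}_i\subseteq \widetilde{\mathscr R}$. Using only that $\widetilde{S}=\textnormal{Im}\,\widetilde{N}$ is spanned by the reaction vectors of $\widetilde{\mathscr N}$, this gives $\widetilde{S}=\sum_{i=1}^k \widetilde{S_i}$ with no extra hypothesis.

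Next I would show that $\widetilde{\mathscr D}$ is in fact a decomposition. Suppose, toward a contradiction, that some kinetic complex reaction $\widetilde{R}$ lies in $\widetilde{\mathscr R}_i\cap \widetilde{\mathscr R}_j$ with $i\neq j$. Its reaction vector $v=\widetilde{y'}-\widetilde{y}$ is nonzero, because $\widetilde{\mathscr N}$ is a reaction network and hence loopless, so the reactant and product kinetic complexes of any of its reactions are distinct. Then $v$ would be a nonzero element of $\widetilde{S_i}\cap \widetilde{S_j}$. But the hypothesis that the $\widetilde{S_i}$ form a direct sum forces $\widetilde{S_i}\cap \sum_{l\neq i}\widetilde{S_l}=\{0\}$, and in particular $\widetilde{S_i}\cap \widetilde{S_j}=\{0\}$, a contradiction. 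Hence the $\widetilde{\mathscr R}_i$ are pairwise disjoint, and together with the covering property they partition $\widetilde{\mathscr R}$, so $\widetilde{\mathscr D}$ is a decomposition.

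Finally, combining the identity $\widetilde{S}=\sum_i \widetilde{S_i}$ with the direct-sum hypothesis yields $\widetilde{S}=\bigoplus_i \widetilde{S_i}$, which is precisely the defining condition of an independent decomposition (equivalently $\widetilde{s}=\sum_i \widetilde{s}_i$). I expect the disjointness step to be the only genuine obstacle, since it hinges on the reaction vectors of $\widetilde{\mathscr N}$ being nonzero, i.e. on no kinetic complex reaction being a self-loop. If the construction of $\widetilde{\mathscr N}$ does not a priori guarantee $\widetilde{y}\neq \widetilde{y'}$ for every reaction, this point would have to be justified separately or absorbed into the standing convention that $\widetilde{\mathscr N}$ is a reaction network, and that is where I would concentrate the care.
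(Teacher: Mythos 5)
The paper offers no proof of Proposition \ref{tela} to compare against: it is asserted as a remark (``Note however, that the following proposition holds'') and the text moves straight to an example, so your argument supplies a proof the paper omits. Your proof is correct, and its division of labor is the right one: the identity $\widetilde{S}=\sum_i\widetilde{S_i}$ costs nothing, since both sides are spanned by the reaction vectors of $\widetilde{\mathscr R}=\bigcup_i\widetilde{\mathscr R}_i$, so the entire content of the proposition is the pairwise disjointness of the $\widetilde{\mathscr R}_i$, which you extract from the hypothesis by noting that a kinetic complex reaction shared by $\widetilde{\mathscr N}_i$ and $\widetilde{\mathscr N}_j$ would place its nonzero reaction vector in $\widetilde{S_i}\cap\widetilde{S_j}$, contradicting directness of the sum.

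The caveat you flag is the genuine pressure point, and your resolution matches the paper's conventions: $\widetilde{\mathscr N}$ is declared to be a reaction network $(\mathscr S,\widetilde{\mathscr C},\widetilde{\mathscr R})$, and the paper's definition of a reaction network excludes pairs $(y,y)$, so kinetic complex reactions are loopless by fiat and every reaction vector of $\widetilde{\mathscr N}$ is nonzero. You are also right that this is not cosmetic: distinct complexes of $\mathscr N$ can carry equal kinetic order vectors, and if the resulting self-loops were retained rather than excluded, a self-loop common to two induced subnetworks would destroy disjointness without disturbing the direct sum, so the proposition genuinely rests on that convention rather than on the direct-sum hypothesis alone.
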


\begin{example}
Recall the incidence independent decomposition $\mathscr D_1: \mathscr N=\mathscr N_1\cup \mathscr N_2$ given in Example \ref{carbon cycle model} where $\mathscr N$ is the weakly reversible and deficiency zero subnetwork of the Schmitz's pre-industrial carbon cycle model (\ref{graph1000}). In \cite{fort2018}, $\mathscr N$ is assigned with the following kinetic order matrix which makes the system PL-NDK.  
\begin{equation}
F =
  \begin{blockarray}{*{6}{c} l}
    \begin{block}{*{6}{>{$\footnotesize}c<{$}} l}
      $M_1$ & $M_2$ & $M_3$ & $M_4$ & $M_5$ & $M_6$ \\
    \end{block}
    \begin{block}{[*{6}{c}]>{$\footnotesize}l<{$}}
      0 & 0 & 0 & 0 & 1 & 0 & $R_1$\\
      0.36&0& 0 & 0 & 0 & 0 & $R_2$\\
      0 & 0 & 0 & 0 & 1 & 0 & $R_3$\\
      0 & 0 & 0 & 0 & 0 & 1 & $R_4$\\
      0 &9.4& 0 & 0 & 0 & 0 & $R_5$\\
      0 & 0 & 0 & 1 & 0 & 0 & $R_6$\\
      0 & 0 & 1 & 0 & 0 & 0 & $R_7$\\
      1 & 0 & 0 & 0 & 0 & 0 & $R_8$\\ 
    \end{block}
  \end{blockarray}
  \label{graph300} 
\end{equation}
\noindent On the other hand, each corresponding subsystem is weakly reversible and PL-RDK. Let $\widetilde{\mathscr N}_1$ be the network of kinetic complexes given by $\{\widetilde{R}_1: M_5\rightarrow 0.36M_1, \widetilde{R}_2: 0.36M_1 \rightarrow M_5, \widetilde{R}_3: R_3, \widetilde{R}_4:M_6\rightarrow 0.36M_1\}$ and $\widetilde{\mathscr N}_2$ be given by $\{\widetilde{R}_5: 9.4M_2\rightarrow M_1, \widetilde{R}_6: M_4 \rightarrow 9.4M_2, \widetilde{R}_7: R_7, \widetilde{R}_8: R_8\}$ (see \ref{good}).
\begin{equation}\label{good}
    \begin{tikzcd}
M_5 \arrow[dd, "\widetilde{R}_3\;\;"'] \arrow[rd, "\widetilde{R}_1", shift left] &                                                   &                                    & 9.4M_2 \arrow[ld, "\widetilde{R}_5"'] &                                    \\
                                                                                 & 0.36M_1 \arrow[lu, "\widetilde{R}_2", shift left] & M_1 \arrow[rd, "\widetilde{R}_8"'] &                                       & M_4 \arrow[lu, "\widetilde{R}_6"'] \\
M_6 \arrow[ru, "\widetilde{R}_4"']                                               &                                                   &                                    & M_3 \arrow[ru, "\widetilde{R}_7"']    &                                    \\
\widetilde{\mathscr N}_1                                                         &                                                   &                                    & \widetilde{\mathscr N}_2              &                                   
\end{tikzcd}
\end{equation}
\noindent Accordingly, $\widetilde{\mathscr N}_{\mathscr D}=\widetilde{\mathscr N}_1\cup \widetilde{\mathscr N}_2$. Observe that $\widetilde{n}_{\mathscr D}=7$ and $\widetilde{\ell}_{\mathscr D}=2$ implying that $\widetilde{n}_{\mathscr D}-\widetilde{\ell}_{\mathscr D}=5$. Moreover, notice that $\widetilde{s}_{\mathscr D}=5, \widetilde{s}_1=2$, and $\widetilde{s}_2=3$ suggesting that, by Proposition \ref{tela}, the decomposition $\widetilde{\mathscr N}_{\mathscr D}=\widetilde{\mathscr N}_1\cup \widetilde{\mathscr N}_2$ is independent. 
\end{example}

\subsection{Statement and proof of the sufficient condition}\label{main_result}

We can now formulate and derive our main result:

\begin{theorem} \label{theorem13}
Let $(\mathscr N, K)$ be a weakly power law system with a complex balanced \textnormal{PL-RDK} decomposition $\mathscr D: \mathscr N=\mathscr N_1\cup \cdots \cup \mathscr N_k$ with $P_{Z,i}=\widetilde{S}_i$. If $\mathscr D$ is incidence independent and the induced covering $\widetilde{\mathscr D}$ is independent, then $(\mathscr N, K)$ is a weakly reversible \textnormal{CLP} system with $P_Z=\sum \widetilde{S}_i$.  
\end{theorem}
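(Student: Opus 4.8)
The plan is to reduce the statement to the log-parametrization (CLP) structure of the subnetworks and then to use the direct-sum hypothesis to glue their complex balanced equilibrium sets together. First I would record that each subnetwork $(\mathscr N_i, K_i)$ is complex balanced, and since a network admitting a complex balanced equilibrium is necessarily weakly reversible, each $\mathscr N_i$ is weakly reversible (hence cycle terminal) and PL-RDK. Theorem \ref{muller-regensburger} then applies to each subnetwork, so $(\mathscr N_i,K_i)$ is of CLP type with $P_{Z,i}=\widetilde S_i$; explicitly, fixing an equilibrium $x_i^* \in Z_+(\mathscr N_i,K_i)$,
\begin{equation*}
Z_+(\mathscr N_i,K_i)=\{x\in\mathbb R^{\mathscr S}_{>0}\mid \log x-\log x_i^*\in \widetilde S_i^{\perp}\}.
\end{equation*}

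The crux is to show $\bigcap_{i=1}^k Z_+(\mathscr N_i,K_i)\neq\varnothing$; this is exactly where independence of $\widetilde{\mathscr D}$ enters, and the simple counterexample of section \ref{counter_example} shows the intersection can be empty when it fails. By Proposition \ref{tela}, independence of $\widetilde{\mathscr D}$ means the subspaces $\widetilde S_1,\dots,\widetilde S_k$ form a direct sum. I would translate membership in $Z_+(\mathscr N_i,K_i)$ into the orthogonality conditions $\langle \log x-\log x_i^*, w\rangle=0$ for all $w\in\widetilde S_i$, i.e. the functional $w\mapsto\langle\log x,w\rangle$ must agree with $w\mapsto\langle\log x_i^*,w\rangle$ on each $\widetilde S_i$. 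Because the $\widetilde S_i$ are independent, these prescriptions on the individual summands are mutually consistent: every $w\in\sum_i\widetilde S_i$ decomposes \emph{uniquely} as $\sum_i w_i$ with $w_i\in\widetilde S_i$, so $\phi(w):=\sum_i\langle\log x_i^*,w_i\rangle$ is a well-defined linear functional on $\sum_i\widetilde S_i$. Extending $\phi$ to $\mathbb R^{\mathscr S}$ (say, by zero on the orthogonal complement) and representing it by a vector $u$, the point $x:=\exp(u)$ satisfies $\log x-\log x_i^*\in\widetilde S_i^{\perp}$ for every $i$, hence lies in the intersection. Thus $\bigcap_i Z_+(\mathscr N_i,K_i)\neq\varnothing$, and since $\mathscr D$ is incidence independent, Theorem \ref{farinas decomp} gives $Z_+(\mathscr N,K)=\bigcap_i Z_+(\mathscr N_i,K_i)\neq\varnothing$.

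Finally I would identify the parametrizing subspace. Pick any $x^*\in Z_+(\mathscr N,K)$; then $x^*\in Z_+(\mathscr N_i,K_i)$ gives $\log x^*-\log x_i^*\in\widetilde S_i^{\perp}$, and since $\widetilde S_i^{\perp}$ is a subspace one may recenter each subnetwork's description at $x^*$: $Z_+(\mathscr N_i,K_i)=\{x\mid \log x-\log x^*\in\widetilde S_i^{\perp}\}$. Intersecting over $i$ and using the elementary identity $\bigcap_i\widetilde S_i^{\perp}=\big(\sum_i\widetilde S_i\big)^{\perp}$ yields
\begin{equation*}
Z_+(\mathscr N,K)=\Big\{x\in\mathbb R^{\mathscr S}_{>0}\mid \log x-\log x^*\in\big(\textstyle\sum_i\widetilde S_i\big)^{\perp}\Big\},
\end{equation*}
which, together with nonemptiness, says $(\mathscr N,K)$ is a CLP system with $P_Z=\sum_i\widetilde S_i$. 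Weak reversibility of $\mathscr N$ is either assumed or follows at once from $Z_+(\mathscr N,K)\neq\varnothing$. The only delicate point is the nonemptiness argument of the middle paragraph; everything else is bookkeeping with log-parametrized sets and orthogonal complements.
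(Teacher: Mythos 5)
Your proof is correct, and its skeleton matches the paper's: apply Theorem \ref{muller-regensburger} to each complex balanced PL-RDK subnetwork to get $Z_+(\mathscr N_i,K_i)=\{x\in\mathbb R^{\mathscr S}_{>0}\mid \log x-\log x_i^*\in\widetilde S_i^{\perp}\}$, use incidence independence via Theorem \ref{farinas decomp} to write $Z_+(\mathscr N,K)=\bigcap_i Z_+(\mathscr N_i,K_i)$, use the direct-sum hypothesis to show the cosets $\log x_i^*+\widetilde S_i^{\perp}$ share a point, then recenter. Where you genuinely differ is the nonemptiness step: the paper proves only the case $k=2$, via the coset criterion $\bigl(\log x_1+\widetilde S_1^{\perp}\bigr)\cap\bigl(\log x_2+\widetilde S_2^{\perp}\bigr)\neq\varnothing \Leftrightarrow \log x_1-\log x_2\in\widetilde S_1^{\perp}+\widetilde S_2^{\perp}$ combined with $\widetilde S_1^{\perp}+\widetilde S_2^{\perp}=\bigl(\widetilde S_1\cap\widetilde S_2\bigr)^{\perp}=\mathbb R^{\mathscr S}$, and then asserts that the general case ``can be proven inductively'' without carrying out the induction (which would require noting that the union of the first two subnetworks is again a CLP system with parametrizing subspace $\widetilde S_1+\widetilde S_2$ and that the coarsened decomposition remains incidence independent). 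Your construction --- well-definedness of $\phi(w)=\sum_i\langle\log x_i^*,w_i\rangle$ on the direct sum $\sum_i\widetilde S_i$, extension to $\mathbb R^{\mathscr S}$, and exponentiation of a representing vector --- produces a common point of all $k$ cosets in one stroke, so no induction is needed; that is a real gain in completeness, at the cost of a slightly more abstract functional-extension step. A second small advantage: your bookkeeping $\bigcap_i\widetilde S_i^{\perp}=\bigl(\sum_i\widetilde S_i\bigr)^{\perp}$ lands exactly on $P_Z=\sum_i\widetilde S_i$, whereas the paper's final two displays contain a typo, writing $\widetilde S_1^{\perp}+\widetilde S_2^{\perp}$ and its orthogonal complement where $\widetilde S_1^{\perp}\cap\widetilde S_2^{\perp}=\bigl(\widetilde S_1+\widetilde S_2\bigr)^{\perp}$ is meant.
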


\begin{proof}
Each subnetwork of  $(\mathscr N, K)$ in the decomposition is weakly reversible and hence, the network is weakly reversible. For the last two statements, we prove the case when $k=2$ while the general case can be proven inductively.

Since each decomposition subnetwork is a complex balanced PL-RDK system, by Theorem \ref{muller-regensburger}, it is of CLP type, i.e., $Z_+(\mathscr N_i, K_i)\neq \varnothing$ and for $x_i \in Z_+(\mathscr N_i, K_i)$,
\begin{equation*}
    Z_+(\mathscr N_i, K_i)=\{x\in \mathbb R^{\mathscr S}_>| \log x-\log x_i \in \widetilde{S}_i^\perp \}.
\end{equation*}
\noindent Since the decomposition is incidence independent, we have 
\begin{equation*}
  Z_+(\mathscr N, K)=Z_+(\mathscr N_1, K_1)\cap Z_+(\mathscr N_2, K_2)  
\end{equation*}
\noindent Hence, $x^*\in Z_+(\mathscr N, K)$ if and only if
\begin{equation*}
    \log x^*\in \Big (\log x_1 + \widetilde{S}_1^{\perp}\Big )\cap \Big (\log x_2 + \widetilde{S}_2^{\perp}\Big ). 
\end{equation*}

\noindent From properties of cosets, 
\begin{equation*}
    \Big (\log x_1 + \widetilde{S}_1^{\perp}\Big )\cap \Big (\log x_2 + \widetilde{S}_2^{\perp} \Big)\neq \varnothing \Leftrightarrow \log x_1 - \log x_2 \in \widetilde{S}_1^{\perp}+\widetilde{S}_2^{\perp}. 
\end{equation*}
\noindent The independence of the induced decomposition ensures that $\widetilde{S}_1\cap \widetilde{S}_2=\{0\}$. Then, $\widetilde{S}_1^{\perp}+\widetilde{S}_2^{\perp}=\Big(\widetilde{S}_1\cap \widetilde{S}_2 \Big)^{\perp}=\{0\}^{\perp}=\mathbb R^{\mathscr S}$. Thus, 
\begin{equation*}
    \Big (\log x_1 + \widetilde{S}_1^{\perp}\Big )\cap \Big (\log x_2 + \widetilde{S}_2^{\perp} \Big)\neq \varnothing.
\end{equation*}
\noindent Let $\widehat {x}\in \Big (\log x_1 + \widetilde{S}_1^{\perp}\Big )\cap \Big (\log x_2 + \widetilde{S}_2^{\perp} \Big)$ and take $x^*=e^{\widehat {x}}$. We have $\widehat {x}\in \Big (\log x_1 + \widetilde{S}_1^{\perp}\Big )$ and  $\widehat {x}\in \Big (\log x_2 + \widetilde{S}_2^{\perp}\Big )$, hence $x^*\in  Z_+(\mathscr N_1, K_1) \cap  Z_+(\mathscr N_2, K_2)= Z_+(\mathscr N, K)$. To show the log parametrization, note that
\begin{equation*}
    \Big (\log x_1 + \widetilde{S}_1^{\perp}\Big )\cap \Big (\log x_2 + \widetilde{S}_2^{\perp} \Big)= \log x^* + \Big (\widetilde{S}_1^{\perp}+\widetilde{S}_2^{\perp}\Big ) = \log x^* + \Big (\widetilde{S}_1^{\perp}+\widetilde{S}_2^{\perp}\Big )^{\perp}. 
\end{equation*}
\noindent Hence, $Z_+(\mathscr N, K)= \bigg \{x\in \mathbb R^{\mathscr S}_{>0}| \log x - \log x^* \in \Big (\widetilde{S}_1^{\perp}+\widetilde{S}_2^{\perp}\Big )^{\perp}\bigg \}$.
\end{proof}

\begin{corollary}\label{corollary2}
    Let $(\mathscr N, K)$ be a \textnormal{PL-NDK} system satisfy the conditions of Theorem \ref{theorem13} for a decomposition into \textnormal{PL-NDK} subnetworks with zero kinetic deficiency, i.e., $\widetilde{\delta}=0$. Then the induced subnetwork has zero kinetic complex deficiency and $(\mathscr N, K)$ is unconditionally complex balanced, i.e., it is complex balanced for any set of rate constants.
\end{corollary}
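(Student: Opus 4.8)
The plan is to split the corollary into its two assertions---that the induced subnetwork $\widetilde{\mathscr N}_{\mathscr D}$ has zero kinetic complex deficiency, and that $(\mathscr N,K)$ is complex balanced for every choice of rate constants---and to derive both from the single added hypothesis that each (weakly reversible) PL-RDK subnetwork has vanishing kinetic deficiency $\widetilde\delta_i=0$, together with the independence of the induced covering $\widetilde{\mathscr D}$ and the fact that kinetic deficiency depends only on the kinetic order matrix, not on the rate vector.

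First I would establish, for a single subnetwork, the comparison $\delta_{\widetilde N_i}\le \widetilde\delta_i$. Let $\Psi_i$ be the linear map on complex spaces induced by sending each reactant complex of $\mathscr N_i$ to its kinetic complex (a single vector, since the subnetwork is PL-RDK). Checking on basis vectors, for $R:y\to y'$ one has $\widetilde I_{a,i}(\omega_R)=\omega_{\psi_i(y')}-\omega_{\psi_i(y)}=\Psi_i\big(I_{a,i}(\omega_R)\big)$, so $\widetilde I_{a,i}=\Psi_i\circ I_{a,i}$ and hence $\textnormal{Im}\,\widetilde I_{a,i}=\Psi_i(\textnormal{Im}\,I_{a,i})$. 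Applying Proposition \ref{boros} to both $\mathscr N_i$ and $\widetilde{\mathscr N}_i$ gives $\widetilde n_i-\widetilde\ell_i\le n_i-\ell_i$. Subtracting $\widetilde s_i$ from both sides yields $\delta_{\widetilde N_i}=(\widetilde n_i-\widetilde\ell_i)-\widetilde s_i\le (n_i-\ell_i)-\widetilde s_i=\widetilde\delta_i=0$; since the kinetic complex deficiency is nonnegative by definition, $\delta_{\widetilde N_i}=0$ for every $i$.

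Next I would globalize to $\widetilde{\mathscr N}_{\mathscr D}$. Because it is the union of the $\widetilde{\mathscr N}_i$, its incidence map has image $\textnormal{Im}\,\widetilde I_{a,\mathscr D}=\sum_i \textnormal{Im}\,\widetilde I_{a,i}$, so by subadditivity of dimension and Proposition \ref{boros}, $\widetilde n_{\mathscr D}-\widetilde\ell_{\mathscr D}\le \sum_i(\widetilde n_i-\widetilde\ell_i)$. Independence of $\widetilde{\mathscr D}$ (cf. Proposition \ref{tela}) gives $\widetilde s_{\mathscr D}=\sum_i\widetilde s_i$. Combining these, $\delta_{\widetilde N_{\mathscr D}}=(\widetilde n_{\mathscr D}-\widetilde\ell_{\mathscr D})-\widetilde s_{\mathscr D}\le \sum_i\big[(\widetilde n_i-\widetilde\ell_i)-\widetilde s_i\big]=\sum_i\delta_{\widetilde N_i}=0$, and nonnegativity of $\delta_{\widetilde N_{\mathscr D}}$ forces equality, which is the first assertion.

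Finally, for unconditional complex balancing I would invoke the Müller--Regensburger characterization cited after Theorem \ref{muller-regensburger}: a weakly reversible PL-RDK system of zero kinetic deficiency is complex balanced. Since $\widetilde\delta_i$ is independent of the rate constants, each subnetwork is complex balanced for \emph{every} rate vector, and Theorem \ref{muller-regensburger} simultaneously supplies $P_{Z,i}=\widetilde S_i$. The remaining hypotheses of Theorem \ref{theorem13}---incidence independence of $\mathscr D$ and independence of $\widetilde{\mathscr D}$---are purely structural and thus hold for all rate vectors, so for an arbitrary rate vector Theorem \ref{theorem13} applies and gives that $(\mathscr N,K)$ is complex balanced; hence it is unconditionally complex balanced. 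I expect the main obstacle to be the per-subnetwork inequality $\widetilde n_i-\widetilde\ell_i\le n_i-\ell_i$ and the careful use of nonnegativity of the two deficiency notions to squeeze them to zero; the complex balancing conclusion is then a direct application of Theorem \ref{theorem13} once the rate-independence of the hypotheses is noted.
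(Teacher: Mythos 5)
Your proof is correct and takes essentially the same route as the paper's: the collapse inequality $n_i-\ell_i \ge \widetilde{n}_i-\widetilde{\ell}_i$ (which the paper attributes to the surjection $\mathscr C\rightarrow\widetilde{\mathscr C}$ and you derive from the factorization $\widetilde{I}_{a,i}=\Psi_i\circ I_{a,i}$), additivity of the kinetic ranks coming from independence of $\widetilde{\mathscr D}$, a squeeze to zero via nonnegativity of the deficiencies, and then M\"{u}ller--Regensburger's zero-kinetic-deficiency criterion together with Theorem \ref{theorem13} for the unconditional complex balancing. Your write-up is in fact more explicit than the paper's terse proof, notably in justifying the per-subnetwork inequality and in spelling out why all hypotheses of Theorem \ref{theorem13} are rate-constant-independent.
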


\begin{proof}
Since the decomposition $\mathscr D$ is incidence independent, we have $n-\ell=\sum (n_i-\ell_i)$, and due to the surjective map $\mathscr C\rightarrow \widetilde{\mathscr C}$, $n_i-\ell_i\geq \widetilde{n}_{i}-\widetilde{\ell}_{i}$, we obtain the independence of the induced decomposition that the kinetic complex deficiency is less than the sum of the subnetworks' kinetic deficiencies which is zero. The unconditional complex balancing results from the conclusion of the theorem.
\end{proof}

\begin{example} 
Corollary \ref{corollary2} holds for a PL-RDK system satisfying the conditions of the theorem for a weakly reversible PL-TIK decomposition, since Talabis et al. have shown that such system have zero kinetic deficiency \cite{tala2019}. One could in fact extend the concept of a PL-TIK system (i.e., a system with zero kinetic reactant deficiency) to such a PL-NDK systems with the additional requirement that the induced decomposition be $\widetilde{R}$-independent, i.e., the augmented reactant subspaces of the subnetworks form a direct sum.
\end{example}

We remark that Corollary \ref{corollary2} suggests that further results on GMAS such as those of Cracium et al. on the generalized Birch's Theorem \cite{crac2019}, and Boros et al. \cite{boros2020} on linear stability  can also be extended to such PL-NDK systems.

\subsection{Discussion}

The generalization of the main result of Fortun et al. \cite{fort2019} is the following theorem:

\begin{theorem}\label{gen_dzt}
Let $(\mathscr N, K)$ be a power law systems with a weakly reversible \textnormal{PL-RDK} decomposition $\mathscr D:\mathscr N=\mathscr N_1\cup \cdots \cup \mathscr N_k$. If $\mathscr D$ is bi-level independent $($both the network decomposition and the induced subnetwork decomposition are independent$)$ and of \textnormal{PLP} type with $P_{E,i}=\widetilde{S}_i$, then $(\mathscr N, K)$ is a weakly reversible \textnormal{PLP} system with $P_E=\sum \widetilde{S}_i$.
\end{theorem}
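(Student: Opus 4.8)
The plan is to mirror, almost verbatim, the proof of Theorem~\ref{theorem13}, replacing complex balanced equilibria by positive equilibria throughout. The dictionary of substitutions is: incidence independence of $\mathscr D$ becomes ordinary (stoichiometric) independence; Theorem~\ref{farinas decomp} is replaced by Feinberg's Theorem~\ref{feinberg decomp}; and the CLP log-parametrization is replaced by the assumed PLP log-parametrization. A pleasant simplification is that no appeal to Müller--Regensburger is needed here, since PLP type with $P_{E,i}=\widetilde{S}_i$ is a hypothesis rather than something we must derive from complex balancing.

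First I would dispose of weak reversibility exactly as in Theorem~\ref{theorem13}: each $\mathscr N_i$ is weakly reversible, and the union of weakly reversible subnetworks on a common network is weakly reversible. I would then settle the case $k=2$ and finish by induction. Since each $(\mathscr N_i,K_i)$ is PLP with $P_{E,i}=\widetilde{S}_i$, we may write $E_+(\mathscr N_i,K_i)=\{x\in\mathbb R^{\mathscr S}_{>0}\mid \log x-\log x_i\in \widetilde{S}_i^{\perp}\}$ for some $x_i\in E_+(\mathscr N_i,K_i)$. Because the network decomposition $\mathscr D$ is independent, Theorem~\ref{feinberg decomp}(ii) yields $E_+(\mathscr N,K)=E_+(\mathscr N_1,K_1)\cap E_+(\mathscr N_2,K_2)$, so $x^*\in E_+(\mathscr N,K)$ holds if and only if $\log x^*$ lies in the intersection of the two cosets $\log x_1+\widetilde{S}_1^{\perp}$ and $\log x_2+\widetilde{S}_2^{\perp}$.

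The crux is non-emptiness together with the identification of $P_E$. Two cosets meet if and only if $\log x_1-\log x_2\in \widetilde{S}_1^{\perp}+\widetilde{S}_2^{\perp}$, and the independence of the induced decomposition $\widetilde{\mathscr D}$ gives $\widetilde{S}_1\cap\widetilde{S}_2=\{0\}$, whence $\widetilde{S}_1^{\perp}+\widetilde{S}_2^{\perp}=(\widetilde{S}_1\cap\widetilde{S}_2)^{\perp}=\mathbb R^{\mathscr S}$. Thus the intersection is automatically non-empty, which verifies condition (i) of PLP type for the parent system. The intersection of the two cosets is itself a coset of $\widetilde{S}_1^{\perp}\cap\widetilde{S}_2^{\perp}=(\widetilde{S}_1+\widetilde{S}_2)^{\perp}$, so choosing any $x^*$ in it gives $E_+(\mathscr N,K)=\{x\in\mathbb R^{\mathscr S}_{>0}\mid \log x-\log x^*\in (\widetilde{S}_1+\widetilde{S}_2)^{\perp}\}$, i.e.\ PLP type with $P_E=\widetilde{S}_1+\widetilde{S}_2=\sum\widetilde{S}_i$.

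For general $k$, independence of $\widetilde{\mathscr D}$ makes $\sum\widetilde{S}_i$ a direct sum (cf.\ Proposition~\ref{tela}); this directness is exactly what makes the functional values prescribed on the summands $\widetilde{S}_i$ by $\log x_1,\dots,\log x_k$ mutually consistent, producing a common point of all $k$ cosets, and it also gives $\bigcap_i\widetilde{S}_i^{\perp}=(\sum_i\widetilde{S}_i)^{\perp}$ so that $P_E=\sum\widetilde{S}_i$. The main obstacle I anticipate is not any single deduction but the bookkeeping in extending from $k=2$ to general $k$: one must check that the pairwise coset-intersection argument globalizes to a \emph{simultaneous} intersection, and this hinges on the directness of $\bigoplus\widetilde{S}_i$ rather than on mere pairwise transversality $\widetilde{S}_i\cap\widetilde{S}_j=\{0\}$. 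Everything else is the coset algebra already carried out for the CLP case.
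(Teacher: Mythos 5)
Your proposal is correct and takes essentially the same route as the paper: the paper establishes Theorem~\ref{gen_dzt} precisely by remarking that its proof is that of Theorem~\ref{theorem13} with Theorem~\ref{feinberg decomp} substituted for Theorem~\ref{farinas decomp}, which is exactly your substitution dictionary, including your (correct) observation that the M\"{u}ller--Regensburger theorem is no longer needed because PLP type with $P_{E,i}=\widetilde{S}_i$ is now a hypothesis. Incidentally, your coset step, in which the intersection of the two cosets is a coset of $\widetilde{S}_1^{\perp}\cap\widetilde{S}_2^{\perp}=\big(\widetilde{S}_1+\widetilde{S}_2\big)^{\perp}$, states correctly what the printed proof of Theorem~\ref{theorem13} garbles as $\big(\widetilde{S}_1^{\perp}+\widetilde{S}_2^{\perp}\big)^{\perp}$.
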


A comparison with the main result immediately reveals that the two results are fully analogous. In fact, their proofs differ only in the use of Theorem \ref{feinberg decomp} in the PLP case and Theorem \ref{farinas decomp} for CLP systems. On the other hand, there are only a few known examples of PLP PL-RDK systems in contrast to the very general results of M\"{u}ller and Regensburger about CLP PL-RDK systems, so that currently our main result can be considered more impactful.

A further interesting point of discussion concerns research questions raised by our main result. The fact that the induced decomposition plays an essential role suggests that it may be interesting to study the set of common complexes of the induced decomposition and explore relationships between it and the common complexes of the original decomposition. These relationships could also be decomposition class specific, e.g., for PMM decompositions. Also fully unexplored are relationships between decomposition classes and conditions for independence of the induced decomposition.

Another set of research questions emerge from the observation that, so far, the results on the complex balancing of the whole networks are ``binar'' in the sense that Farinas et al. deal with $\mathscr C_{\mathscr D}=\varnothing$ and our main result with $\mathscr C_{\mathscr D}\neq\varnothing$  for any incidence independent decomposition. If one refined the latter set by considering only a particular decomposition class, again say PMM that are incidence independent, it might be possible to identify further classes of kinetics which enable complex balancing of the whole system.

\section{Summary and outlook}

We provide here a summary of this paper as well as some recommendations for further studies.

\begin{enumerate}
    \item  We investigated the incidence independence of decompositions using the set of common complexes of the subnetworks. A framework was created to characterize decomposition classes by their incidence independence properties.
    
    \item Interesting decomposition classes, identified by the sets of their subnetworks' common complexes, were introduced. Subclasses of these decomposition classes, which are incidence independent, were specified. 
    
    \item For future studies, other structural conditions sufficient or necessary for the incidence independence of other subclasses of PBS decompositions can be explored. Other decomposition classes identified by their sets of common complexes can also be considered and investigated.
    
    \item We identified a sufficient condition that guarantees the existence of complex balanced equilibria of some PLK systems with incidence independent and complex balanced decompositions.
    
    \item Through the result identified in 4, we obtained a generalization of the Deficiency zero Theorem for some PLK systems. 
    
    \item It is recommended to explore the capacity of other PLK systems to admit complex balanced equilibria that have incidence independent decompositions.
\end{enumerate}

\vspace{0.5cm}
\noindent \textbf{Acknowledgement}
L. L. Fontanil extends his gratitude to the Department of Science and Technology-Science Education Institute (DOST-SEI), Philippines for supporting him through the Accelerated Science and Technology Human Resource Development Program (ASTHRDP) scholarship grant. This would also not be possible without the significant comments and suggestions of Dr. Noel T. Fortun to the early forms of this paper.

\baselineskip=0.25in

\pagebreak

\appendix

\section*{Appendix}

The following table list the notations used in this paper that were adapted for subnetworks.
\begin{table}[H]
\centering
	\begin{tabular}{|c|c|c|}
		\hline
		\textbf{Notion} & \textbf{Parent Network}: $\mathscr N$  & \textbf{Subnetwork}: $\mathscr N_i$  \\ \hline
		Species, complex, & $\mathscr S, \mathscr C,$ and $\mathscr R$ & $\mathscr S_i, \mathscr C_i,$ and $\mathscr R_i$\\ 
		and reaction sets&&\\ \hline
		Number of species, complexes, & $m, n,$ and $p$ & $m_i,n_i,$ and $p_i$\\ 
		and reactions&&\\ \hline
		Stoichiometric subspace & $S$ & $S_i$ \\ \hline
		Network rank  & $s$ & $s_i$ \\ \hline
		Deficiency & $\delta$ & $\delta_i$ \\ \hline
		Number of linkage classes & $\ell$ & $\ell_i$ \\ \hline
		Incidence map/matrix & $I_a$ & $I_{a,i}$ \\ \hline

		Reaction network of & $\widetilde{\mathscr N}$ & $\widetilde{\mathscr N}_i$\\
		kinetic complexes& &\\\hline
		
		Set of kinetic complexes & $\widetilde{\mathscr C}$ & $\widetilde{\mathscr C}_i$ \\\hline		
		
		Kinetic order subspace & $\widetilde{S}$ & $\widetilde{S}_i$ \\\hline
		
    	Kinetic rank & $\widetilde{s}$ & $\widetilde{s}_i$ \\\hline
    	
    	Kinetic deficiency & $\widetilde{\delta}$ & $\widetilde{\delta}_i$ \\\hline
		
	\end{tabular}
	\caption*{Notations used for the parent network and the subnetworks.}
\end{table}
\end{document}